\DeclarePairedDelimiterX{\norm}[1]{\lVert}{\rVert}{#1}
\newtheorem{assumption}{Assumption}
\newtheorem{theorem}{Theorem}
\newtheorem{corollary}{Corollary}
\newtheorem{lemma}{Lemma}
\begin{document}
\pagenumbering{arabic}
\title{Asymptotic Analysis of RLS-based Digital Precoder with Limited PAPR in Massive MIMO}

\author{Xiuxiu~Ma, Abla~Kammoun,~\IEEEmembership{Member,~IEEE}, Ayed~M.~Alrashdi,~\IEEEmembership{Member,~IEEE}, Tarig~Ballal,~\IEEEmembership{Member,~IEEE},
Tareq~Y.~Al-Naffouri,~\IEEEmembership{Senior Member,~IEEE}
and~Mohamed-Slim~Alouini,~\IEEEmembership{Fellow,~IEEE}
        % <-this % stops a space
\thanks{X. Ma, A. Kammoun, T. Ballal,T. Y.  Al-Naffouri and M. Alouini are with the Division of Computer, Electrical and Mathematical Science \& Engineering, King Abdullah University of Science and Technology (KAUST), Thuwal, KSA. A. M. Alrashdi is with the Department of Electrical Engineering, College of Engineering, University of Ha'il, P.O. Box 2440, Ha'il, 81441, Saudi Arabia. E-mails: (\{xiuxiu.ma; abla.kammoun; tarig.ahmed;tareq.alnaffouri;slim.alouini\}@kaust.edu.sa; am.alrashdi@uoh.edu.sa)}}

% The paper headers
%\markboth{Journal of \LaTeX\ Class Files,~Vol.~14, No.~8, August~2021}%
%{Shell \MakeLowercase{\textit{et al.}}: A Sample Article Using IEEEtran.cls for IEEE Journals}

%\IEEEpubid{0000--0000/00\$00.00~\copyright~2021 IEEE}
% Remember, if you use this you must call \IEEEpubidadjcol in the second
% column for its text to clear the IEEEpubid mark.

\maketitle

\begin{abstract}
This paper focuses on the performance analysis of a class of limited peak-to-average power
ratio (PAPR) precoders for downlink multi-user massive multiple-input multiple-output
(MIMO) systems. Contrary to conventional precoding approaches based on simple linear
precoders such as maximum ratio transmission (MRT) and regularized zero-forcing (RZF),
the precoders in this paper are obtained by solving a convex optimization problem. To be
specific, these precoders are designed so that the power of each precoded symbol entry is restricted, and the PAPR at each antenna is tunable. By using the Convex Gaussian
Min-max Theorem (CGMT), we analytically characterize the empirical distribution of the
precoded vector and the joint empirical distribution between the distortion and the intended
symbol vector. This allows us to study the performance of these precoders in terms of per-antenna power, per-user distortion power, signal-to-noise and distortion ratio (SINAD),
and bit error probability. We show that for this class of precoders, there is an optimal
transmit per-antenna power that maximizes the system performance in terms of SINAD
and bit error probability.
\end{abstract}

\begin{IEEEkeywords}
Precoding, limited PAPR, regularized least squares, Convex Gaussian Min-max Theorem, Gaussian processes, asymptotic performance analysis
\end{IEEEkeywords}

\section{Introduction}
\IEEEPARstart{M}{assive} multiple-input multiple-output (MIMO) systems are recognized among the  key enabling technologies for next-generation communication systems\cite{intro1,intro2, intro3}.  
However, there are still major implementation issues to address for massive MIMO systems to be a reality. First, the number of antennas that can be supported is limited by the transceiver's form factor. In practice, this issue can be handled by moving the operating frequency to mmWave frequency bands~\cite{intro_mm}.  Second, it  requires equipping each antenna with a dedicated radio frequency (RF) chain, which allows the pass-band communication signals to be processed in the base-band~\cite{l4}, thereby leading to a prohibitively high cost and power consumption, calling into question the practicality of such systems. One possible solution is to reduce the number of RF chains by employing hybrid-precoding \cite{r1,intro4}. However, the work in \cite{refpapr} shows that the power consumption of some hybrid precoding architectures still scales with the number of antennas, and proposed as a solution a linear and highly energy efficient reflect-array and transmit-array antennas scheme. Aside from power consumption, it is of interest to control the power of each RF chain allowing for cheap system modules. In light of this observation, the work in \cite{papr} proposed a precoder with fewer RF chains that constrains the power of each signal entry to be below a certain threshold.

The proposed precoder in \cite{papr} reminds the conventional regularized zero-forcing precoder (RZF)~\cite{intro_rzf}, in that it builds on the regularized least squares (RLS) method to minimize a penalty of the residue sum of squares (RSS). The main difference with RZF is that it constrains the absolute value of the precoded vector to not exceed a certain threshold. Referred to as the RLS-based precoder with limited peak-to-average power ratio (PAPR), the precoder in \cite{papr} allows for achieving the two sought-for goals, that is a lower number of RF chains together with a limited PAPR, making it possible to use inexpensive power amplifiers.

\subsection{Contributions and related works}
\noindent{\bf Performance analysis of non-linear precoders.} In this paper, we carry out a rigorous, asymptotic characterizaton of the performance of multi-user downlink transmission when an RLS precoder with limited PAPR is employed. More precisely, we study the asymptotic behavior of the per-antenna power, per-user distortion power, signal-to-noise and distortion ratio (SINAD), and bit error probability when the number of antennas and the number of served users grow large at the same pace. A similar problem has been recently studied in \cite{glse,8100647} where asymptotic expressions for the distortion error and a lower bound on the achievable rate have been derived. Compared to these works, our contribution differs as follows.  On a methodological level, while the works in \cite{glse,8100647} are based on the \emph{non-rigorous} replica method, the main tool in our work is the recently developed Convex Gaussian Min-max Theorem (CGMT) \cite{cgmt} framework. Using the CGMT, our analysis goes beyond the performance metrics studied in \cite{glse,8100647}. Particularly, we assume BPSK modulation while  \cite{glse,8100647} rely on Gaussian signaling. Furthermore, we derive accurate characterization of the joint distribution between the transmitted symbol vector and the distortion error. This characterization allows us to analyze the bit error  probability and a tight approximation of the SINAD. On an operational level, we derive several insights from our analysis by studying  the obtained asymptotic expressions in different regimes describing small numbers of served users  or small/large values of the power control parameter. Particularly, we show that the performance of the RLS precoder with limited PAPR is not always better when the transmit per-antenna power increases, for higher transmit power may also imply higher distortion power. In other words, there is an optimal per-antenna transmit power that maximizes the performance in terms of SINAD and bit error probability. It  can be achieved by properly setting the power control parameter.% we derive accurate distributional characterizations of the transmitted symbol vector and the distortion error. Both of these characterizations allow us to analyze the bit error  probability and a tight approximation of the SINAD. On an operational level, we derive several insights from our analysis by studying  the obtained asymptotic expressions in different regimes describing small/large numbers of served users  or small/large values of the power control parameter. Particularly, we show that the performance of the RLS precoder with limited PAPR is not always better when the transmit per-antenna power increases, for this increasing significantly causes a large distortion power. In other words, there is an optimal per-antenna transmit power that can be achieved by properly setting the power control parameter. 

\noindent{\bf Convex Gaussian Min-max Theorem.} The main ingredient of the proof of our main results is the Convex Gaussian Min-max Theorem (CGMT). This framework has been initiated by Stojnic \cite{stojnic} before being  formally developed  in \cite{cgmt} and \cite{mesti}. It has been applied to characterize the asymptotic behavior of  convex-optimization-based estimators with application to high-dimensional regression problems as well as binary classification problems. In this line, the work in \cite{mesti} applied the CGMT to quantify the performances of several estimators including the Least Absolute Shrinkage and  Selection Operator (LASSO). As far as wireless communications are concerned, the CGMT has been applied to characterize the performance of non-explicit decoders. In this context, under the assumption of real Gaussian channels, the CGMT was used to derive closed-form approximations of the bit error probability of convex-optimization-based decoders termed box relaxation decoders under Binary Phase Shift Keying (BPSK) signaling~\cite{linearC,linearJ} as well as M-ary Pulse Amplitude Modulation (M-PAM) signaling~\cite{nonlinear}. All these works have focused on the design of non-linear algorithms from the decoder perspective. The application of the CGMT for the design of non-linear precoders has not, to the best of our knowledge, been studied, which motivates our work. 
 
\subsection{Paper Organization}
In Section~\ref{sec2}, we introduce the system model and formulate the problem. Then, in Section~\ref{sec3} we state our main results characterizing the statistics of the precoded vector and the distortion, based on which we get the asymptotic behavior of the studied specifications and performance metrics, namely, the transmit per-antenna power, the per-user distortion power, the received SINAD, and the bit error probability.  In Section~\ref{sec4}, numerical simulations are provided to confirm the accuracy of our results before concluding the paper in Section~\ref{sec9}. For the readers' convenience, the proofs are deferred to Section~\ref{sec5}-\ref{sec8}.

\subsection{Notations}
For simplicity, we make use of the following notations onwards. 

Our work is to characterize the behaviors of the RLS-based precoder with limited PAPR, in the large dimensional regime where $m,n\rightarrow\infty$ with a fixed ratio $\delta=m/n$, and to keep the notations short we simply write $n\rightarrow\infty$. We say that an event $\xi$ holds with probability approaching $1$ (\emph{w.p.a.1}) if $\mathrm{lim}_{n\rightarrow\infty}\mathbb{P}[\xi]=1$. If a sequence of random variables $X_n$ converges to a constant $X$, we write $X_n\stackrel{P}{\rightarrow} X$.  
 For any vector $\mathbf{x}$, we use $x_i$ or $[{\bf x}]_i$ to denote its $i$-th element. We also use the notation  $\|\cdot\|$ to denote the Euclidean norm, and the notation $(x)_{+}$ to denote $\max(x,0)$. We write $f(x)$ as $O(g(x))$ if there are constants $M$ such that $|f(x)|\leq Mg(x)$ for all $x$ going to the limiting value in the analysis.

The empirical distribution of a vector ${\bf t}\in\mathbb{R}^{m}$ is given by $\frac{1}{m}\sum_{i=1}^{m}\boldsymbol{\delta}_{t_i}$ where $\boldsymbol{\delta}_{t_i}$ is the Dirac delta mass at $t_i$. 
 For $q\in\mathbb{N}$,  a function $f:\mathbb{R}^q\to \mathbb{R}$ is said to be pseudo-Lipschitz of order $k$ if for all ${\bf x}$ and ${\bf y}$ in $\mathbb{R}^{q}$, $\left|f({\bf x})-f({\bf y})\right|\leq C(1+\|{\bf x}\|^{k-1}+\|{\bf y}\|^{k-1})\|{\bf x}-{\bf y}\|$. The Wasserstein$-k$ distance~\cite{wd} between two measures $\mu$ and $\nu$ is defined as $W_k(\mu,\nu)=\left(\inf_{\rho}\mathbb{E}_{(X,Y)\sim \rho}|X-Y|^{k}\right)^{\frac{1}{k}}$ where the infimum is over all random variables $(X,Y)$ such that $X\sim \mu$ and $Y\sim \nu$ marginally. A sequence of probability distributions $\nu_p$ converges in $W_k$ to $\nu$ if $W_k(\nu_p,\nu)\to 0$ as $p\to\infty.$ An equivalent definition of the convergence in $W_k$ is that, for any $f$ pseudo-Lipschitz of order $k$, $\lim_{p\to \infty} \mathbb{E}[f(X_p)] =\mathbb{E}[f(X)]$ where the expectation is with respect $X_p\sim\nu_p$ and $X\sim \nu$.

\section{System model and problem formulation}
\label{sec2}
Consider a conventional multiuser downlink, slow narrow band transmission between a base station equipped with $n$ transmit antennas and $m$ single antenna user terminals. The precoding scheme is a function that maps the user information symbols, collected in $\mathbf{s}=[s_1,s_2,...,s_m]^{T}=\{{\pm}1\}^m$ and assumed to be drawn uniformly from the BPSK constellation, into an $n$-dimensional signal $\mathbf{x}=[x_1,x_2,...,x_n]^{T}$. Since the signal here is BPSK, we assume a real wireless channel and additive noise.
Letting ${\bf h}_k$ denote the channel vector between the base station and user $k$, the received signal at the $k$-th user writes as
\begin{equation}
{y}_k={\bf h}_k^{T}{\bf x}+z_k,
\label{yk}
\end{equation}
where ${ z}_k$ is the additive noise, assumed to follow a Gaussian distribution with mean zero and variance $\sigma^2$. Stacking the received signals into a vector ${\bf y}=\left[y_1,\cdots,y_m\right]^{T}$ yields 
\begin{equation*}
{\bf y}={\bf H}{\bf x}+{\bf z},
\end{equation*}
where ${\bf z}=\left[z_1,\cdots,z_m\right]^T$ and ${\bf H}=\left[{\bf h}_1,\cdots,{\bf h}_m\right]^{T}$. 

The main goal of precoding is to remove the effect of the channel by minimizing the error between the channel-distorted received vector ${\bf Hx}$ and the information vector ${\bf s}$. To meet this requirement, the non-linear least squares precoder proposed in \cite{8100647} is formulated as the solution to the following regularized least squares problem:
\begin{equation}
\hat{\bf x}=\arg\min_{{\bf x}\in\mathbb{X}^{n}} \|{\bf Hx}-\sqrt{\rho}{\bf s}\|^2+ \lambda\|{\bf x}\|^2, \label{eq:nonlse}
\end{equation}
where ${\rho}$ is a positive power control factor, $\lambda$ is a positive regularization parameter and $\mathbb{X}$ being a predefined set containing admissible values for the precoded signal. 
The formulation in \eqref{eq:nonlse} defines a whole class of precoded vectors for different choices of the set $\mathbb{X}$ and parameter $\lambda$. For example, if  $\mathbb{X}=\mathbb{R}$ and $\lambda>0$, we obtain the RZF precoding given by 
\begin{equation*}
\hat{\bf x}_{\rm RZF}= \sqrt{\rho}\left({\bf H}^{T}{\bf H}+\lambda {\bf I}_n\right)^{-1}{\bf H}^{T}{\bf s},
\end{equation*}
which for $\lambda=0$ reduces to the zero-forcing (ZF) precoding (assuming $m>n$)
\begin{equation*}
\hat{\bf x}_{\rm ZF} = \sqrt{\rho}\left({\bf H}^{T}{\bf H}\right)^{-1} {\bf H}^{T}{\bf s}.
\end{equation*}
When $\mathbb{X}=[-\sqrt{P},\sqrt{P}]$, the precoder in \eqref{eq:nonlse} does not admit
a closed-form expression, characterizing the performance of the precoder is a challenging task. In this paper, we aim to study its performance in the large dimensional regime in which the number of antennas $n$ and the number of users $m$ grow large at the same pace. More formally, in our analysis, we rely on the following assumptions. 
\begin{assumption}
The number of antennas $n$ and the number of users $m$ grow to infinity at a fixed ratio $\delta:=\frac{m}{n}$. 
 \label{ass:regime}
\end{assumption}
\begin{assumption}
The channel matrix ${\bf H}$ has independent and identically distributed Gaussian entries with zero mean  and a variance equal to $\frac{1}{n}$.
\label{ass:statistic} 
\end{assumption}
We are interested in characterizing the performance of the precoder in \eqref{eq:nonlse} with respect to the following  specifications and performance metrics. 

\noindent{\bf Per-antenna power:} We define the per-antenna transmit power as 
\begin{equation}
P_b:=\frac{\|\hat{\bf x}\|^2}{n}.
\label{Pb_def}
\end{equation}

\noindent{\bf Per-user distortion error power:} By expressing the received signal at user $k$ as
\begin{equation}
y_k=\sqrt{\rho}{s_k} + {\bf h}_{k}^{T}\hat{\bf x}- \sqrt{\rho}{s_k} + z_k,
\label{yk1}
\end{equation}
the distortion error observed by user $k$ is represented by the quantity ${\bf h}_k^{T}\hat{\bf x}-\sqrt{\rho}s_k$. We define the per-user distortion error power as 
\begin{equation}
P_d:= \frac{\|{\bf H\hat{\bf x}}-\sqrt{\rho}{\bf s}\|^2}{m}.
\label{thePd}
\end{equation}

\noindent{\bf Average per-user SINAD:}
From \eqref{yk1}, we can easily see that the ${\rm SINAD}$ at user $k$ is given by
\begin{equation*}
{\rm SINAD}_k= \frac{\rho }{\mathbb{E}_{s_k}\left|{\bf h}_k^{T}\hat{\bf x}-\sqrt{\rho}s_k\right|^2 +\sigma^2}.
\end{equation*}
We define the average per user SINAD as:
\begin{equation}
\overline{\rm SINAD}=\frac{1}{m}\sum_{k=1}^m \mathbb{E}\left[{\rm SINAD}_k\right].
\label{eq:SINR_average}
\end{equation}

\noindent{\bf Average per-user SINAD upper bound and lower bound:} 
From Jensen's inequality, we can easily check that the expected value of the SINAD at user $k$ can be upper bounded and lower bounded as
\begin{equation}
\mathbb{E}[{\rm SINAD}_k] \leq \mathbb{E} \left[\frac{\rho}{\left|{\bf h}_k^{T}\hat{\bf x}-\sqrt{\rho}s_k\right|^2+\sigma^2}\right], \label{eq:average}
\end{equation}
\begin{equation}
\mathbb{E}[{\rm SINAD}_{k}]\geq  \frac{\rho}{\mathbb{E}\left[\left|{\bf h}_k^{T}\hat{\bf x}-\sqrt{\rho}s_k\right|^2\right]+\sigma^2}. \label{eq:average_lb}
\end{equation}
From \eqref{eq:average} and \eqref{eq:average_lb}, we can prove that the following quantities define upper and lower bounds for the average per-user SINAD:
\begin{align}
	{\rm SINAD}_{\rm up}&=\frac{1}{m}\sum_{k=1}^m \mathbb{E}\left[\frac{\rho}{|{\bf h}_k^{T}\hat{\bf x}-\sqrt{\rho}s_k|^2+\sigma^2}\right], \label{eq:up}\\
	{\rm SINAD}_{\rm lb}&=\frac{\rho}{\mathbb{E}\left[\frac{1}{m}\sum_{k=1}^m |{\bf h}_k^{T}\hat{\bf x}_k-\sqrt{\rho}s_k|^2\right]+\sigma^2}.\label{eq:lb}
\end{align}
 
In practice, we predict  the SINAD lower bound in \eqref{eq:lb} to provide a tight approximation for the SINAD. Indeed, under Assumption \ref{ass:statistic}, all users experience the same channel statistics, we expect that
\begin{equation*}
\mathbb{E}_{s_k}\left|{\bf h}_k^{T}\hat{\bf x}-\sqrt{\rho}s_k\right|^2
\end{equation*}
is asymptotically close to $ \frac{1}{m} \sum_{k=1}^m \left|{\bf h}_k^{T}\hat{\bf x}-\sqrt{\rho}s_k\right|^2$.  This latter term should  converge to its expectation $ \mathbb{E}\left[\frac{1}{m} \sum_{k=1}^m \left|{\bf h}_k^{T}\hat{\bf x}-\sqrt{\rho}s_k\right|^2\right]$, and hence substituting it by its expectation leads to ${\rm SINAD}_{\rm lb}$. Therefore, in Section \ref{sec4}, we compare the empirical SINAD with our approximation for ${\rm SINAD}_{\rm lb}$ and validate its accuracy.  

\noindent{\bf Bit error rate and bit error probability:}
The bit error rate (BER) is defined as 
\begin{equation}
\mathrm{BER}:=\frac{1}{m}\sum_{i=1}^{m}{\bf 1}_{\{\mathrm{sign}(y_i)\neq s_i\}},
\label{def_ber}
\end{equation}
where ${\bf 1}_{\{\cdot\}}$ denotes the indicator function. Another related quantity of interest is the bit error probability $P_e$, which is defined as the expectation of the BER, i.e.,
\begin{equation}
P_e:=\mathbb{E}[\mathrm{BER}]=\frac{1}{m}\sum_{i=1}^{m}\mathbb{P}[\mathrm{sign}(y_i)\neq s_i].
\label{def_bep}
\end{equation}

%%%%%%%%%%

Although both the RZF and ZF precoding admit a closed-form expression, the PAPR at each RF chain is not restricted. Taking the definition in \cite{papr}, the per-antenna PAPR is defined as follows:
\begin{equation}
{\mathbf{ PAPR}}_i:=\left(\frac{1}{J}\sum_{j=1}^{J}|x_i(j)|^2\right)^{-1}\max_{j\in[J]}|x_i(j)|^2,
\label{papr}
\end{equation}
where ${\bf x}(j)$ is the $j$-th realization of the transmit vector, and $J$ is the number of samples. We claim that
\begin{equation}
\lim_{J\to\infty}\mathbb{E}\left[\frac{1}{J}\sum_{j=1}^{J}|x_i(j)|^2\right]=\lim_{n\to\infty}\mathbb{E}\left[\frac{1}{n}\sum_{i=1}^{n}|x_i(j)|^2\right], 
\label{papr2}
\end{equation}
	which means the average PAPR over antennas equals that over time. The above result, verified by simulations, follows because the channel statistics over all antennas are the same, so there is no reason that one antenna experiences more power than any other one. As illustrated later in the following sections, the limiting of the per-antenna power represented by the right-hand side of \eqref{papr2} can be tuned to any target value by properly choosing the power control parameter. Moreover, by construction, the entries of the precoded vector are in the set  $\mathbb{X}=\left[-\sqrt{P},\sqrt{P}\right]$, where $P$ is carefully chosen so that the peak value of the precoded vector is restricted.  
	As a consequence, the studied precoder achieves a PAPR that is less than $P/\mathcal{E}$ where $\mathcal{E}$ is the target power value.
\eqref{eq:nonlse} will be thus referred to as the RLS-based precoder with limited PAPR and for simplicity termed as limited PAPR-RLS precoder.

\section{Main results} 
\label{sec3}
\subsection{Distributional characterization of the precoded vector and the distortion error}
A major result of our study is the theoretical characterization of the empirical distributions of the elements of the precoded vector $\hat{\bf x}$ and the 
joint empirical distribution of the distortion error vector given by 
\begin{equation*}
\hat{\bf e}=\left({\bf H}\hat{\bf x}-\sqrt{\rho}{\bf s}\right)
\end{equation*}
and  the transmitted symbol ${\bf s}$. As shown next, both of these distributional characterizations will be instrumental in sharply characterizing the convergences of the specifications and performance metrics introduced in the previous section. 

\begin{theorem}[Distributional characterization of the precoded vector]
	Consider the following max-min optimization problem:
	\begin{equation}
		\overline{\phi}=\max_{\beta\geq 0}\min_{\tau\geq 0}  \frac{\tau\beta\delta}{2} +\frac{\rho\beta}{2\tau} -\frac{\beta^2}{4} +Y(\beta,\tau),
		\label{eq:sec}
	\end{equation}
	where
	\begin{equation*}
	Y(\beta,\tau)=\frac{\beta}{\alpha} \left(\mathbb{E}_{H\sim\mathcal{N}(0,1)}\left[(H-\sqrt{P}\alpha)^2{ \bf 1}_{\left\{H\geq \sqrt{P}\alpha\right\}}\right]-\frac{1}{2}\right),
	\end{equation*}
	with $\alpha=1/\tau+2\lambda/\beta$.

    \begin{enumerate}[label=(\roman*)]
		\item The optimization problem in \eqref{eq:sec} admits a unique finite saddle-point $(\beta^\star,\tau^\star)$ if and only if $\lambda >0$ or $\lambda=0$ and $\delta>1$. 
		\item When $\lambda=0$ and $\delta>1$, the saddle point $(\beta^\star,\tau^\star)$ of \eqref{eq:sec} is given by
		\begin{align}
			\tau^\star&=\arg\min_{\tau\geq 0} \left(\frac{\tau\delta}{2}+\frac{\rho}{2\tau}+\tilde{Y}(\tau)\right)_{+}^2,\\
			\beta^\star&= \left({\tau^\star\delta}+\frac{\rho}{\tau^\star}+2\tilde{Y}(\tau^\star)\right)_{+},
		\end{align}
	where
	\begin{equation*}
	\tilde{Y}(\tau):=\tau\left(\mathbb{E}\left[(H-\frac{\sqrt{P}}{\tau})^2{\bf 1}_{\{H\geq \frac{\sqrt{P}}{\tau}\}}\right]-\frac{1}{2}\right).
	\end{equation*}
	Moreover, $\overline{\phi}$ reduces to
	\begin{equation*}
	\overline{\phi}=\left(\frac{\tau\delta}{2}+\frac{\rho}{2\tau}+\tilde{Y}(\tau)\right)_{+}^2.
	\end{equation*}
		\item Let $\hat{\bf x}$ be the solution of \eqref{eq:nonlse}, and consider its associated empirical density function
		\begin{equation*}
		\hat{\mu}(\hat{\bf x}):=\frac{1}{n}\sum_{i=1}^n \boldsymbol{\delta}_{\hat{\bf x}_i}.
		\end{equation*}
		Further, let the function $\theta:\mathbb{R}\to [-\sqrt{P},\sqrt{P}]$,
		\begin{equation*}
		\theta(\gamma):=\left\{
		\begin{array}{ll}
			-\sqrt{P} &  \mathrm{if } \gamma\leq -\sqrt{P}\alpha^{\star}\\
			\frac{\gamma}{\alpha^{\star}} &  \mathrm{if } -\sqrt{P}\alpha^{\star}\leq \gamma \leq \sqrt{P}\alpha^\star\\
			\sqrt{P} & \mathrm{if }  \gamma \geq \sqrt{P}\alpha^\star
		\end{array},
		\right.
		\end{equation*}
		where $\alpha^\star=1/{\tau^\star}+2\lambda/\beta^\star$. Assume either $\lambda>0$ or $\lambda=0$ and $\delta>1$. 
		Then, under Assumption \ref{ass:regime} and Assumption \ref{ass:statistic}, for any pseudo-Lipschitz function $f$ of order $k$, it holds that
		\begin{equation*}
		\frac{1}{n}\sum_{i=1}^n f(\hat{\bf x}_i) \stackrel{P}{\rightarrow} \mathbb{E}_{H}\left[f(\theta(H))
		\right],
		\end{equation*}
		where $H\sim\mathcal{N}(0,1)$. Particularly, the empirical density function $\hat{\mu}(\hat{\bf x})$ converges in Wasserstein$-k$ distance  to $\theta(H)$. 
		\end{enumerate}
	\label{th:previous}
\end{theorem}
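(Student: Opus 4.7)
The plan is to reduce the random non-linear program in \eqref{eq:nonlse} to the deterministic scalar max-min \eqref{eq:sec} via the Convex Gaussian Min-max Theorem (CGMT), and then read off the limiting empirical distribution from the scalar minimizer. First, I would dualize the squared residual through $\|\mathbf{Hx}-\sqrt{\rho}\mathbf{s}\|^2=\max_{\mathbf{u}}\{2\mathbf{u}^T(\mathbf{Hx}-\sqrt{\rho}\mathbf{s})-\|\mathbf{u}\|^2\}$, factor $\mathbf{H}=\mathbf{G}/\sqrt{n}$ with $\mathbf{G}$ having i.i.d.\ standard Gaussian entries, and verify that the $-\|\mathbf{u}\|^2$ penalty together with the box $[-\sqrt{P},\sqrt{P}]^n$ (or the regularizer $\lambda\|\mathbf{x}\|^2$ when $\lambda>0$) confines the optimizers to compact sets with probability approaching one. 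This puts the primary problem in the convex-concave bilinear CGMT template, so its saddle value is captured by the auxiliary optimization (AO) in which $\mathbf{u}^T\mathbf{Gx}$ is replaced by $\|\mathbf{x}\|\mathbf{g}^T\mathbf{u}+\|\mathbf{u}\|\mathbf{h}^T\mathbf{x}$ for independent Gaussian vectors $\mathbf{g}\in\mathbb{R}^m$ and $\mathbf{h}\in\mathbb{R}^n$.

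Second, I would scalarize the AO. Parameterizing $\mathbf{u}=\sqrt{n}\beta\hat{\mathbf{u}}$ with $\|\hat{\mathbf{u}}\|=1$ and maximizing over $\hat{\mathbf{u}}$ produces a term proportional to $\beta\bigl\|\|\mathbf{x}\|\mathbf{g}-\sqrt{\rho n}\mathbf{s}\bigr\|$; the square-root identity $\sqrt{y}=\min_{\tau>0}\{y/(2\tau)+\tau/2\}$ converts this norm into an extra minimization over a scalar $\tau$, rendering the objective coordinate-wise separable in $\mathbf{x}$. Concentration of Gaussian quadratic forms ($\|\mathbf{g}\|^2/m\rightarrow 1$, $\mathbf{g}^T\mathbf{s}/m\rightarrow 0$, $\|\mathbf{s}\|^2/m=1$) lets me replace the random coefficients by their limits, producing a coupling coefficient $\delta(\zeta+\rho)$ with $\zeta=\|\mathbf{x}\|^2/n$. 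Each scalar subproblem is a one-dimensional convex quadratic on $[-\sqrt{P},\sqrt{P}]$ driven by $h_i$, whose explicit minimizer is precisely the clipping function $\theta(\gamma)$ of the statement evaluated at $\gamma=h_i$; averaging the plugged-in scalar optimum over $h_i\sim\mathcal{N}(0,1)$ yields exactly $\tfrac{\tau\beta\delta}{2}+\tfrac{\rho\beta}{2\tau}-\tfrac{\beta^2}{4}+Y(\beta,\tau)$, i.e.\ the deterministic objective of \eqref{eq:sec}. Sion's theorem on the relevant compacta swaps the remaining $\max_\beta\min_\tau$; strict concavity in $\beta$ together with strict convexity in $\tau$ (inherited from $Y$ via the truncated Gaussian expectation) yields a unique saddle point whenever $\lambda>0$, or $\lambda=0$ and $\delta>1$, establishing (i). Solving the first-order condition in $\beta$ explicitly when $\lambda=0$ and $\delta>1$ collapses the problem to the one-dimensional minimization in $\tau$ displayed in (ii).

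For part (iii) I would invoke CGMT's transfer-of-minimizers mechanism. Adding a perturbation $t\cdot\tfrac{1}{n}\sum_i f(\hat{\mathbf{x}}_i)$ to the objective with $f$ pseudo-Lipschitz of order $k$ and repeating the scalarization gives a perturbed scalar max-min whose value is differentiable at $t=0$ with derivative $\mathbb{E}[f(\theta(H))]$, $H\sim\mathcal{N}(0,1)$; since CGMT equates the perturbed primary and AO saddle values asymptotically, $\tfrac{1}{n}\sum_i f(\hat{\mathbf{x}}_i)\stackrel{P}{\rightarrow}\mathbb{E}[f(\theta(H))]$, and by the equivalent characterization recalled in the Notations this is precisely Wasserstein-$k$ convergence. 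The main obstacle I anticipate is making the scalarization uniform in the high-dimensional variables: pointwise concentration of the random quadratic and linear forms is easy, but identifying the random AO minimizer with the deterministic scalar one requires stochastic equicontinuity over a compact neighborhood of $(\zeta,\beta,\tau)$ and a localization argument ruling out minimizers with $\|\mathbf{x}\|^2/n$ far from $\zeta^\star$, so that strong convexity of the deterministic scalar problem in a neighborhood of the optimum transfers back to the primary problem.
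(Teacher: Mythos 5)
Your reduction to the scalar max--min is essentially the paper's: the same dualization $\|{\bf Hx}-\sqrt{\rho}{\bf s}\|^2=\max_{\bf u}\{{\bf u}^T({\bf Hx}-\sqrt{\rho}{\bf s})-\|{\bf u}\|^2/4\}$ (up to rescaling of ${\bf u}$), the same compactification of the dual variable, the same optimization over the direction of ${\bf u}$ followed by the variational identity $\sqrt{y}=\min_{\tau>0}\{\tau/2+y/(2\tau)\}$ to separate the AO across coordinates, and the same identification of the per-coordinate minimizer with the clipping map $\theta$. Your treatment of (i) and (ii) via strict concavity in $\beta$, boundedness of the saddle point, and explicit maximization over $\beta$ when $\lambda=0$ matches Lemma \ref{eq:lemma_asym}.

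The gap is in your mechanism for part (iii). You propose to perturb the objective by $t\cdot\frac{1}{n}\sum_i f(x_i)$ and differentiate the saddle value at $t=0$, asserting that ``CGMT equates the perturbed primary and AO saddle values asymptotically.'' It does not: the inequality $\mathbb{P}[\Phi({\bf G})\geq t]\leq 2\mathbb{P}[\phi({\bf g},{\bf h})\geq t]$ --- the direction you need to upper-bound the perturbed primary value --- requires the function $\psi({\bf w},{\bf u})$ to be convex--concave, and a general pseudo-Lipschitz $f$ of order $k$ destroys convexity of the perturbed problem in ${\bf x}$. You would be left with only the one-sided (lower-bound) inequality, which is not enough to pin down the derivative of the value function. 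The paper avoids this by never perturbing the objective: it defines the deviation set $\mathcal{S}=\{{\bf x}:x_i^2\leq P,\ |F({\bf x})-\kappa|\geq 2\epsilon\}$ and shows $\min_{{\bf x}\in\mathcal{S}}\mathcal{C}_{\lambda,\rho}({\bf x})>\Phi_{\lambda,\rho}({\bf H})$ w.p.a.1, using the first CGMT inequality (valid for the non-convex restricted set) on $\mathcal{S}$ together with the second inequality on the full convex problem. The quantitative engine is the local strong convexity of the AO surrogate $\hat{\mathcal{L}}_{\lambda,\rho}$ around $\overline{\bf x}^{\rm AO}$ (immediate for $\lambda>0$; requiring the argument $a(\overline{\bf x}^{\rm AO})>c$ and Lemma \ref{app:technical} when $\lambda=0$, $\delta>1$), combined with the pseudo-Lipschitz bound $|F({\bf x})-F(\overline{\bf x}^{\rm AO})|\leq \frac{C}{\sqrt{n}}\|{\bf x}-\overline{\bf x}^{\rm AO}\|(1+2P^{(k-1)/2})$, which forces every ${\bf x}\in\mathcal{S}$ to be order-$\sqrt{n}$ far from $\overline{\bf x}^{\rm AO}$ and hence to incur a fixed excess AO cost. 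You correctly anticipate the need for localization and strong convexity at the end of your proposal, but the perturbation-and-differentiation route you chose to exploit them is the step that would fail; you should replace it with the restricted-set comparison.
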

\begin{proof}
	See Section \ref{proof_th:previous}.
\end{proof}

\begin{theorem}[Distributional characterization of the distortion]
Consider the setting of Theorem \ref{th:previous}. 
 Let $f:\mathbb{R}^2\to \mathbb{R}$ be a pseudo-Lipschitz function of order $2$ and let $\alpha^\star=1/\tau^\star+2\lambda/\beta^\star$. Assume either $\lambda>0$ or $\lambda=0$ and $\delta>1$. 
Then, under Assumption \ref{ass:regime} and Assumption \ref{ass:statistic}, the following convergence holds true:
\begin{equation}
\begin{split}
&\frac{1}{m}\sum_{i=1}^m f(\left[\hat{\bf e}\right]_i,s_i) \\\stackrel{P}{\rightarrow} &\mathbb{E}_{H,S}\left[f\left(\frac{\beta^\star}{2}\frac{\sqrt{(\tau^\star)^2\delta-\rho}H-\sqrt{\rho} S}{\tau^\star\delta},S\right)\right],
\end{split}
\end{equation}
where $H$ is a standard normal scalar variable and $S$  a discrete binary variable taking $1$ and $-1$ with equal probabilities. 
 Equivalently, letting 
 \begin{equation*}
 \hat{\mu}({\bf e},{\bf s}):= \frac{1}{m}\sum_{i=1}^m \boldsymbol{\delta}_{([{\bf e}]_i,s_i)},
 \end{equation*}
 then $\hat{\mu}({\bf e},{\bf s})$ converges in Wasserstein$-2$ distance to the distribution of $(\frac{\beta^\star}{2}\frac{\sqrt{(\tau^\star)^2\delta-\rho}H-\sqrt{\rho} S}{\tau^\star\delta},S)$. 
\label{th:distortion}
\end{theorem}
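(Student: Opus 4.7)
The plan is to leverage the Convex Gaussian Min-max Theorem (CGMT) in the same manner as the proof of Theorem~\ref{th:previous}, but now focused on the Lagrange dual variable associated with the squared residual. The starting point is the Fenchel identity $\|v\|^2=\max_{\bf u}\{2{\bf u}^Tv-\|{\bf u}\|^2\}$, which recasts the primary optimization (PO) as the saddle-point problem
\begin{equation*}
\min_{{\bf x}\in[-\sqrt{P},\sqrt{P}]^n}\ \max_{\bf u}\ 2{\bf u}^T({\bf H}{\bf x}-\sqrt{\rho}{\bf s})-\|{\bf u}\|^2+\lambda\|{\bf x}\|^2.
\end{equation*}
At any saddle point the dual optimizer satisfies ${\bf u}^\star={\bf H}\hat{\bf x}-\sqrt{\rho}{\bf s}=\hat{\bf e}$, so the joint empirical distribution of $(\hat{\bf e},{\bf s})$ is exactly that of $({\bf u}^\star,{\bf s})$. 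I would then apply CGMT to pass to the auxiliary optimization (AO), obtained by substituting ${\bf u}^T{\bf H}{\bf x}$ with the Gordon surrogate $(\|{\bf x}\|{\bf g}^T{\bf u}+\|{\bf u}\|{\bf h}^T{\bf x})/\sqrt{n}$, where ${\bf g}\sim\mathcal{N}({\bf 0},I_m)$ and ${\bf h}\sim\mathcal{N}({\bf 0},I_n)$ are independent of each other.

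Next I would solve the AO explicitly at the dual level. Optimizing over the direction of ${\bf u}$ yields $\hat{\bf u}\propto \tilde q\,{\bf g}-\sqrt{\rho}{\bf s}$ with $\tilde q=\|{\bf x}\|/\sqrt{n}$, and optimizing over its magnitude together with the same sqrt and squared-plus scalarizations used to derive \eqref{eq:sec} pins down $\|{\bf u}^\star_{\rm AO}\|/\sqrt{n}\to\beta^\star/2$, while the optimal $\tilde q^\star$ satisfies $(\tilde q^\star)^2=\delta(\tau^\star)^2-\rho$ (from tightness of the sqrt scalarization at the saddle). Combining the concentration $\|\tilde q^\star{\bf g}-\sqrt{\rho}{\bf s}\|/\sqrt{m}\to\sqrt{(\tilde q^\star)^2+\rho}=\sqrt{\delta}\,\tau^\star$ with the explicit direction gives
\begin{equation*}
[{\bf u}^\star_{\rm AO}]_i\approx \frac{\beta^\star}{2\tau^\star\delta}\bigl(\sqrt{(\tau^\star)^2\delta-\rho}\,g_i-\sqrt{\rho}\,s_i\bigr),
\end{equation*}
and since the pairs $(g_i,s_i)$ are i.i.d., a standard pseudo-Lipschitz law of large numbers shows that the empirical joint measure of $([{\bf u}^\star_{\rm AO}]_i,s_i)$ converges in Wasserstein-$2$ distance to the law stated in the theorem.

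Finally, I would transfer this characterization from the AO to the PO by a perturbation argument: adding $\epsilon\sum_{i=1}^{m} f(u_i,s_i)$ to the min-max objective, applying CGMT to the perturbed pair of problems, and differentiating in $\epsilon$ at $0$ via an envelope-theorem argument, one concludes that $\frac{1}{m}\sum_i f([\hat{\bf e}]_i,s_i)=\frac{1}{m}\sum_i f([{\bf u}^\star_{\rm PO}]_i,s_i)$ concentrates to $\mathbb{E}[f(E,S)]$ with $(E,S)$ the random pair of the theorem. The main obstacle will be making this perturbation rigorous: one must ensure that the perturbed saddle remains unique and bounded for small $\epsilon$, that the CGMT's convexity--concavity hypotheses survive the perturbation (handled by decomposing a general pseudo-Lipschitz $f$ into monotone Lipschitz components), and that the convergence of the perturbed saddle values is uniform in $\epsilon$ near $0$; such issues are standard and can be addressed using the techniques already developed in \cite{cgmt,mesti,lasso} and in the proof of Theorem~\ref{th:previous}.
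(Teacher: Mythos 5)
Your AO-level analysis is essentially the paper's: the dual optimizer of the AO aligns with $\tilde q\,{\bf g}-\sqrt{\rho}{\bf s}$, its magnitude and $\tilde q$ are pinned down by the scalarization behind \eqref{eq:sec}, and the resulting per-entry limit $\frac{\beta^\star}{2\tau^\star\delta}\bigl(\sqrt{(\tau^\star)^2\delta-\rho}\,g_i-\sqrt{\rho}s_i\bigr)$ is exactly the paper's $\overline{\bf u}^{\rm AO}$ (up to your choice of normalization in the Fenchel identity), and Lemma~\ref{lem:technical_2} carries out precisely these computations. The divergence is in the transfer from AO to PO, and that is where your argument has a genuine gap.

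Your proposed transfer adds $\epsilon\sum_i f(u_i,s_i)$ to the min-max objective and differentiates the optimal value at $\epsilon=0$. The problem is that the CGMT's second inequality \eqref{eq:ineq2} — the one you need for two-sided control of the perturbed value at $\epsilon\neq 0$ — requires $\psi({\bf w},{\bf u})$ to be concave in ${\bf u}$, and $-\|{\bf u}\|^2/4+\epsilon\sum_i f(u_i,s_i)$ is not concave for a general pseudo-Lipschitz $f$ of order $2$, no matter how small $\epsilon$ is (pseudo-Lipschitz controls growth, not curvature). Your suggested repair, decomposing $f$ into monotone Lipschitz components, does not restore concavity either: monotone Lipschitz functions are not concave, and concavity is not preserved under the decomposition/recombination you would need. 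There is also a secondary issue with the envelope step: Danskin-type differentiation of $\epsilon\mapsto\Phi_\epsilon$ and the interchange of $\frac{d}{d\epsilon}$ with $n\to\infty$ can be salvaged by convexity of the value in $\epsilon$, but only after the concavity obstruction above is resolved, which it is not.

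The paper avoids the perturbation entirely. It fixes the pseudo-Lipschitz $f$, defines the \emph{exclusion set} $\tilde{\mathcal{S}}$ of dual vectors ${\bf u}$ whose empirical functional $\tilde F({\bf u},{\bf s})$ deviates from the target expectation by at least $2\epsilon$, and shows that $\max_{{\bf u}\in\tilde{\mathcal{S}}}\mathcal{V}_{\lambda,\rho}({\bf u})<\tilde\Phi_{\lambda,\rho}({\bf H})$ w.p.a.1, whence $\hat{\bf u}^{\rm PO}\notin\tilde{\mathcal{S}}$. The key point that makes this work is that the restricted problem only needs the \emph{one-sided} CGMT inequality \eqref{eq:2}, which holds with no convexity assumptions on $\tilde{\mathcal{S}}$; the unrestricted problem uses \eqref{eq:4} on the original convex-concave objective; and the strict gap between the two AO values comes from the $\frac{1}{2}$-strong concavity of $\tilde{\mathcal{F}}_{\lambda,\rho}$ together with the pseudo-Lipschitz bound $|\tilde F({\bf u},{\bf s})-\tilde F(\overline{\bf u}^{\rm AO},{\bf s})|\le K\|{\bf u}-\overline{\bf u}^{\rm AO}\|$. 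If you want to complete your proof, you should replace the perturbation-and-differentiation step with this exclusion-set argument (or restrict the perturbation technique to functionals $f$ for which concavity in ${\bf u}$ demonstrably survives, which does not cover the generality of the theorem).
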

\begin{proof}
	See Section \ref{proof_th:distortion}.
\end{proof}

\subsection{Characterizations of specifications and performance metrics}
As an application of Theorem \ref{th:previous} and Theorem \ref{th:distortion}, we derive closed-form approximations for the specifications and performance metrics defined in Section \ref{sec2}:
\begin{corollary}[Convergence of the average SINAD upper and lower bounds]
	Consider Assumption \ref{ass:regime} and \ref{ass:statistic}, then $\overline{\rm SINAD}_{\rm up}$ converges to: 
	\begin{equation}
	\begin{split}
	&\overline{\rm SINAD}_{\rm up}\to\overline{\rm SINAD}_{\rm up}^\star\\:=& \mathbb{E}_{H,S}\left[\frac{\rho}{\frac{(\beta^\star)^2}{4}\frac{\left((\sqrt{(\tau^\star)^2\delta-\rho})H-\sqrt{\rho}S\right)^2}{(\tau^\star\delta)^2}+\sigma^2}\right] ,
	\label{eq:upper_b}
	\end{split}
	\end{equation}
	and $\overline{\rm SINAD}_{lb}$ converges to:
	\begin{equation}
	\overline{\rm SINAD}_{\rm lb}\stackrel{P}{\rightarrow} \overline{\rm SINAD}_{\rm lb}^\star:=\frac{\rho}{\frac{(\beta^\star)^2}{4\delta}+\sigma^2}. \label{eq:lower_bound_l}
	\end{equation}
\label{cor:lbub}
\end{corollary}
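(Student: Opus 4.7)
The plan is to view each bound as an empirical average $\frac{1}{m}\sum_{k=1}^m f(\hat{e}_k, s_k)$ for a suitably chosen pseudo-Lipschitz function $f$ of order $2$, apply Theorem \ref{th:distortion} to obtain convergence in probability, and then pass the outer expectation through the limit by a dominated-convergence argument. Writing $E:=\frac{\beta^\star}{2}\frac{\sqrt{(\tau^\star)^2\delta-\rho}\,H-\sqrt{\rho}\,S}{\tau^\star\delta}$ for the limiting distortion variable supplied by Theorem \ref{th:distortion}, the two bounds correspond to $f(e,s)=e^2$ and $f(e,s)=\rho/(e^2+\sigma^2)$, respectively.

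For \eqref{eq:lower_bound_l}, the quadratic $f(e,s)=e^2$ is pseudo-Lipschitz of order $2$ because $|e^2-e'^2|\leq (|e|+|e'|)\,|e-e'|$, so Theorem \ref{th:distortion} gives $\frac{1}{m}\|\hat{\bf e}\|^2\stackrel{P}{\to}\mathbb{E}[E^2]$. Expanding the square and exploiting independence of $H$ and $S\in\{\pm 1\}$ (so the $HS$ cross term vanishes while $\mathbb{E}[H^2]=S^2=1$) collapses $\mathbb{E}[E^2]$ to $(\beta^\star)^2/(4\delta)$. To replace $\frac{1}{m}\|\hat{\bf e}\|^2$ by its expectation in the definition of $\overline{\rm SINR}_{\rm lb}$, I would invoke the deterministic bound $\frac{1}{m}\|\hat{\bf e}\|^2\leq \rho$, which follows by comparing the objective of \eqref{eq:nonlse} at $\hat{\bf x}$ with its value at the feasible point ${\bf x}={\bf 0}$ (allowed since $0\in[-\sqrt{P},\sqrt{P}]$). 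Convergence in probability combined with almost-sure uniform boundedness yields convergence of expectations, and forming the ratio delivers \eqref{eq:lower_bound_l}.

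For \eqref{eq:upper_b}, take $f(e,s)=\rho/(e^2+\sigma^2)$. This function is bounded by $\rho/\sigma^2$, and its partial derivative in $e$ is uniformly bounded in $e$, so $f$ is globally Lipschitz and hence pseudo-Lipschitz of order $2$. Theorem \ref{th:distortion} then yields
\begin{equation*}
\frac{1}{m}\sum_{k=1}^m \frac{\rho}{|{\bf h}_k^T\hat{\bf x}-\sqrt{\rho}s_k|^2+\sigma^2}\;\stackrel{P}{\to}\;\mathbb{E}\!\left[\frac{\rho}{E^2+\sigma^2}\right],
\end{equation*}
and the uniform bound $\rho/\sigma^2$ once again permits interchanging the limit with the outer expectation in the definition of $\overline{\rm SINR}_{\rm up}$, producing \eqref{eq:upper_b}. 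The main, and essentially the only, non-routine point is this limit-expectation swap: Theorem \ref{th:distortion} provides convergence only in probability, so the argument hinges on the almost-sure boundedness of the two integrands (via optimality of $\hat{\bf x}$ for the quadratic case, and by inspection for the rational case). Once this is justified, the identification of the two limits is a short algebraic exercise with the scalar pair $(H,S)$.
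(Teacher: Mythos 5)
Your proposal is correct and follows essentially the same route as the paper: apply Theorem \ref{th:distortion} with $f(e,s)=e^2$ and $f(e,s)=\rho/(e^2+\sigma^2)$, then justify the interchange of limit and expectation via the deterministic bound $\frac{1}{m}\|\hat{\bf e}\|^2\leq\rho$ (obtained by comparing the objective at $\hat{\bf x}$ with the feasible point $\mathbf{0}$) and the uniform bound $\rho/\sigma^2$, respectively. The only addition you make is spelling out the computation $\mathbb{E}[E^2]=(\beta^\star)^2/(4\delta)$, which the paper leaves implicit.
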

\begin{proof}
	Function $x\mapsto \frac{\rho}{x^2+\sigma^2}$ is a Lipschitz function. Applying Theorem \ref{th:distortion} yields:
	\begin{equation}
	\begin{split}
	&\frac{1}{m}\sum_{k=1}^m\frac{\rho}{|e_k|^2+\sigma^2}\\\stackrel{P}{\rightarrow} &\mathbb{E}_{H,S}\left[\frac{\rho}{\frac{(\beta^\star)^2}{4}\frac{\left((\sqrt{(\tau^\star)^2\delta-\rho})H-\sqrt{\rho}S\right)^2}{(\tau^\star\delta)^2}+\sigma^2}\right].
	\end{split}
	\end{equation}
	Finally, since $x\mapsto  \frac{\rho}{x^2+\sigma^2}$ is bounded by $\frac{\rho}{\sigma^2}$, the convergence in \eqref{eq:upper_b} follows from the dominated convergence theorem. 
	To prove \eqref{eq:lower_bound_l}, we use the fact that $x\mapsto x^2$ is a pseudo-Lipschitz function of order $2$. Hence, we may again use Theorem \ref{th:distortion} to obtain 
	\begin{equation*}
	\frac{1}{m}\sum_{k=1}^{m}|e_k|^2 \stackrel{P}{\rightarrow} \frac{(\beta^\star)^2}{4\delta}.
	\end{equation*}
To prove the convergence in \eqref{eq:lower_bound_l}, it suffices to check that $\frac{1}{m}\sum_{k=1}^{m}|e_k|^2 $ is bounded. Indeed, if this is true then one can in a similar way as before use the dominated convergence theorem to prove the convergence of the expectation of $\frac{1}{m}\sum_{k=1}^{m}|e_k|^2$ to  its probability limit. Using the fact that $\hat{\bf x}$ minimizes the cost in \eqref{eq:nonlse}, the following inequality holds:
	\begin{equation*}
	\frac{1}{m}\|{\bf H}\hat{\bf x}-\sqrt{\rho}{\bf s}\|^2+\frac{\lambda}{m}\|\hat{\bf x}\|^2 \leq \frac{1}{m}\|\sqrt{\rho}{\bf s}\|^2
\end{equation*}
	and hence, 
	\begin{equation*}
	\frac{1}{m}\|{\bf H}\hat{\bf x}-\sqrt{\rho}{\bf s}\|^2\leq \rho.
	\end{equation*}
Recalling that $\frac{1}{m}\sum_{k=1}^m|e_k|^2= 	\frac{1}{m}\|{\bf H}\hat{\bf x}-\sqrt{\rho}{\bf s}\|^2$ we establish that $\frac{1}{m}\sum_{k=1}^m|e_k|^2$ is bounded. 
\end{proof}
\begin{corollary}[Convergence of the per-antenna power and the per-user distortion error power]
	Under the setting of Theorem \ref{th:previous}, the per-antenna and the per-user distortion error power satisfy the following convergences:
	\begin{equation}
	P_b\stackrel{P}{\rightarrow} P_b^\star:= \delta(\tau^\star)^2-\rho \label{eq:rev},
	\end{equation}
	and 
	\begin{equation}
	P_d \stackrel{P}{\rightarrow} P_d^\star:= \frac{(\beta^\star)^2}{4\delta}.
	\end{equation}
	\label{cor:dist_pb}
\end{corollary}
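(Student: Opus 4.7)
The plan is a direct application of the distributional characterizations of Theorem~\ref{th:previous} and Theorem~\ref{th:distortion} to the pseudo-Lipschitz test function $x\mapsto x^{2}$ (of order $2$), followed by algebraic simplification. Indeed, $\frac{1}{n}\|\hat{\bf x}\|^2$ and $\frac{1}{m}\|\hat{\bf e}\|^2$ are precisely the empirical averages of that quadratic evaluated at the entries of $\hat{\bf x}$ and of $(\hat{\bf e},{\bf s})$, so Theorems~\ref{th:previous}(iii) and~\ref{th:distortion} immediately yield convergence in probability to deterministic expectations; the remaining work is to reduce those expectations to the compact forms claimed in the statement.

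For $P_d$, Theorem~\ref{th:distortion} applied with $f(e,s)=e^{2}$ gives
\[
P_d\;\stackrel{P}{\rightarrow}\; \mathbb{E}\left[\left(\frac{\beta^\star}{2\tau^\star\delta}\right)^{2}\left(\sqrt{(\tau^\star)^2\delta-\rho}\,H-\sqrt{\rho}\,S\right)^{2}\right].
\]
Expanding the square and using $\mathbb{E}[H^2]=\mathbb{E}[S^2]=1$ and $\mathbb{E}[HS]=0$ collapses the inner factor to $(\tau^\star)^2\delta$, leaving $\frac{(\beta^\star)^2}{4\delta}$. This is precisely the simplification already carried out inside the proof of Corollary~\ref{cor:lbub} en route to $\overline{\rm SINR}^{\star}_{\rm lb}$, so no new computation is needed here.

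For $P_b$, Theorem~\ref{th:previous}(iii) applied with $f(x)=x^{2}$ yields $P_b \stackrel{P}{\rightarrow} \mathbb{E}[\theta(H)^2]$. The delicate step is identifying this expectation with $\delta(\tau^\star)^2-\rho$: the identity is not generic in $(\beta,\tau)$ but encodes the stationarity of the saddle-point objective~(\ref{eq:sec}) in $\tau$. The plan is therefore to differentiate $\overline{\phi}$ in $\tau$, using $\alpha=1/\tau+2\lambda/\beta$ and $\partial\alpha/\partial\tau = -1/\tau^{2}$, and evaluate at $(\beta^\star,\tau^\star)$ to obtain
\[
\delta(\tau^\star)^2 - \rho \;=\; \frac{2}{\alpha^\star}\,G'(\alpha^\star) - \frac{2}{(\alpha^\star)^2}\,G(\alpha^\star),
\]
where $G(\alpha):= \mathbb{E}\bigl[(H-\sqrt{P}\alpha)^{2}\,{\bf 1}_{\{H\geq \sqrt{P}\alpha\}}\bigr] - \tfrac{1}{2}$. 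Writing both sides in terms of $t^\star := \sqrt{P}\alpha^\star$, $\phi(t^\star)$, and $1-\Phi(t^\star)$ via the elementary Gaussian identities $\int_{0}^{t} h^{2}\phi(h)\,dh = \Phi(t)-\tfrac{1}{2} - t\phi(t)$ and $\int_{t}^{\infty}(h-t)\phi(h)\,dh = \phi(t)-t(1-\Phi(t))$ then verifies the equality termwise. The Gaussian integration-by-parts step is routine; the main conceptual hurdle is recognizing that the first-order condition in $\tau$ is exactly the bridge between the empirical-distribution limit $\mathbb{E}[\theta(H)^2]$ and the closed form $\delta(\tau^\star)^{2}-\rho$.
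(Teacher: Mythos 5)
Your proposal is correct and follows essentially the same route as the paper: $P_d$ is obtained by applying Theorem~\ref{th:distortion} to $f(e,s)=e^2$ exactly as in the proof of Corollary~\ref{cor:lbub}, and $P_b$ by applying Theorem~\ref{th:previous} to $f(x)=x^2$ and then invoking the first-order stationarity condition in $\tau$ (the paper's equation~\eqref{eq:tau_r} in Lemma~\ref{eq:lemma_asym}) to identify $\mathbb{E}[\theta(H)^2]$ with $\delta(\tau^\star)^2-\rho$. Your explicit differentiation of $Y(\beta,\tau)$ and the Gaussian integration-by-parts verification simply spell out what the paper leaves as ``follows directly by writing the first order optimality condition,'' and the identity you state checks out.
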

\begin{proof}
Note that $P_d=\frac{1}{m}\sum_{k=1}^m|e_k|^2$. The convergence of $P_d$ to its probability limit has been established in the proof of Corollary \ref{cor:lbub}. The convergence of $P_b$ to the limit in \eqref{eq:rev} follows directly by applying Theorem \ref{th:previous} along with the first-order optimality condition for the variable $\tau$.
\end{proof}
Corollary \ref{cor:dist_pb} allows us to provide an interpretation of the parameters $\tau^\star$ and $\beta^\star$. From the convergences stated in this Corollary, it appears that $\tau^\star$ is related to how much power is devoted to the precoded vector $\hat{\bf x}$, while $\beta^\star$ allows for quantifying the amount of distortion experienced by the PAPR precoder. The control factor $\rho$ can always be adjusted to fix the power $P_b^\star$ to a given feasible value. However, this would lead to varying the coefficient $\beta^\star$ which determines the distortion level. More details on the role of the control factor $\rho$ on the performance will be given in this section and in section \ref{sec4}. 
\begin{corollary}[Convergence of the bit error probability]
Under the setting of Theorem \ref{th:previous},	the bit error probability defined in \eqref{def_bep} converges to
\begin{equation*}
P_e{\to} P_e^\star:=Q\left(\frac{\sqrt{\rho}-\frac{\beta^\star \sqrt{\rho}}{2\tau^\star \delta}}{\sqrt{\frac{(\beta^\star)^2}{4}\frac{(\tau^\star)^2\delta-\rho}{(\tau^\star)^2\delta^2}+\sigma^2}}\right).
\end{equation*}
\end{corollary}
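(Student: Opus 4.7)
The plan is to reduce the bit error probability to a quantity of the form $\frac{1}{m}\sum_k \mathbb{E}[f(\hat{e}_k,s_k)]$ for a nice function $f$ so that Theorem \ref{th:distortion} can be applied. The key observation is that the noise $\{z_k\}$ is independent of the precoded vector $\hat{\mathbf{x}}$ (the optimization in \eqref{eq:nonlse} does not involve $z_k$), and hence independent of the distortion entries $\hat{e}_k=\mathbf{h}_k^T\hat{\mathbf{x}}-\sqrt{\rho}s_k$. The error event $\{\mathrm{sign}(y_k)\neq s_k\}$ is equivalent to $\{s_k z_k<-\sqrt{\rho}-s_k\hat{e}_k\}$, and since $s_k\in\{\pm 1\}$ while $z_k\sim\mathcal{N}(0,\sigma^2)$ is symmetric, conditioning on $(\hat{e}_k,s_k)$ gives
\begin{equation*}
\mathbb{P}[\mathrm{sign}(y_k)\neq s_k\mid \hat{e}_k,s_k]=Q\!\left(\frac{\sqrt{\rho}+s_k\hat{e}_k}{\sigma}\right).
\end{equation*}
Taking expectations yields $P_e=\mathbb{E}\!\left[\frac{1}{m}\sum_{k=1}^m Q\!\left(\frac{\sqrt{\rho}+s_k\hat{e}_k}{\sigma}\right)\right]$.

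Next, I would apply Theorem \ref{th:distortion} to the test function $f(e,s)=Q((\sqrt{\rho}+se)/\sigma)$. Since $Q$ is $\frac{1}{\sqrt{2\pi}}$-Lipschitz and bounded, $f$ is Lipschitz in $(e,s)$ and thus pseudo-Lipschitz of order $2$. Theorem \ref{th:distortion} then gives
\begin{equation*}
\frac{1}{m}\sum_{k=1}^m Q\!\left(\frac{\sqrt{\rho}+s_k\hat{e}_k}{\sigma}\right)\stackrel{P}{\to}\mathbb{E}_{H,S}\!\left[Q\!\left(\frac{\sqrt{\rho}+S\cdot\frac{\beta^\star}{2}\frac{\sqrt{(\tau^\star)^2\delta-\rho}\,H-\sqrt{\rho}S}{\tau^\star\delta}}{\sigma}\right)\right].
\end{equation*}
Because $0\le Q\le 1$, the sequence of empirical averages is uniformly bounded, and the bounded convergence theorem lifts the convergence in probability to convergence of the expectations, so $P_e$ tends to the same deterministic limit.

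The final step is algebraic simplification. Using $S^2=1$, the argument of $Q$ rewrites as $\sqrt{\rho}-\frac{\beta^\star\sqrt{\rho}}{2\tau^\star\delta}+\frac{\beta^\star\sqrt{(\tau^\star)^2\delta-\rho}}{2\tau^\star\delta}SH$ divided by $\sigma$. Since $H\sim\mathcal{N}(0,1)$ is independent of $S\in\{\pm 1\}$, the product $G:=SH$ is itself standard normal. Setting $a=\sqrt{\rho}-\frac{\beta^\star\sqrt{\rho}}{2\tau^\star\delta}$ and $b=\frac{\beta^\star\sqrt{(\tau^\star)^2\delta-\rho}}{2\tau^\star\delta}$, one then invokes the standard Gaussian identity $\mathbb{E}_G[Q((a+bG)/\sigma)]=Q(a/\sqrt{b^2+\sigma^2})$, which is verified by rewriting $\{Z>(a+bG)/\sigma\}$ with an independent $Z\sim\mathcal{N}(0,1)$ and normalizing the Gaussian $Z-bG/\sigma$. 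Substituting back reproduces the claimed expression for $P_e^\star$.

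I do not anticipate a genuine obstacle here; the only subtle point is that the indicator in the definition of $\mathrm{BER}$ is not Lipschitz, which forces the preliminary step of integrating out the noise to produce the smooth $Q$-function before Theorem \ref{th:distortion} is applicable. Everything else is the Gaussian identity and a bounded-convergence argument.
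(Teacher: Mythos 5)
Your proof is correct, and it reaches the paper's limit by a genuinely different (and in one respect cleaner) route. The paper keeps the error event as a probability of the form $\mathbb{P}[\hat{e}_k+z_k\leq -\sqrt{\rho}\,|\,s_k=1]$ (and its mirror for $s_k=-1$) and passes to the limit by combining Theorem~\ref{th:distortion} with the Portemanteau lemma, i.e.\ it uses the weak convergence implied by the Wasserstein-$2$ convergence of $\hat\mu({\bf e},{\bf s})$ and then evaluates the resulting Gaussian probability; this route implicitly requires that the boundary of the relevant half-spaces be a null set under the limit law (true here because of the $\sigma Z$ component, but not spelled out). You instead integrate out the independent noise $z_k$ first, which converts the non-Lipschitz indicator into the bounded smooth test function $f(e,s)=Q((\sqrt{\rho}+se)/\sigma)$, apply the pseudo-Lipschitz convergence of Theorem~\ref{th:distortion} directly, and lift to the expectation by bounded convergence; the closing Gaussian identity $\mathbb{E}_G[Q((a+bG)/\sigma)]=Q(a/\sqrt{b^2+\sigma^2})$ is the same computation the paper performs inside \eqref{eq:ll}--\eqref{eq:ll2}. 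The only imprecision worth flagging is your claim that $f$ is Lipschitz in $(e,s)$: because of the product $se$ it is not globally Lipschitz on $\mathbb{R}^2$ (e.g.\ $\partial f/\partial s$ at $s=0$ grows linearly in $e$), but it is pseudo-Lipschitz of order $2$, which is exactly the hypothesis Theorem~\ref{th:distortion} needs, so the argument goes through unchanged.
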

\begin{proof}
 The symbol $s_k$ is decoded erroneously if 
\begin{equation*}
{\bf h}_k^{T}\hat{\bf x}-\sqrt{\rho} +z_k \leq -\sqrt{\rho}
\end{equation*}
when $s_k=1$ and
\begin{equation*}
{\bf h}_k^{T}\hat{\bf x}+\sqrt{\rho} +z_k \geq \sqrt{\rho}
\end{equation*}
when $s_k=-1$. So 
\begin{equation}
\begin{split}
	P_e&= \frac{1}{2}\mathbb{P}\left[{\bf h}_k^{T}\hat{\bf x}-\sqrt{\rho}s_k+z_k\leq -\sqrt{\rho}\ |\ s_k=1\right]\\
	&+ \frac{1}{2}\mathbb{P}\left[{\bf h}_k^{T}\hat{\bf x}-\sqrt{\rho}s_k+z_k\geq \sqrt{\rho}|s_k=-1\right].
	\end{split}
	\end{equation}
Using Theorem \ref{th:distortion} along with the Portemanteau Lemma \cite{van-der-vaart}, we prove that
\begin{align}
&P_e\to \nonumber\\&\frac{1}{2}\mathbb{P}\left[\frac{\beta^\star}{2}\frac{\sqrt{(\tau^\star)^2\delta-\rho}H-\sqrt{\rho}S}{\tau^\star\delta}+\sigma Z \leq -\sqrt{\rho} | S=1\right]\nonumber\\
+&\frac{1}{2}\mathbb{P}\left[\frac{\beta^\star}{2}\frac{\sqrt{(\tau^\star)^2\delta-\rho}H-\sqrt{\rho}S}{\tau^\star\delta}+\sigma Z \geq \sqrt{\rho}|S=-1\right]\label{eq:ll}\\
=&Q\left(\frac{\sqrt{\rho}-\frac{\beta^\star \sqrt{\rho}}{2\tau^\star \delta}}{\sqrt{\frac{(\beta^\star)^2}{4}\frac{(\tau^\star)^2\delta-\rho}{(\tau^\star)^2\delta^2}+\sigma^2}}\right).\label{eq:ll2}
\end{align}
where in \eqref{eq:ll} $Z$ follows a standard normal distribution with mean zero and variance $1$. 
\end{proof}

It is important to note that although a BPSK modulation is assumed, \eqref{eq:ll2} is different from the asymptotic bit error probability ${P_e}=Q(\sqrt{2{\rm SNR}})$. The reason lies in the fact that the latter relation holds in the case of additive Gaussian noise that is independent of the transmitted symbols. In our case, we have not only noise but also the distortion $\hat{\bf e}$ which is correlated with the transmitted symbols, as evidenced by Theorem \ref{th:distortion}.

\subsection{Special cases: RZF and ZF precoding ($P\to\infty$)}
 The analysis of the RZF and ZF precoding in multi-user downlink systems has been the focus of several studies in the literature. Among these studies, we cite  the work in  \cite{wagner} which considered this problem with sophisticated channel models involving different correlations across users.
 However, to the best of our knowledge, none of the existing works studied the bit error probability approximation (all the focus being on the asymptotic characterization of the SINAD). 
  In the sequel, we show that by taking $P\to\infty$ in the asymptotic expressions of Theorem \ref{th:previous}, we can simplify the expression \eqref{eq:lower_bound_l} to reach the same results for the asymptotic SINAD performance obtained in the literature. Additionally, we obtain new asymptotic approximations for the bit error probability. For the sake of scientific rigor, since our proofs in Theorem \ref{th:previous} and Theorem \ref{th:distortion} relied on the assumption of finite values of $P$, we do not claim the convergence in probability of the specifications and performance metrics to the  limits of their asymptotic equivalents when $P\to\infty$, although we believe this to be the case. A rigorous proof of the convergence would require us to re-consider the case where $P\to\infty$ separately. However, we do not provide such a proof since
  the analysis of the RZF or the ZF can be conducted using tools from random matrix theory~\cite{abla_rmt} and is thus less worthy of consideration. 

\begin{theorem}[$P\to\infty$ and $\lambda>0$]
	For a given value of $P$, denote by $\tau^\star(P)$ and $\beta^\star(P)$ the solutions to the max-min problem in \eqref{eq:sec}. Assume $\lambda>0$, then as $P\to\infty$, the following convergences hold true:
	\begin{align}
			\lim_{P\to\infty} \tau^\star(P)&=\frac{\sqrt{\rho}}{\sqrt{\delta-\frac{1}{(1+\lambda s^\star)^2}}},\\
		\lim_{P\to\infty} \beta^\star(P)&=\frac{2\sqrt{\rho}}{s^\star\sqrt{\delta-\frac{1}{(1+\lambda s^\star)^2}}},
	\end{align}
where $s^\star$ is given by:
\begin{equation}
s^\star=\frac{\sqrt{(\delta-\lambda -1)^2+4\delta\lambda}-\delta+\lambda+1}{2\delta \lambda}.
\label{s_star}
\end{equation}
Particularly, in this regime, the asymptotic values of the per-antenna power, the distortion power, the SINAD lower bound $\overline{\rm SINAD}_{{\rm lb}}^{\star}$, and the bit error probability converge to
\begin{align}
	\lim_{P\to\infty}P_b^\star&=\frac{\rho}{
	\delta(1+\lambda s^\star)^2-1},\label{rzf_pb}\\
	\lim_{P\to\infty}P_d^\star&=\frac{\rho}{(s^\star)^2\delta(
\delta-\frac{1}{(1+\lambda s^\star)^2})},\label{rzf_pd}\\
	\lim_{P\to\infty}\overline{\rm SINAD}_{{\rm lb}}^\star&=\frac{(s^\star)^2\delta(\delta-\frac{1}{(1+\lambda s^\star)^2})}{1+\frac{\sigma^2}{\rho}(s^\star)^2\delta(\delta-\frac{1}{(1+\lambda s^\star)^2})},\label{rzf_snr}\\
	\lim_{P\to\infty}P_e^\star&=Q\left(\frac{\sqrt{\rho}(\delta s^\star-1)}{\sqrt{\sigma^2\delta^2(s^\star)^2+\frac{\rho}{(1+\lambda s^\star)^2-1}}}\right).\label{rzf_pe}
\end{align}
\label{Theo_rzf}
\end{theorem}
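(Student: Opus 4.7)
My plan is to send $P\to\infty$ directly inside the max-min problem \eqref{eq:sec}, identify the limiting saddle-point equations, and then read off the four performance metrics from Corollaries \ref{cor:lbub} and \ref{cor:dist_pb} and from the expression for $P_e^\star$.

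\emph{Step 1: pointwise limit of the objective.} For fixed $(\beta,\tau)$ with $\tau>0$ and $\beta>0$, the quantity $\alpha=\tfrac{1}{\tau}+\tfrac{2\lambda}{\beta}$ is bounded and strictly positive, so $\sqrt{P}\alpha\to\infty$. Standard Gaussian tail estimates give $\mathbb{E}[(H-\sqrt{P}\alpha)^{2}\mathbf{1}_{\{H\geq \sqrt{P}\alpha\}}]=o(1)$ (in fact exponentially small), hence $Y(\beta,\tau)\to -\tfrac{\beta}{2\alpha}=-\tfrac{\beta^{2}\tau}{2(\beta+2\lambda\tau)}$. Consequently the objective in \eqref{eq:sec} converges pointwise on $(0,\infty)^{2}$ to
\begin{equation*}
F(\beta,\tau)=\frac{\tau\beta\delta}{2}+\frac{\rho\beta}{2\tau}-\frac{\beta^{2}}{4}-\frac{\beta^{2}\tau}{2(\beta+2\lambda\tau)}.
\end{equation*}
Both the pre-limit and limit objectives are jointly concave in $\beta$ and convex in $\tau$ (this follows by the same argument used to establish Theorem~\ref{th:previous} in the body of the paper), and the limit problem is coercive on the correct sides for $\lambda>0$, so a standard convex-analytic continuity-of-saddle-points argument (uniform convergence on compacts together with uniqueness of the limiting saddle) gives $(\tau^{\star}(P),\beta^{\star}(P))\to(\tau^{\star},\beta^{\star})$, where $(\tau^{\star},\beta^{\star})$ is the unique saddle point of $F$.

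\emph{Step 2: solve the limiting saddle-point.} Writing the first-order condition $\partial_{\tau}F=0$ gives, after dividing by $\beta/2$,
\begin{equation*}
\delta-\frac{\rho}{\tau^{2}}=\frac{\beta^{2}}{(\beta+2\lambda\tau)^{2}}.
\end{equation*}
The natural change of variables is $s=2\tau/\beta$, which makes the right-hand side equal to $1/(1+\lambda s)^{2}$ and yields immediately the announced formula for $\tau^{\star}$ as a function of $s^{\star}$. Plugging $\tfrac{\rho}{\tau^{2}}=\delta-\tfrac{1}{(1+\lambda s)^{2}}$ into the condition $\partial_{\beta}F=0$ and rewriting everything in terms of $s$ reduces, after routine simplification, to
\begin{equation*}
\lambda\delta\, s^{2}+(\delta-\lambda-1)\, s-1=0,
\end{equation*}
whose unique positive root is exactly $s^{\star}$ in \eqref{s_star}. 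The formulas for $\tau^{\star}$ and $\beta^{\star}=2\tau^{\star}/s^{\star}$ follow at once.

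\emph{Step 3: translate to the performance metrics.} The asymptotic formulas \eqref{rzf_pb}--\eqref{rzf_pe} are then obtained by direct substitution. For \eqref{rzf_pb} I combine $P_{b}^{\star}=\delta(\tau^{\star})^{2}-\rho$ from Corollary~\ref{cor:dist_pb} with $(\tau^{\star})^{2}=\rho/(\delta-(1+\lambda s^{\star})^{-2})$; a common-denominator simplification gives $P_{b}^{\star}=\rho/(\delta(1+\lambda s^{\star})^{2}-1)$. For \eqref{rzf_pd} I use $P_{d}^{\star}=(\beta^{\star})^{2}/(4\delta)$ and the identity $(\beta^{\star})^{2}=4(\tau^{\star})^{2}/(s^{\star})^{2}$. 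The SINR lower-bound \eqref{rzf_snr} follows from $\overline{\rm SINR}_{\rm lb}^{\star}=\rho/(P_{d}^{\star}+\sigma^{2})$ after multiplying numerator and denominator by $(s^{\star})^{2}\delta(\delta-(1+\lambda s^{\star})^{-2})$. Finally, for \eqref{rzf_pe} I use $\beta^{\star}/(2\tau^{\star})=1/s^{\star}$ to write the numerator of the $Q$-argument as $\sqrt{\rho}(\delta s^{\star}-1)/(\delta s^{\star})$, and I multiply numerator and denominator by $\delta s^{\star}$ to land on the stated form.

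\emph{Main obstacle.} The only non-routine part is Step 1, namely passing the limit through the saddle-point operator. Pointwise convergence of $F$ is easy, but to conclude $(\tau^{\star}(P),\beta^{\star}(P))\to(\tau^{\star},\beta^{\star})$ I must exhibit uniform convergence on a compact neighborhood of the limiting saddle together with coercivity (so that the sequence of saddles cannot escape to $0$ or $\infty$). For $\lambda>0$ the term $-\beta^{2}/4$ provides concave coercivity in $\beta$, and the terms $\tau\beta\delta/2+\rho\beta/(2\tau)$ provide convex coercivity in $\tau$; combining these with the fact that $Y(\beta,\tau)\in[-\beta/(2\alpha),0]$ uniformly in $P$ gives the required uniform control and completes the convergence argument.
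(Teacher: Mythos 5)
Your proposal is correct and follows essentially the same route as the paper: the paper likewise replaces $Y(\beta,\tau)$ by its $P\to\infty$ limit $-\tfrac{\beta}{2\alpha}=-\tfrac{\beta^2\tau}{2(\beta+2\lambda\tau)}$, writes the first-order conditions of the limiting max-min problem, substitutes $s=2\tau/\beta$ to reduce them to $\delta-\tfrac{1}{s}-\tfrac{1}{1+\lambda s}=0$ (your quadratic), and plugs the resulting $(\tau^\star,\beta^\star)$ into the performance metrics, justifying the interchange of limit and saddle point by the same kind of sandwich/uniform-convergence-on-compacts argument you sketch (the paper borrows it from the proof of its Theorem~5). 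One very minor remark: carrying out the substitution in $P_e^\star$ carefully gives $\rho/(\delta(1+\lambda s^\star)^2-1)$ (i.e.\ $P_b^\star$) inside the square root rather than the stated $\rho/((1+\lambda s^\star)^2-1)$, so you do not quite ``land on the stated form''; this appears to be a typo in the theorem statement rather than a flaw in your derivation.
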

\begin{proof}
See Section \ref{Proof_rzf}.
\end{proof}

\begin{theorem}[$P\to\infty$, $\lambda=0$ and $\delta>1$]
	For a given value of $P$, denote by $\tau^\star(P)$ and $\beta^\star(P)$ the solutions to the max-min problem in \eqref{eq:sec}. Assume $\lambda=0$ and $\delta>1$. Then as $P\to\infty$, 
		\begin{align}
				\lim_{P\to\infty} \tau^\star(P)&=\sqrt{\frac{\rho}{(\delta-1)}},\label{zft}\\
		\lim_{P\to\infty} \beta^\star(P)&=2\sqrt{\rho(\delta-1)}.\label{zfb}
		\end{align}
Particularly, in this regime, the asymptotic values of the per-antenna power, the distortion power, the SINAD lower bound $\overline{\rm SINAD}_{{\rm lb}}^{\star}$, and the bit error probability converge to
\begin{align}
	\lim_{P\to\infty}P_b^\star&= \frac{\rho}{\delta-1},\label{eq:zf_pb}\\
	\lim_{P\to\infty}P_d^\star&=\rho(1-\frac{1}{\delta}),\\
		\lim_{P\to\infty}\overline{\rm SINAD}_{{\rm lb}}^\star&=\frac{\rho}{\rho(1-\frac{1}{\delta})+\sigma^2},\\
	\lim_{P\to\infty}P_e^\star&=Q\left(\frac{\sqrt{\rho}}{\sqrt{\rho(\delta-1)+\sigma^2\delta^2}}\right). \label{eq:Pe_zf}
\end{align}
\end{theorem}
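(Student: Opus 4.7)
The plan is to first identify the asymptotic behaviour of the auxiliary function $\tilde{Y}(\tau)$ defined in Theorem~\ref{th:previous}(ii) as $P\to\infty$, then pass to the limit inside the one-dimensional minimisation defining $\tau^\star(P)$, and finally derive the remaining quantities in \eqref{eq:zf_pb}--\eqref{eq:Pe_zf} by algebraic substitution into Corollaries~\ref{cor:lbub} and \ref{cor:dist_pb} together with the bit-error-probability formula. Note that the limit $P\to\infty$ is purely deterministic at this stage: the stochastic content has already been absorbed into the fixed-point equations of Theorem~\ref{th:previous}, so no new CGMT arguments are needed.

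For the first step, I would show that for every fixed $\tau>0$, $\tilde{Y}(\tau)\to -\tau/2$ as $P\to\infty$. Setting $u:=\sqrt{P}/\tau$ so that $u\to\infty$, the Gaussian-tail integral $\mathbb{E}[(H-u)^2\mathbf{1}_{\{H\geq u\}}]=\int_u^\infty (h-u)^2\phi(h)\,dh$, with $\phi$ the standard Gaussian density, is of order $\phi(u)/u^3$ by a Mills-ratio bound, hence vanishes. The convergence is uniform on any compact subset of $(0,\infty)$, since on such a subset $u$ is uniformly large. Consequently the objective $F_P(\tau):=\tfrac{\tau\delta}{2}+\tfrac{\rho}{2\tau}+\tilde{Y}(\tau)$ converges locally uniformly to $F_\infty(\tau):=\tfrac{(\delta-1)\tau}{2}+\tfrac{\rho}{2\tau}$.

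For the second step, because $\delta>1$, $F_\infty$ is strictly convex on $(0,\infty)$ with a unique minimiser $\tau_\infty=\sqrt{\rho/(\delta-1)}$, and the elementary lower bound $\tilde{Y}(\tau)\geq -\tau/2$ gives $F_P(\tau)\geq \tfrac{(\delta-1)\tau}{2}+\tfrac{\rho}{2\tau}$ uniformly in $P$. This coercive lower bound confines the minimisers $\{\tau^\star(P)\}$ to a compact subinterval of $(0,\infty)$ independent of $P$; combined with the locally uniform convergence $F_P\to F_\infty$ and the uniqueness of $\tau_\infty$, it yields $\tau^\star(P)\to \tau_\infty$, which is \eqref{zft}. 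Moreover $F_\infty(\tau_\infty)=\sqrt{\rho(\delta-1)}>0$, so the outer $(\cdot)_+$ in the definition of $\beta^\star$ in Theorem~\ref{th:previous}(ii) is inactive for $P$ large, and $\beta^\star(P)=\tau^\star(P)\delta+\rho/\tau^\star(P)+2\tilde{Y}(\tau^\star(P))\to \tau_\infty(\delta-1)+\rho/\tau_\infty=2\sqrt{\rho(\delta-1)}$, establishing \eqref{zfb}.

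The remaining identities then follow by direct substitution and continuity. Corollary~\ref{cor:dist_pb} gives $P_b^\star=\delta(\tau^\star)^2-\rho\to\rho/(\delta-1)$ and $P_d^\star=(\beta^\star)^2/(4\delta)\to\rho(1-1/\delta)$; the SINR limit is obtained from \eqref{eq:lower_bound_l}; and the $P_e$ limit follows by plugging $(\tau_\infty,\beta_\infty)$ into the preceding corollary's formula and invoking continuity of $Q(\cdot)$. The one delicate point, which I expect to be the main obstacle, is the interchange of $\arg\min$ and $\lim_{P\to\infty}$: uniqueness of the saddle point from Theorem~\ref{th:previous}(i), compactness of $\{\tau^\star(P)\}$, and local uniform convergence of $F_P$ should suffice, but the argument must be formulated carefully because $\tau\mapsto (F_P(\tau))_+^2$ is not manifestly convex and one needs to rule out spurious minimisers at $0$ or $\infty$.
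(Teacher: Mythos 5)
Your proposal is correct, and it reaches the stated limits by a route that is genuinely different from the paper's. The paper's own proof is a one-line reduction: it sets $\lambda=0$ in the limiting two-variable max--min problem \eqref{1st_condition} obtained in the proof of Theorem~\ref{Theo_rzf} (whose justification of the interchange of $P\to\infty$ with the saddle point is in turn deferred to the argument pattern of Theorem~\ref{th:as_1}, i.e.\ upper/lower sandwiching plus Lemma~10 of \cite{mesti}), and then solves the resulting first-order conditions $\tau^2=\rho/(\delta-1)$, $\beta=\tau(\delta-1)+\rho/\tau$. You instead exploit the scalar characterization of $\tau^\star$ available only when $\lambda=0$ (Theorem~\ref{th:previous}(ii)): you prove $\tilde{Y}(\tau)\to-\tau/2$ locally uniformly via the Gaussian tail bound, obtain the coercive and $P$-uniform lower bound $F_P(\tau)\ge\frac{(\delta-1)\tau}{2}+\frac{\rho}{2\tau}$ from $\tilde{Y}(\tau)\ge-\tau/2$, and conclude by a standard argmin-convergence argument. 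Both are sound; yours is more self-contained (no appeal to the RZF theorem or to the two-dimensional saddle-point limit), while the paper's buys brevity by reusing machinery already built for $\lambda>0$. Your worry about $(F_P)_+^2$ not being manifestly convex is in fact moot: your own lower bound shows $F_P>0$ everywhere, so the $(\cdot)_+$ is inactive and minimizing $(F_P)_+^2$ is equivalent to minimizing $F_P$, which is convex in $\tau$ (it equals $\mathcal{D}(\beta,\tau)/\beta+\beta/4$ for $\lambda=0$, and Lemma~\ref{eq:lemma_asym} gives convexity of $\mathcal{D}$ in $\tau$); together with coercivity this rules out spurious minimizers at $0$ or $\infty$ exactly as you anticipate. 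The final substitutions into Corollaries~\ref{cor:lbub} and \ref{cor:dist_pb} and the $Q$-function formula check out.
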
 
 \begin{proof}
 	By setting $\lambda=0$ in equation \eqref{1st_condition} of the proof of Theorem \ref{Theo_rzf}, we directly obtain \eqref{zft} and \eqref{zfb}, from which the approximations in \eqref{eq:zf_pb}-\eqref{eq:Pe_zf} follow easily. 
\end{proof}

\subsection{Limiting cases}
The expressions derived so far are useful to characterize the performance of the  limited PAPR precoder in terms of the design parameters, that is the ratio of $m$ to $n$, and the power control parameter $\rho$. To gain more insight into the impact of these parameters on  the performance of the limited PAPR precoder, next we study the following limiting cases. 
\begin{theorem}[The number of users much smaller than the number of antennas $(m\ll n)$]
	For a given value of $\delta$, denote by $\tau^\star(\delta)$ and $\beta^\star(\delta)$ the solutions to the max-min problem in \eqref{eq:sec}. Assume $\lambda>0$.  Then as $\delta\to 0$, the following convergences hold true:
	\begin{align}
		\lim_{\delta\to 0}\frac{\tau^\star(\delta)}{\sqrt{\frac{\rho}{\delta}}}&= 1\label{tau_delta_zero},\\
		\lim_{\delta\to 0}\frac{\beta^\star(\delta)(1+\frac{1}{\lambda})}{2\sqrt{\rho\delta}}& =1.  \label{beta_delta_zero}
		\end{align}
	Particularly, in this regime,  the asymptotic values for the per-antenna power, the distortion power, the SINAD lower bound $\overline{\rm SINAD}_{\rm lb}^{\star}$, and the bit error probability converge to 
\begin{align}
	\lim_{\delta \to 0}P_b^\star&=0 \label{eq:P_b_delta_zero},\\
	\lim_{\delta\to 0}P_d^\star &= \frac{\rho}{(1+\frac{1}{\lambda})^2},\\
	\lim_{\delta\to 0}\overline{\rm SINAD}_{\rm lb}^\star &= \frac{\rho}{\frac{\rho}{(1+\frac{1}{\lambda})^2}+\sigma^2},  \\
	\lim_{\delta\to 0}P_e^\star&= Q(\frac{\sqrt{\rho}}{(\lambda+1)\sigma}).
	\label{eq:SINR_delta_zero}
\end{align}
\label{th:as_1}
\end{theorem}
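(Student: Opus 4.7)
The plan is to perform a perturbative analysis of the saddle-point equations associated with \eqref{eq:sec} as $\delta\to 0$, identify the leading orders of $\tau^\star(\delta)$ and $\beta^\star(\delta)$, and then read off the claimed limits of the performance metrics from Corollaries \ref{cor:lbub} and \ref{cor:dist_pb} together with the bit error probability formula. The key simplification is that the Gaussian tail functional $\mathbb{E}[(H-a)^2\mathbf{1}_{H\geq a}]$ decays exponentially as $a\to\infty$; so if one can establish that $\sqrt{P}\,\alpha^\star(\delta)\to\infty$, where $\alpha^\star=1/\tau^\star+2\lambda/\beta^\star$, then to leading order
\begin{equation*}
Y(\beta^\star,\tau^\star)\simeq -\frac{\beta^\star}{2\alpha^\star},
\end{equation*}
and the max-min reduces to a quadratic problem with an explicit solution.

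I would posit the ansatz $\tau^\star(\delta)\asymp \delta^{-1/2}$ and $\beta^\star(\delta)\asymp \delta^{1/2}$. Under this scaling, $1/\tau^\star=o(1)$ whereas $2\lambda/\beta^\star\to\infty$, so $\alpha^\star\sim 2\lambda/\beta^\star\to\infty$ self-consistently, and therefore $\beta^\star/(2\alpha^\star)\sim (\beta^\star)^2/(4\lambda)$. Inserting this into \eqref{eq:sec}, the leading-order objective becomes
\begin{equation*}
\frac{\tau\beta\delta}{2}+\frac{\rho\beta}{2\tau}-\frac{\beta^2}{4}\!\left(1+\frac{1}{\lambda}\right).
\end{equation*}
The stationarity condition in $\tau$ gives $\tau^\star=\sqrt{\rho/\delta}$, and the stationarity condition in $\beta$ then yields $\beta^\star=\frac{1}{1+1/\lambda}\bigl(\tau^\star\delta+\rho/\tau^\star\bigr)=\frac{2\sqrt{\rho\delta}}{1+1/\lambda}$, which is exactly \eqref{tau_delta_zero} and \eqref{beta_delta_zero}.

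The convergence of the performance metrics then follows mechanically. From Corollary \ref{cor:dist_pb}, $P_b^\star=\delta(\tau^\star)^2-\rho\to 0$ and $P_d^\star=(\beta^\star)^2/(4\delta)\to\rho/(1+1/\lambda)^2$; plugging $P_d^\star$ into \eqref{eq:lower_bound_l} gives the SINR lower-bound limit. For the bit error probability, I would observe that $\beta^\star/(2\tau^\star\delta)\to 1/(1+1/\lambda)$ while $((\tau^\star)^2\delta-\rho)/((\tau^\star)^2\delta^2)\to 0$, so the numerator of the argument of $Q$ tends to $\sqrt{\rho}\,\lambda/(1+\lambda)$ and the denominator tends to $\sigma$, giving $Q(\sqrt{\rho}/((1+\lambda)\sigma))$ as claimed.

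The main obstacle is upgrading the above heuristic expansion to a rigorous argument that the unique saddle-point $(\tau^\star(\delta),\beta^\star(\delta))$ guaranteed by Theorem \ref{th:previous}(i) genuinely obeys the postulated scaling. I would achieve this by rescaling $\tau=\widetilde{\tau}/\sqrt{\delta}$ and $\beta=\widetilde{\beta}\sqrt{\delta}$, writing down the first-order optimality conditions for $(\widetilde{\tau},\widetilde{\beta})$ using \eqref{eq:tau_r} of Lemma \ref{eq:lemma_asym} and its $\beta$-counterpart, and showing that the rescaled system converges uniformly on compact subsets of $(0,\infty)^2$ to a limiting system whose only positive root is $(\sqrt{\rho},\,2\sqrt{\rho}/(1+1/\lambda))$. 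The exponential smallness of the neglected tail contribution in the relevant regime, combined with the uniqueness asserted by Theorem \ref{th:previous}(i), would then force the desired convergence and close the argument.
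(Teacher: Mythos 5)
Your heuristic identification of the scaling and the limiting saddle point is correct, and your reading of the performance metrics from Corollaries \ref{cor:lbub} and \ref{cor:dist_pb} and the $P_e^\star$ formula checks out (one small imprecision: the ratio $((\tau^\star)^2\delta-\rho)/((\tau^\star)^2\delta^2)$ is a $0/0$ form whose limit is not obvious on its own; what actually matters is that $\frac{(\beta^\star)^2}{4}\frac{(\tau^\star)^2\delta-\rho}{(\tau^\star)^2\delta^2}=\bigl(\frac{\beta^\star}{2\tau^\star\delta}\bigr)^2 P_b^\star\to(\frac{\lambda}{1+\lambda})^2\cdot 0=0$). Your rescaling $\tau=\widetilde{\tau}/\sqrt{\delta}$, $\beta=\widetilde{\beta}\sqrt{\delta}$ is exactly the one the paper uses. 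Where you diverge is in how you propose to make the expansion rigorous: you work with the first-order optimality system for $(\widetilde{\tau},\widetilde{\beta})$, whereas the paper works with the normalized value function $\overline{\phi}_\delta$, sandwiching it between a lower bound obtained from the monotonicity of $Y$ in $\tau'$ (which gives $\frac{1}{\delta}Y\geq-\frac{(\beta')^2}{4\lambda}$ uniformly) and an upper bound obtained by plugging in $\tau'=\sqrt{\rho}$, showing both converge to the same limiting max-min, and then invoking Lemma 10 of \cite{mesti} on convergence of maximizers of pointwise-converging concave coercive functions.

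The genuine gap in your plan for rigor is tightness. Uniform convergence of the rescaled stationarity system on compact subsets of $(0,\infty)^2$, together with uniqueness of the positive root of the limiting system, does \emph{not} by itself force $(\widetilde{\tau}(\delta),\widetilde{\beta}(\delta))$ to converge to that root: you must first show that the rescaled saddle points remain in a fixed compact subset of $(0,\infty)^2$ as $\delta\to 0$, i.e.\ rule out escape to $0$ or $\infty$ in either coordinate. (The uniqueness in Theorem \ref{th:previous}(i) is a statement for each fixed $\delta$ and gives no a priori control as $\delta$ varies.) Without this localization, the roots of the finite-$\delta$ system could in principle drift to the boundary where the uniform approximation of $Y$ by $-\beta^2/(4\lambda)$ is no longer controlled; note in particular that the approximation $\alpha^\star\sim 2\lambda/\beta^\star\to\infty$ is itself only self-consistent under the ansatz you are trying to prove. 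The paper's value-function route sidesteps this: the two-sided bound on $\overline{\phi}_\delta$ holds without any a priori knowledge of where the saddle point sits, and the coercivity of the concave objective in $\beta'$ (it tends to $-\infty$ as $\beta'\to\infty$) supplies the compactness needed to pass to the limit of the maximizer. To close your argument you would need to add an explicit a priori bound showing $\widetilde{\tau}(\delta)$ and $\widetilde{\beta}(\delta)$ are bounded above and below away from zero for all small $\delta$, e.g.\ by deriving two-sided estimates directly from \eqref{eq:tau_r} and its $\beta$-counterpart.
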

\begin{proof}
	See Section \ref{ass:as_1}.
\end{proof}
 
\begin{theorem}[The number of antennas much smaller than the the number of users $(m\gg n)$]
For a given value of $\delta$, denote by $\tau^\star(\delta)$ and $\beta^\star(\delta)$ the solutions to the max-min problem in \eqref{eq:sec}. Assume $\lambda\geq0$.  Then as $\delta\to \infty$, the following convergences hold true:
\begin{align}
	&\lim_{\delta\to \infty}\frac{\tau^\star(\delta)}{\sqrt{\frac{\rho}{\delta}}}= 1 \label{tau_delta_inf},\\
	& \lim_{\delta\to\infty}\frac{\beta^\star(\delta)}{2\sqrt{\rho\delta}}= 1. \label{beta_delta_inf}
\end{align}
Particularly, in this regime,  the asymptotic values for the per-antenna power, the distortion power, the SINAD lower bound $\overline{\rm SINAD}_{\rm lb}^{\star}$, and the bit error probability converge to
\begin{align}
	\lim_{\delta\to\infty} P_b^\star&=0 \label{P_b_delta_inf},\\
	\lim_{\delta\to \infty}P_d^\star &= \rho \label{P_d_delta_inf},\\
		\lim_{\delta\to \infty}\overline{\rm SINAD}_{\rm lb}^\star &= \frac{\rho}{\rho+\sigma^2} \label{SINR_delta_inf},\\
	\lim_{\delta\to \infty}P_e^\star&=\frac{1}{2}. \label{P_e_delta_inf} 
\end{align}
\label{th:as_2}
\end{theorem}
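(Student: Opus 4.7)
The plan is to work directly with the first-order optimality conditions associated with the saddle point $(\tau^\star,\beta^\star)$ of \eqref{eq:sec}, under the ansatz that $\tau^\star(\delta)\sim \sqrt{\rho/\delta}$ and $\beta^\star(\delta)\sim 2\sqrt{\rho\delta}$ as $\delta\to\infty$. Under this ansatz $\alpha^\star = 1/\tau^\star + 2\lambda/\beta^\star \sim \sqrt{\delta/\rho}\to\infty$, so $\sqrt{P}\alpha^\star\to\infty$. This is the pivotal observation: it forces the Gaussian tail integral inside $Y(\beta,\tau)$ to contribute negligibly, so that the saddle-point system reduces to a simple quadratic max-min whose solution is exactly $(\sqrt{\rho/\delta},\,2\sqrt{\rho\delta})$. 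The closed-form asymptotics of the performance metrics then follow by direct substitution.

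The first step is therefore to prove that $Y(\beta^\star,\tau^\star)$ is subdominant with respect to the term $\tau^\star\beta^\star\delta/2$ of the objective. Using the standard Gaussian tail estimate
\begin{equation*}
\mathbb{E}\bigl[(H-a)^2\mathbf{1}_{\{H\geq a\}}\bigr] = O\bigl(a^{-3}e^{-a^2/2}\bigr), \qquad a\to\infty,
\end{equation*}
with $a=\sqrt{P}\alpha^\star$, one obtains $Y(\beta^\star,\tau^\star) = -\beta^\star/(2\alpha^\star) + o(\beta^\star/\alpha^\star) = -\beta^\star\tau^\star/2 + o(\beta^\star\tau^\star)$; since $\beta^\star\tau^\star\asymp\rho$ while $\tau^\star\beta^\star\delta/2\asymp \rho\delta/2$, the contribution of $Y$ is vanishing in relative order. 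Consequently the saddle-point equations reduce, to leading order in $1/\delta$, to those of the unperturbed problem
\begin{equation*}
\max_{\beta\geq 0}\min_{\tau\geq 0}\;\frac{\tau\beta\delta}{2}+\frac{\rho\beta}{2\tau}-\frac{\beta^2}{4},
\end{equation*}
whose inner minimum is attained at $\tau_0=\sqrt{\rho/\delta}$ with inner value $\beta\sqrt{\rho\delta}-\beta^2/4$, and whose outer maximum is attained at $\beta_0=2\sqrt{\rho\delta}$. Uniqueness of the saddle point from part (i) of Theorem \ref{th:previous}, together with the joint convex-concavity of the objective, lets me upgrade this into the convergence claims \eqref{tau_delta_inf} and \eqref{beta_delta_inf}.

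Plugging the limits of $\tau^\star$ and $\beta^\star$ into Corollary \ref{cor:dist_pb} yields $P_b^\star=\delta(\tau^\star)^2-\rho\to 0$ and $P_d^\star=(\beta^\star)^2/(4\delta)\to \rho$, hence Corollary \ref{cor:lbub} gives $\overline{\rm SINR}_{\rm lb}^\star\to\rho/(\rho+\sigma^2)$. For $P_e^\star$, one checks that $\beta^\star/(2\tau^\star\delta)\to 1$, so the numerator of the $Q$-function argument vanishes, giving $P_e^\star\to Q(0)=1/2$. The main obstacle is the rigorous passage from ``the ansatz is self-consistent at the level of first-order conditions'' to ``the actual saddle point satisfies the claimed asymptotics''; this amounts to an implicit-function-type argument applied to the optimality system, viewed as a smooth perturbation in $1/\delta$ of the reduced system solved by $(\tau_0,\beta_0)$, with the nondegeneracy of the reduced Hessian being the essential ingredient. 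The assumption $\lambda\geq 0$ (as opposed to $\lambda>0$) is harmless because $\delta>1$ holds automatically for $\delta$ large, so part (i) of Theorem \ref{th:previous} continues to guarantee a unique saddle along the limit.
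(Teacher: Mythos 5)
Your overall strategy coincides with the paper's: after the (implicit) rescaling $\tau'=\tau\sqrt{\delta}$, $\beta'=\beta/\sqrt{\delta}$, the term $Y$ contributes only at relative order $O(1/\delta)$, so the saddle point should converge to that of the unperturbed quadratic max--min, namely $(\tau_0,\beta_0)=(\sqrt{\rho/\delta},\,2\sqrt{\rho\delta})$. Your Gaussian tail estimate, the identification of the reduced saddle point, and the substitutions into the formulas for $P_b^\star$, $P_d^\star$, $\overline{\rm SINR}_{\rm lb}^\star$ and $P_e^\star$ (in particular $\beta^\star/(2\tau^\star\delta)\to 1$, so the argument of the $Q$-function tends to $0$ and $P_e^\star\to 1/2$) are all correct.

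The gap is the one you flag yourself, and it is not cosmetic: everything up to your last paragraph is a self-consistency check of the ansatz, which by itself proves nothing about the actual saddle point. Your estimate $Y(\beta^\star,\tau^\star)=-\beta^\star/(2\alpha^\star)(1+o(1))$ is evaluated \emph{at} the ansatz; to conclude that the perturbed and unperturbed saddle points are asymptotically equal you need (a) an a priori guarantee that the true rescaled saddle point stays in a fixed compact set bounded away from $\tau'=0$ and $\tau'=\infty$ as $\delta\to\infty$, and (b) uniform smallness of the perturbation over that set. The paper's proof spends most of its effort precisely on the localization step: it shows that for $\delta\geq 8$ the optimal rescaled $\tau'$ must lie in $[0,2\sqrt{\rho+1}]$ by comparing the partial minima over $[0,2\sqrt{\rho+1}]$ and $[2\sqrt{\rho+1},\infty)$ using the bound $|\frac{1}{\delta}\overline{Y}(\beta',\tau')|\leq \tau'/\delta$, notes that the optimal $\beta'$ is bounded because the objective tends to $-\infty$, and only then invokes the uniform convergence in \eqref{eq:dr} to pass to the limit problem and exploit uniqueness of its saddle point. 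Without such a localization one cannot exclude that the true saddle point lives in a different scaling regime where your tail expansion of $Y$ is invalid. The implicit-function-theorem route you sketch could close the argument, but it needs the same two missing ingredients --- uniform (not pointwise) control of the perturbation and its derivatives in the rescaled variables, and verification of the nondegeneracy of the reduced Hessian at $(\overline{\tau}',\overline{\beta}')=(\sqrt{\rho},2\sqrt{\rho})$ --- neither of which you supply. As written, the proposal establishes that the claimed limit is a consistent candidate, not that it is the limit.
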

\begin{proof}
	See Section \ref{ass:as_2}.
\end{proof}

\begin{figure*} 
	\centering
	\includegraphics[width=6 in]{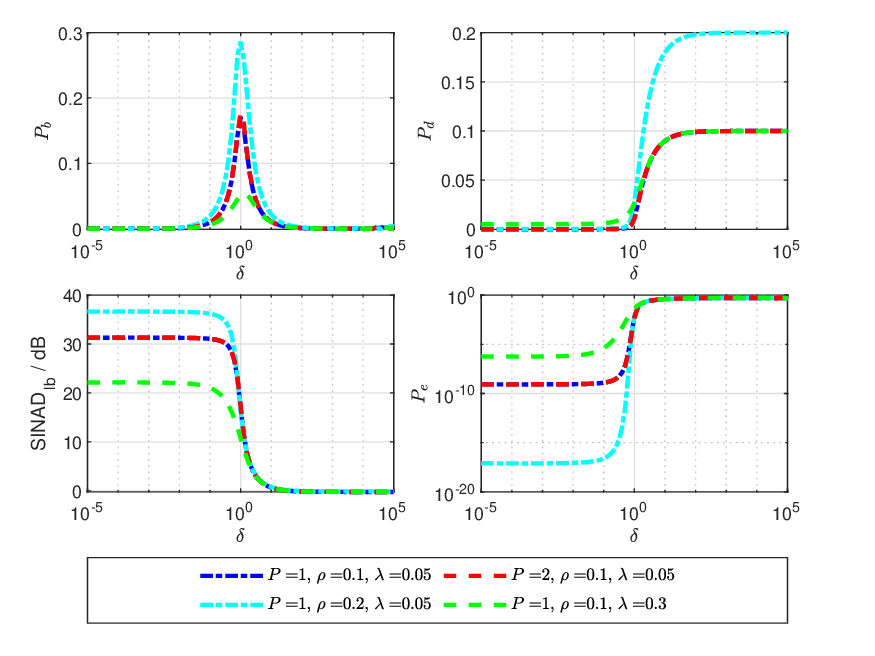}
	\caption{The theoretical  per-antenna power, distortion power, SINAD lower bound and bit error probability versus $\delta$, for $\sigma=0.05$ and different parameter combinations.}
	\label{fig4}
\end{figure*}

Theorem \ref{th:as_1} and Theorem \ref{th:as_2} allow us to understand the behavior of the limited PAPR precoder when the number of available antennas largely exceeds the number of users or vice versa. As an important  remark, we note that, interestingly, in both cases, all performance metrics do not asymptotically depend on $P$. In other words, considering the regimes where $\delta\to0$, or $\delta\to\infty$,  regardless of the maximum power at each antenna, the performance is almost the same. However, the results depend on $\lambda$ when $\delta\to 0$ and do not depend on $\lambda$ when $\delta\to\infty$. This behavior can be attributed to the fact that the limited PAPR  precoder becomes close to the RZF precoder when $\delta\to 0$ and to the ZF precoder when $\delta\to\infty$. Below, we provide arguments supporting these claims. 

\noindent{\bf Case 1 ($\delta\to 0$).} In this case, the  limited PAPR precoder solving \eqref{eq:nonlse} becomes close to the RZF precoder if the latter satisfies the per-antenna constraint. It turns out that when $\delta\to 0$, the Frobenius norm of ${\bf H}^{T}$ scales as $\sqrt{m}$ and so does the  per antenna power of the RZF \footnote{Here we used the fact that $\|\hat{\bf x}_{\rm RZF}\|\leq \frac{1}{\lambda}\|{\bf H}^{T}{\bf s}\|$ and that $\|{\bf H}^{T}{\bf s}\|$ can be approximated by $\sqrt{{\rm tr} ({\bf H}^{T}{\bf H})}$ with high probability when $m\to\infty$}, which leads to the RZF becoming asymptotically feasible with respect to the optimization problem \eqref{eq:nonlse}.   However, in this case, it is advisable to select a small regularization parameter. The reason lies in that a sufficient number of degrees of freedom are available to find ${\bf x}$ such that ${\|{\bf Hx}-\sqrt{\rho}{\bf s}\|^2}$ is as small as desired. Using a large regularization parameter will thus increase the bias, thereby  deteriorating the performance.

\noindent{\bf Case 2 ($\delta\to \infty$).}  In this case, we argue that the limited PAPR precoder solving \eqref{eq:nonlse} becomes close  to the ZF precoder.  This explains why the performances do not depend on either $P$ or $\lambda$. 

Towards this goal, it will suffice to show that
\begin{enumerate}[label=(\roman*)]
\item The per-antenna power of the ZF precoder converges  to zero as $\delta\to\infty$, and thus the ZF precoder becomes  feasible with respect to the optimization problem in \eqref{eq:nonlse}. 
\item  Denoting the optimal cost in \eqref{eq:nonlse} by $\hat{G}$, then as $\delta\to \infty$,
\begin{equation}
\frac{1}{m}\hat{G}-\frac{1}{m}\| {\bf H}\hat{\bf x}_{\rm ZF}-\sqrt{\rho}{\bf s}\|^2\stackrel{P}{\rightarrow} 0.\label{eq:G}
\end{equation} 
\end{enumerate}
\noindent To prove (i), we use  concentration results of random matrices \cite{wainwright_2019} to show that as $\delta\to\infty$ the spectral norm $\|\frac{n}{m}({\bf H}^{T}{\bf H})-{\bf I}_p\|$ converges in probability to zero.  Hence, the precoding vector $\hat{\bf x}_{\rm ZF}$ can be approximated by $\tilde{\bf x}_{\rm ZF}:=\frac{n}{m}\sqrt{\rho}{\bf H}^{T}{\bf s}$ in the sense that $\|\hat{\bf x}_{\rm ZF}-\tilde{\bf x}_{\rm ZF}\|$ converges to zero in probability. Denote by $\tilde{\bf h}_i$ the $i$-th row of ${\bf H}^{T}$. Then, the $i$-th entry of $\tilde{\bf x}_{\rm ZF}$ can be bounded as
\begin{equation*}
\left|\left[\tilde{\bf x}_{\rm ZF}\right]_i\right|\leq \sqrt{\rho}\sqrt{\frac{n}{m}} \frac{\sqrt{n}\tilde{\bf h}_i^{T}{\bf s}}{\sqrt{m}}.
\end{equation*}
Since $\frac{\sqrt{n}\max_{1\leq i\leq n}|\tilde{\bf h}_i^{T}{\bf s}|}{\sqrt{m}}$ is stochastically bounded, we conclude that with probability approaching $1$, 
all entries of $\hat{\bf x}_{\rm ZF}$ are less than any fixed positive threshold $\epsilon$. Letting $\epsilon <\sqrt{P}$ establishes the first statement (\romannum{1}). 

\noindent To validate (\romannum{2}), we use the following pair of inequalities 
\begin{equation}
\frac{\| {\bf H}\hat{\bf x}_{\rm ZF}-\sqrt{\rho}{\bf s}\|^2}{m} \leq \frac{\hat{G}}{m} \leq \min_{\substack{-\sqrt{P}\leq x_i \leq \sqrt{P}\\ i=1,\cdots,n}} \frac{\| {\bf Hx}-\sqrt{\rho}{\bf s}\|^2}{m} +\frac{\lambda Pn}{m}. \label{eq:rfe}
\end{equation}
From (i), the  absolute value of the entries of the ZF precoder are less than ${\sqrt{P}}$. Hence, with probability approaching $1$, 
\begin{equation*}
\min_{-\sqrt{P}\leq x_i \leq \sqrt{P}} \frac{\| {\bf Hx}-\sqrt{\rho}{\bf s}\|^2}{m}=\frac{1}{m} \| {\bf H}\hat{\bf x}_{\rm ZF}-\sqrt{\rho}{\bf s}\|^2.
\end{equation*}
To continue, we substitute $\hat{\bf x}_{\rm ZF}$ by its expression into $\left\|{\bf H}\hat{\bf x}_{\rm ZF}-\sqrt{\rho}{\bf s}\right\|^2$, then we apply the quadratic forms convergence results \cite{bai2009spectral} to conclude that
\begin{equation*}
\frac{1}{m}\|{\bf H}\hat{\bf x}_{\rm ZF}-\sqrt{\rho}{\bf s}\|^2-\rho\frac{m-n}{m}\stackrel{P}{\rightarrow} 0. 
\end{equation*}
Plugging the above convergence into \eqref{eq:rfe}, we can deduce that as $\delta\to \infty$, the left-hand side and the right-hand side of \eqref{eq:rfe} are asymptotically equivalent to $\frac{1}{m}\|{\bf H}\hat{\bf x}_{\rm ZF}-\sqrt{\rho}{\bf s}\|^2$, which proves the desired convergence in \eqref{eq:G}.

\noindent{\bf Numerical illustration.} Figure \ref{fig4} plots the theoretical values for all the studied performance metrics versus $\delta$.  As expected from Theorem \ref{th:as_1} and Theorem \ref{th:as_2}, the per-antenna power goes to zero as $\delta$ tends to zero or $\delta$ tends to infinity. Moreover, the performance in terms of bit error probability and SINAD becomes the best when $\delta$ is close to zero due to the excess in the number of spatial degrees of freedom.

\begin{figure}  
	\centering
	\includegraphics[width=3 in]{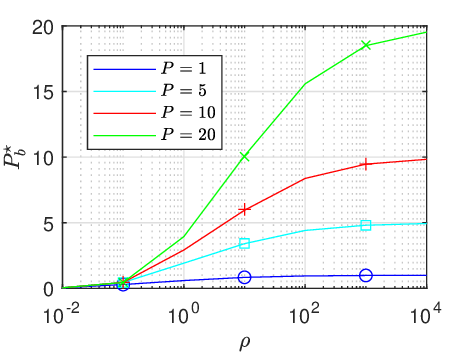}
	\caption{$P_b^\star$ versus $\rho$, for different $P$ values. $n=256$, $\delta=1$, $\lambda=0.01$ and $\sigma=0.05$.  (Markers show the simulated results
averaged over 50 realizations of random quantities $\mathbf{H}$, $\mathbf{s}$ and $\mathbf{z}$.)}
	\label{fig1}
\end{figure} 

\begin{theorem}[Power control parameter $\rho\to 0$] For a given value of $\rho$, denote by $\tau^\star(\rho)$ and $\beta^\star(\rho)$ the solutions to the max-min problem in \eqref{eq:sec}. 
	Then the following statements hold true:
	\begin{enumerate}
	\item Assume $\lambda>0$. Then, as $\rho\to 0$, the following convergences hold
	\begin{align}
&	\lim_{\rho\to 0}\frac{\tau^\star(\rho)}{\frac{\sqrt{\rho}}{\sqrt{\delta-\frac{1}{(1+\lambda s^\star)^2}}}}=1, \label{eq:tau_rho}\\
&  \lim_{\rho\to 0}\frac{\beta^\star(\rho)}{\frac{2\sqrt{\rho}}{s^\star\sqrt{\delta-\frac{1}{(1+\lambda s^\star)^2}}}}=1, \label{eq:beta_rho}
	\end{align}
	where $s^\star$ is the same as in \eqref{s_star}.
	Particularly, in this regime,  the asymptotic values for the per-antenna power, the distortion power, the SINAD lower bound $\overline{\rm SINAD}_{\rm lb}^{\star}$, and the bit error probability converge to:
	\begin{align}
	&\lim_{\rho\to 0}\frac{P_b^\star}{\frac{\rho}{
	\delta(1+\lambda s^\star)^2-1}}=1,\label{r0_pb}\\
	&\lim_{\rho\to 0}\frac{P_d^\star}{\frac{\rho}{(s^\star)^2\delta(
\delta-\frac{1}{(1+\lambda s^\star)^2})}}=1,\label{r0_pd}\\
	&\lim_{\rho\to 0}\frac{\overline{\rm SINAD}_{{\rm lb}}^\star}{\frac{\rho}{\sigma^2}}=1,
	\label{r0_snr}\\
	&\lim_{\rho\to 0}\frac{P_e^\star}{\frac{1}{2}-\frac{1}{\sqrt{2\pi}}\frac{\sqrt{\rho}(\delta s^\star-1)}{\sqrt{\sigma^2\delta^2(s^\star)^2+\frac{\rho}{(1+\lambda s^\star)^2-1}}}}=1.
\end{align}
\item Assume $\lambda=0$ and $\delta >1$.  Then, as $\rho\to 0$, the following convergences hold:\begin{align}
	&	\lim_{\rho\to 0}\frac{\tau^\star(\rho)}{\frac{\sqrt{\rho}}{\sqrt{\delta-1}}}=1, \label{eq:tau_rho1}\\
	&  \lim_{\rho\to 0}\frac{\beta^\star(\rho)}{2\sqrt{\rho}\sqrt{\delta-1}}=1. \label{eq:beta_rho1}
\end{align}
Particularly, in this regime,  the asymptotic values for the per-antenna power, the distortion power, the SINAD lower bound $\overline{\rm SINAD}_{\rm lb}^{\star}$, and the bit error probability converge to:
\begin{align}
&	\lim_{\rho\to 0}\frac{P_b^\star}{\frac{\rho}{\delta-1}}=1,\\
&\lim_{\rho\to 0}\frac{P_d^\star}{\frac{\rho(\delta-1)}{\delta}}=1,\\
&\lim_{\rho\to 0} \frac{\overline{\rm SINAD}_{{\rm lb}}^\star}{\frac{\rho}{\sigma^2}}=1,\\
& \lim_{\rho\to 0}\frac{P_e^\star}{\frac{1}{2}-\frac{1}{\sqrt{2\pi}}\frac{\sqrt{\rho}}{\sqrt{\rho(\delta-1)+\sigma^2\delta^2}}}=1.
\end{align}
\end{enumerate}
\label{th:power_control_zero}
\end{theorem}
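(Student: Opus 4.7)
The key observation driving the proof is that after the natural rescaling $\tau=\sqrt{\rho}\,\tilde{\tau}$, $\beta=\sqrt{\rho}\,\tilde{\beta}$, the small-$\rho$ limit of the max-min problem \eqref{eq:sec} becomes formally identical to the large-$P$ limit that was already analyzed in Theorem \ref{Theo_rzf}. Indeed, with this substitution $\alpha=(1/\tilde{\tau}+2\lambda/\tilde{\beta})/\sqrt{\rho}$, so that $\sqrt{P}\alpha\sim\sqrt{P/\rho}\to\infty$, and therefore the Gaussian tail quantity $\mathbb{E}[(H-\sqrt{P}\alpha)^2\mathbf{1}_{\{H\geq\sqrt{P}\alpha\}}]$ vanishes super-polynomially. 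Dividing the objective by $\rho$ gives
\begin{equation*}
\frac{\overline{\phi}(\rho)}{\rho}=\max_{\tilde{\beta}\geq 0}\min_{\tilde{\tau}\geq 0}\;\frac{\tilde{\tau}\tilde{\beta}\delta}{2}+\frac{\tilde{\beta}}{2\tilde{\tau}}-\frac{\tilde{\beta}^{2}}{4}-\frac{\tilde{\beta}}{2\tilde{\alpha}}+o_{\rho}(1),
\end{equation*}
with $\tilde{\alpha}=1/\tilde{\tau}+2\lambda/\tilde{\beta}$ and the $o_{\rho}(1)$ uniform on compact subsets of $(\tilde{\tau},\tilde{\beta})$ bounded away from $0$ and $\infty$. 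The limiting problem is exactly the one whose saddle point is computed in the proof of Theorem \ref{Theo_rzf}; its unique solution is $\bigl(\tilde{\tau}_{\infty},\tilde{\beta}_{\infty}\bigr)=\bigl(1/\sqrt{\delta-1/(1+\lambda s^{\star})^{2}},\,2/[s^{\star}\sqrt{\delta-1/(1+\lambda s^{\star})^{2}}]\bigr)$.

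\textbf{Convergence of the rescaled saddle point.} Starting from the first-order optimality conditions of \eqref{eq:sec} (as in Lemma \ref{eq:lemma_asym}), I would first derive a priori bounds showing that both $\tau^{\star}(\rho)/\sqrt{\rho}$ and $\beta^{\star}(\rho)/\sqrt{\rho}$ remain in a fixed compact subset of $(0,\infty)$ for all $\rho$ small enough. With such localization, a standard stability-of-saddle-points argument (uniform convergence of a convex--concave objective on a compact rectangle, together with the strict convexity in $\tau$ and strict concavity in $\beta$ inherited from the original problem, or equivalently the implicit function theorem applied to the smooth first-order conditions) yields $(\tilde{\tau}^{\star}(\rho),\tilde{\beta}^{\star}(\rho))\to(\tilde{\tau}_{\infty},\tilde{\beta}_{\infty})$, which is precisely \eqref{eq:tau_rho}--\eqref{eq:beta_rho}. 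For Part 2, setting $\lambda=0$ throughout (so $\tilde{\alpha}=1/\tilde{\tau}$) reduces the limiting problem to the ZF case of Theorem \ref{Theo_rzf} and gives \eqref{eq:tau_rho1}--\eqref{eq:beta_rho1} by the same argument.

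\textbf{Performance metric limits.} The convergences for $P_b^{\star}$ and $P_d^{\star}$ follow from Corollary \ref{cor:dist_pb} by substituting the scalings $\tau^{\star}=\sqrt{\rho}\tilde{\tau}^{\star}$ and $\beta^{\star}=\sqrt{\rho}\tilde{\beta}^{\star}$ and using the identity $\delta-1/(1+\lambda s^{\star})^{2}=[\delta(1+\lambda s^{\star})^{2}-1]/(1+\lambda s^{\star})^{2}$. The SINR lower-bound limit \eqref{r0_snr} is immediate: since $P_{d}^{\star}=O(\rho)$, Corollary \ref{cor:lbub} gives $\overline{\mathrm{SINR}}_{\rm lb}^{\star}=\rho/(P_{d}^{\star}+\sigma^{2})\sim\rho/\sigma^{2}$. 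For the bit error probability, the argument of the $Q$-function in Corollary 3 is of order $\sqrt{\rho}$ (numerator $\sim\sqrt{\rho}(\delta s^{\star}-1)/(s^{\star}\delta)$ and the inner variance inside the denominator converges to $\sigma^{2}$), so the first-order Taylor expansion $Q(x)=\tfrac{1}{2}-x/\sqrt{2\pi}+O(x^{3})$ around $x=0$ yields the claimed equivalent, after a short algebraic simplification. The case $\lambda=0$, $\delta>1$ is handled identically.

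\textbf{Main obstacle.} The most delicate step is not the algebra of the limits but justifying the convergence of the rescaled saddle point itself. The objective \eqref{eq:sec} is not jointly convex-concave on all of $(0,\infty)^{2}$, so one must first exclude degeneration of $\tilde{\tau}^{\star}(\rho)$ or $\tilde{\beta}^{\star}(\rho)$ toward $0$ or $\infty$ before invoking continuity. I anticipate this will be handled by carefully exploiting the stationary equations from Lemma \ref{eq:lemma_asym} to produce uniform-in-$\rho$ two-sided bounds on the rescaled variables, analogous to the bounds used in the proof of Theorem \ref{Theo_rzf} to control the $P\to\infty$ limit.
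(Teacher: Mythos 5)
Your proposal is correct and follows essentially the same route as the paper: the rescaling $\tau'=\tau/\sqrt{\rho}$, $\beta'=\beta/\sqrt{\rho}$, the observation that $\frac{1}{\rho}Y(\beta'\sqrt{\rho},\tau'\sqrt{\rho})\to-\frac{\beta'}{2\alpha'}$ so the limiting max-min is exactly the $P\to\infty$ problem of Theorem \ref{Theo_rzf}, and the Taylor expansion of $Q$ for $P_e^\star$ all match. The only cosmetic difference is how the saddle-point convergence is pinned down: where you propose a priori compactness plus a stability argument, the paper uses the exact bound $Y\geq-\beta/(2\alpha)$ to get a $\rho$-independent lower bound equal to the limiting problem, and sandwiches with an upper bound at $\tau'=\overline{\tau}'$ via Lemma 10 of \cite{mesti}.
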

\begin{proof}
See Section \ref{Proof_rho0}.
\end{proof}

\begin{figure*}
	\centering
	\includegraphics[width=6 in]{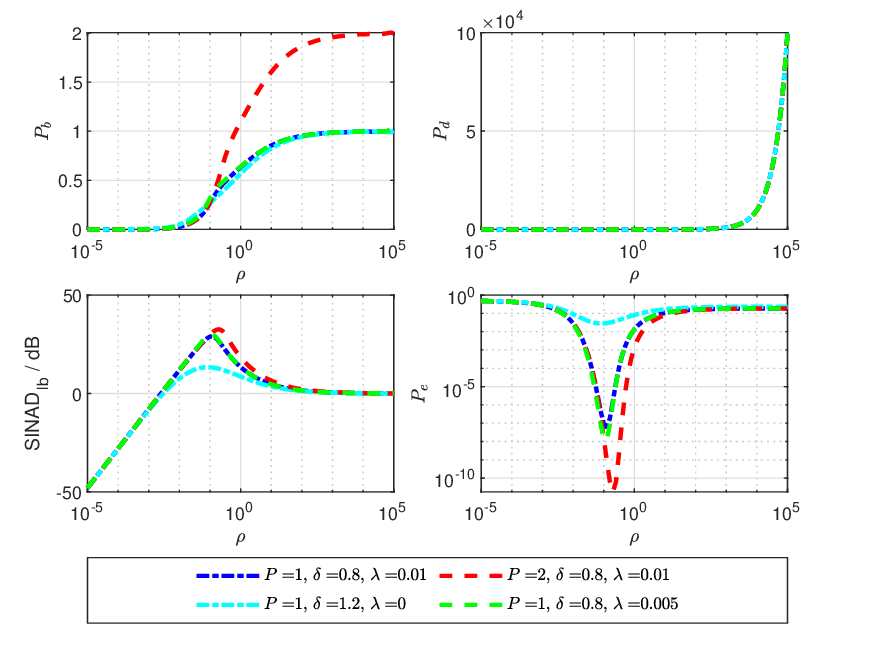}
	\caption{The per-antenna power, distortion power, SINAD lower bound and bit error probability versus $\rho$ for $\sigma=0.05$ and different  parameter combinations.}
	\label{fig7}
\end{figure*}

\begin{theorem}[Power control parameter $\rho\to\infty$]
For a given value of $\rho$, denote by $\tau^\star(\rho)$ and $\beta^\star(\rho)$ the solutions to the max-min problem in \eqref{eq:sec}. Then, as $\rho\to \infty$, the following approximations hold true:
\begin{align}
\tau^\star(\rho)&=\sqrt{\frac{\rho}{\delta}}+\frac{P}{2\sqrt{\delta\rho}}+O\left(\frac{1}{\rho}\right),\\
\beta^\star(\rho)&=2\sqrt{\rho\delta}-2\frac{\sqrt{2P}}{\sqrt{\pi}}+O\left(\frac{1}{\sqrt{\rho}}\right).
\end{align}
Moreover, in this regime, the asymptotic values for the per-antenna power, the distortion power, the SINAD lower bound $\overline{\rm SINAD}_{{\rm lb}}^{\star}$, and the bit error probability can be approximated as:
\begin{align}
	P_b^\star&= P + O\left(\frac{1}{\sqrt{\rho}}\right),\label{eq:P_b_rho_inf}\\
 P_d^\star&= \rho-2\sqrt{\frac{2P\rho}{\pi\delta}}+O(1), \\ 
 \overline{\rm SINAD}_{\rm lb}^\star&=1+2\sqrt{\frac{2P}{\pi\delta\rho}}+O\left(\frac{1}{\rho}\right),\\P_e^\star &= Q\left(\sqrt{\frac{2P}{\pi\delta(P+\sigma^2)}}\right) +O\left(\frac{1}{\sqrt{\rho}}\right).
  \label{eq:SINR_rho_inf}
\end{align}
\label{th:limit_rho}
\end{theorem}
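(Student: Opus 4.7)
The plan is to reduce \eqref{eq:sec} to an effective max-min problem by exploiting the fact that $\alpha^\star\to 0$ in the regime $\rho\to\infty$. First I would observe that inspection of the cost $\tfrac{\tau\beta\delta}{2} + \tfrac{\rho\beta}{2\tau} - \tfrac{\beta^2}{4} + Y(\beta,\tau)$ suggests the scaling $\tau^\star\asymp\sqrt{\rho/\delta}$ and $\beta^\star\asymp 2\sqrt{\rho\delta}$, since at leading order the $\tau$-minimization balances $\tau\beta\delta/2$ against $\rho\beta/(2\tau)$ while $Y$ only contributes to lower order. This scaling can be confirmed a posteriori from the first-order conditions in Lemma \ref{eq:lemma_asym}; crucially, $\alpha^\star = 1/\tau^\star + 2\lambda/\beta^\star = O(1/\sqrt{\rho})\to 0$, so that the truncation level $\sqrt{P}\alpha^\star$ appearing inside $Y$ shrinks to zero.

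Next I would Taylor-expand $Y$ about $\alpha = 0$. Setting $a=\sqrt{P}\alpha$ and using $\int_a^\infty h\phi(h)\,dh=\phi(a)$ together with Taylor expansions of $\phi$ and of the Gaussian CDF, an elementary computation yields
\begin{equation*}
\mathbb{E}\bigl[(H-a)^2\mathbf{1}_{\{H\geq a\}}\bigr] - \tfrac{1}{2} = -a\sqrt{\tfrac{2}{\pi}} + \tfrac{a^2}{2} + O(a^3),
\end{equation*}
so that
\begin{equation*}
Y(\beta,\tau) = -\beta\sqrt{\tfrac{2P}{\pi}} + \tfrac{\beta P}{2\tau} + \lambda P + O(\beta\alpha^2).
\end{equation*}
Substituting back into \eqref{eq:sec}, the effective objective at leading order becomes
\begin{equation*}
\mathcal{L}(\beta,\tau) = \tfrac{\tau\beta\delta}{2} + \tfrac{\beta(\rho+P)}{2\tau} - \tfrac{\beta^2}{4} - \beta\sqrt{\tfrac{2P}{\pi}},
\end{equation*}
in which the $\lambda$-dependence has dropped into an additive constant (consistent with the $\lambda$-free final answer). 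Solving $\partial_\tau\mathcal{L}=0$ and $\partial_\beta\mathcal{L}=0$ gives $\tau=\sqrt{(\rho+P)/\delta}$ and $\beta=2\sqrt{\delta(\rho+P)}-2\sqrt{2P/\pi}$, whose Taylor expansions in $1/\rho$ match the announced asymptotics for $\tau^\star(\rho)$ and $\beta^\star(\rho)$.

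The formulas for $P_b^\star$, $P_d^\star$ and $\overline{\rm SINR}_{\rm lb}^\star$ then follow by direct substitution into Corollary \ref{cor:dist_pb} and Corollary \ref{cor:lbub}. For $P_e^\star$, the numerator $\sqrt{\rho}\bigl(1-\beta^\star/(2\tau^\star\delta)\bigr)$ of the $Q$-function argument is $O(1)$ thanks to a precise cancellation at leading order ($\beta^\star/2$ and $\tau^\star\delta$ both coincide with $\sqrt{\rho\delta}$ to first order); extracting the next-order term produces a numerator $\sqrt{2P/(\pi\delta)}+O(1/\sqrt{\rho})$ and a denominator $\sqrt{P+\sigma^2}+O(1/\sqrt{\rho})$, from which the stated expression for $P_e^\star$ is immediate.

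The main obstacle is the rigorous control of the remainder introduced when $Y$ is replaced by its Taylor expansion: the $O(\beta\alpha^2) = O(1/\sqrt{\rho})$ additive error in the objective must be shown to shift the saddle point by precisely $O(1/\rho)$ in $\tau$ and $O(1/\sqrt{\rho})$ in $\beta$, not merely by $o(1)$. I would handle this by rescaling $\tau = \sqrt{\rho/\delta}\,u$ and $\beta = \sqrt{\rho\delta}\,v$ so that the leading-order saddle point lies at a fixed point $(u,v)$ and the rescaled cost converges to a finite limit with non-degenerate Hessian; the implicit function theorem then propagates the perturbation through the inverse Jacobian to yield the claimed error rates.
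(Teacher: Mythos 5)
Your proposal is correct and reaches all the stated expansions, but the refinement step is carried out by a genuinely different route from the paper's. Both arguments begin identically: rescale $\tau=\sqrt{\rho/\delta}\,u$, $\beta=\sqrt{\rho\delta}\,v$, normalize the cost, show the $Y$-contribution vanishes uniformly, and conclude that the saddle point converges to $(u,v)=(1,2)$, i.e.\ $\tau^\star\sim\sqrt{\rho/\delta}$ and $\beta^\star\sim 2\sqrt{\rho\delta}$. The divergence is in how the next-order terms are extracted. You Taylor-expand $Y$ inside the \emph{objective}, obtaining the effective cost $\frac{\tau\beta\delta}{2}+\frac{\beta(\rho+P)}{2\tau}-\frac{\beta^2}{4}-\beta\sqrt{2P/\pi}+\lambda P$, solve its saddle exactly, and then must show that the $O(\beta\alpha^2)$ remainder perturbs the saddle by only $O(1/\rho)$ in $\tau$ and $O(1/\sqrt{\rho})$ in $\beta$ --- the obstacle you correctly identify, and which your implicit-function-theorem plan can handle (it helps that the limiting Hessian is diagonal at $(1,2)$, so the $v$-only perturbation $-v\sqrt{2P/(\pi\delta\rho)}$ does not move $u$ at first order, consistent with the absence of an $O(1)$ correction to $\tau^\star$). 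The paper instead sidesteps this propagation issue entirely by working with the \emph{exact} first-order optimality conditions: it Taylor-expands the Gaussian integrals $f_1,f_2$ in the fixed-point equation \eqref{eq:tau_r} around $\alpha^\star=O(1/\sqrt{\rho})$ to get $(\tau^\star)^2\delta-\rho=P+O(1/\sqrt{\rho})$ directly, and uses the stationarity condition in $\beta$ together with $\frac{\partial Y}{\partial\beta}=-\sqrt{2P/\pi}+O(1/\sqrt{\rho})$ to refine $\beta^\star$. The paper's route buys a shorter and more mechanical error control (one only expands known functions of a quantity already known to be small); your route buys a self-contained effective variational problem whose exact solution $\tau=\sqrt{(\rho+P)/\delta}$, $\beta=2\sqrt{\delta(\rho+P)}-2\sqrt{2P/\pi}$ makes the structure of the corrections more transparent, at the cost of the extra saddle-point perturbation lemma. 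Your derivations of the four performance-metric expansions from the $(\tau^\star,\beta^\star)$ asymptotics, including the leading-order cancellation in the numerator of the $Q$-function argument for $P_e^\star$, match the paper's.
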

\begin{proof}
	See Section \ref{app:limit_rho}.
\end{proof}

Theorem \ref{th:power_control_zero} and Theorem \ref{th:limit_rho} allow us to shed light on the behavior of the limited PAPR  precoder when the control parameter $\rho$ goes to either zero or infinity. As an interesting remark, we note that in the case where $\rho\to 0$, the performance becomes independent of $P$ but dependent on the regularization parameter $\lambda$. In this case, we claim that the limited PAPR  precoder becomes close to the RZF. This is because as $\rho\to 0$, the entries of the RZF precoder tend to zero and thus the RZF becomes feasible with respect to the optimization problem in \eqref{eq:nonlse}.  On the other hand, when $\rho\to \infty$, the performance depends on $P$ but not on the regularization parameter. To explain such  behavior, we argue that, in this case, the limited PAPR  precoder becomes close to the non-linear least squares (LS) precoder given by:
\begin{equation}
\hat{\bf x}_{{\rm LS}}=\arg\min_{-\sqrt{P}\leq x_i \leq \sqrt{P}}\|{\bf Hx}-\sqrt{\rho}{\bf s}\|^2. \label{eq:LS}
\end{equation}
To see this, it suffices to note that for all ${\bf x}$, the term $ \|{\bf Hx}-\sqrt{\rho}{\bf s}\|^2$ becomes dominant in the minimization of \eqref{eq:nonlse} since for all feasible ${\bf x}$,
\begin{equation*}
 \|{\bf Hx}-\sqrt{\rho}{\bf s}\|^2\!\geq\! \|{\bf H}\hat{\bf x}_{\rm ZF}-\sqrt{\rho}{\bf s}\|^2\!=\!\rho\left\|({\bf I}-{\bf H}({\bf H}^{T}{\bf H})^{-1}{\bf H}^{T}){\bf s}\right\|^2
\end{equation*}
while the second term $\lambda\|{\bf x}\|^2$ remains bounded by $Pn$ as $\rho$ grows large.

\begin{figure*} 
	\centering
	\includegraphics[width=6 in]{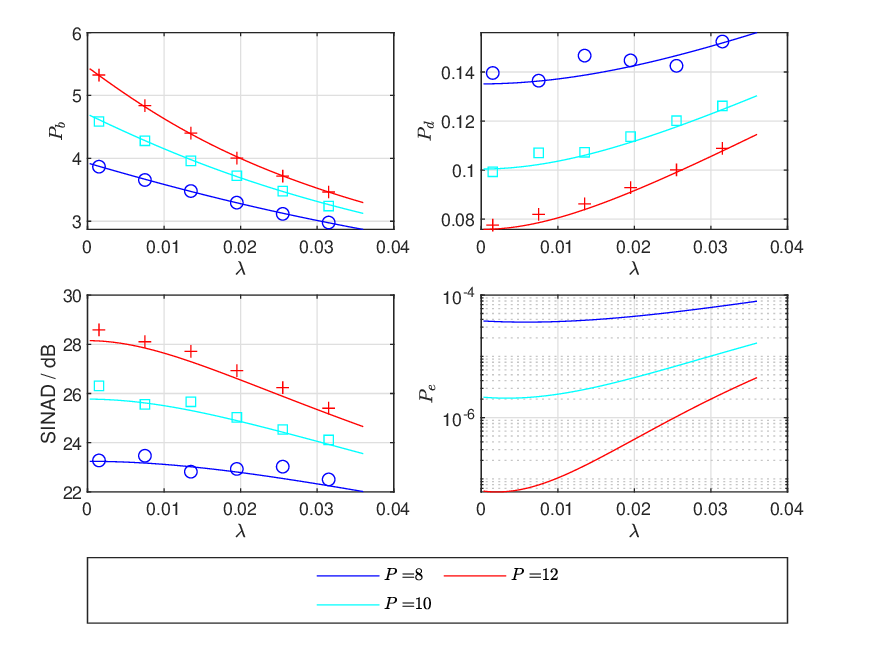}
	\caption{Impact of the regularization parameter $\lambda$ for different $P$ values. $n=512$, $\delta=0.84$, $\rho=2$ and $\sigma=0.05$. (Markers show the simulated results
averaged over 50 realizations of random quantities $\mathbf{H}$, $\mathbf{s}$ and $\mathbf{z}$. We do not show simulated results for $P_e$ because they require too many runs to simulate such a low $P_e$. )}
	\label{fig2}
\end{figure*}

By combining the observations in both regimes ($\rho\to 0$ and $\rho\to\infty)$, we get a more precise idea of the role of the control parameter $\rho$ on the per-antenna power $P_b^\star$.
 Setting $\rho$ to small values makes the per-antenna power close to zero  while using  large values for $\rho$ leads the precoder to use the maximum allowed power at each antenna. Such behavior is illustrated in Figure \ref{fig1}, which plots $P_b^\star$ against $\rho$ for several values of $P$. As can be seen, by varying $\rho$, the  per-antenna  power varies accordingly, becoming small for small $\rho$ values and close to the maximum allowed power for very large $\rho$ values. Note that, unlike RZF and ZF, the value of $\rho$ that achieves a fixed asymptotic per-antenna power $P_b^\star$ can not be determined in an explicit form. In this respect, when it comes to comparing precoders, it is necessary to  require the same $P_b^\star$ value. This can be done for each precoder by using the value of $\rho$ that achieves the target $P_b^\star$. A plot like the one in Figure \ref{fig1} can be used to determine numerically the corresponding values of the power control parameter.

  By expanding  $\|{\bf Hx}-\sqrt{\rho}{\bf s}\|^2$ in \eqref{eq:LS} and neglecting the quantities independent of $\rho$ or ${\bf x}$, we may claim that for large $\rho$ values,  the precoder in \eqref{eq:LS} would be  close to the one bit-precoding $\tilde{\bf x}:=\sqrt{P}{\rm sign}({\bf H}^{T}{\bf s})$. Such a finding, although making sense, calls into question the main interest behind solving the optimization problem in \eqref{eq:nonlse} to obtain the limited PAPR precoder. If for $\rho\to\infty$, its behavior would be equivalent to the precoder $\tilde{\bf x}$ which uses the maximum allowed power, one can rightly think that it should be less complex and more efficient to use $\tilde{\bf x}$ rather than solving the involved problem in \eqref{eq:nonlse}. Such a conclusion would be correct if more power necessarily implies better performance. As evidenced later in the simulation section, it is possible for a precoder using a lower per-antenna power to perform better than the one-bit precoding scheme given by $\tilde{\bf x}$ (See  Figure \ref{fig8} and Figure \ref{fig9}). 
  
\noindent{\bf Numerical illustration.} Figure \ref{fig7} plots the theoretical values for all studied specifications and performance metrics against $\rho$.  In agreement with the results of Figure \ref{fig1}, the per-antenna power is an increasing function of $\rho$, approaching $P$ when $\rho$ tends to infinity. However, when $\rho$ becomes very large, the distortion power increases, resulting in the saturation of the SINAD and the bit error probability. Interestingly, there is an optimal finite $\rho$, and hence an optimal $P_b^\star$ for which the performances in terms of SINAD and bit error probability are maximized. 

\section{Numerical simulations}
\label{sec4}
In this section, we numerically investigate the performance of the limited PAPR  precoders under different settings. We study the following specifications and performance metrics: the average per-user SINAD defined in \eqref{eq:SINR_average}, the average per-antenna power, the average per-user distortion  power, and the bit error probability. We compare the results with the theoretical predictions derived in Section \ref{sec3}. 
In all the figures below,  solid lines  represent the theoretical predictions, while markers  show the simulated results averaged over $50$ realizations of random quantities $\mathbf{H}$, $\mathbf{s}$ and $\mathbf{z}$. 

\subsection{Impact of the regularization parameter $\lambda$}

Figure \ref{fig2} illustrates the behavior of various specifications and performance metrics with a varying regularization parameter for $P=8,10$ and $12$. We note that the per-antenna power decreases with $\lambda$, since a large $\lambda$ value would penalize the term $\|{\bf x}\|^2$ in \eqref{eq:nonlse} more. However, as it can be seen from the plot of the bit error probability, setting $\lambda$ to smaller values does not  always translate into better performance. Indeed, a non-zero optimal value of $\lambda$ exists, which is small for a large $P$, but becomes larger as $P$ decreases. This shows that regularization is more important when $P$ is small to compensate for the bias caused by restricting the per-antenna power of the precoded vector. In a second experiment, we investigate whether this behavior still holds when all precoders have the same average per-antenna power. For that, we fix $P$ at $50$, and then use our asymptotic results to tune the power control factor $\rho$ to the value achieving the target $P_b^\star$. Figure \ref{fig6} illustrates the obtained performances in terms of  distortion power, the average per-user SINAD, and the bit error probability. Similar to Figure \ref{fig2}, we note that the optimal regularization parameter for low $P_b^\star$ is larger than that for higher $P_b^\star$.

\begin{figure} 
	\centering
	\includegraphics[width=3 in]{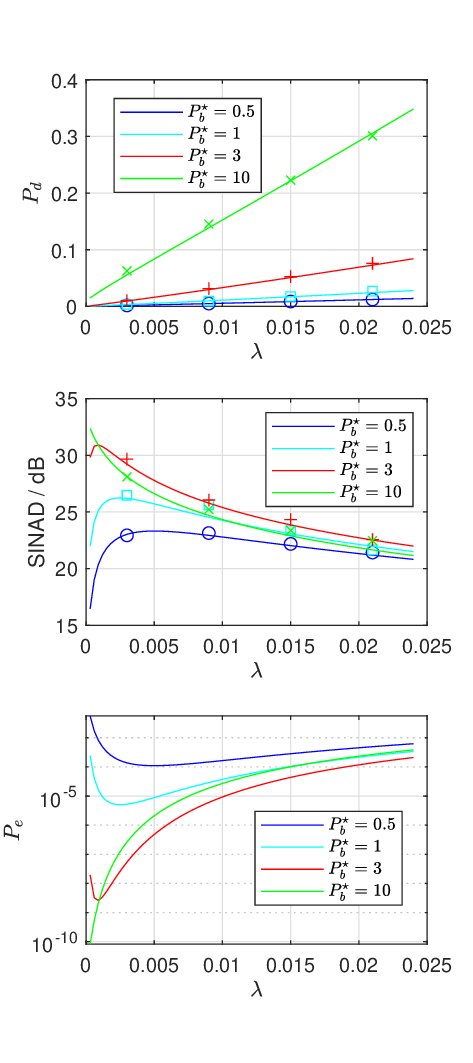}
	\caption{Impact of the regularization parameter $\lambda$ for different $P_b^\star$ values. $P=50$, $n=256$, $\delta=1$ and $\sigma=0.05$. Herein, the value of $\rho$ is tuned for each setting to achieve the target $P_b^\star$. We do not show simulated results for $P_e$ because they require too many runs to simulate such a low $P_e$.}
	\label{fig6}
\end{figure}

\subsection{Impact of the number of users to the number of antennas ratio ($\delta$)}

In Figure \ref{fig5}, we investigate the impact of the number of users to the number of antennas ratio $\delta$ on the performance of the limited PAPR precoder. As in Figure \ref{fig6}, for each plot, we leverage our asymptotic analysis to set the power control parameter $\rho$ at the value  ensuring the target asymptotic per-antenna power $P_b^\star$. As expected, we note that the power distortion increases with $\delta$. This is because a higher $\delta$ translates into serving more users and thus causes higher distortion error levels. However, it is curious to note that the distortion error reaches very high levels  as $P_b^\star$ becomes of an order of magnitude of $P$. 
To explain this, we refer to the findings of Theorem \ref{th:limit_rho} and Figure \ref{fig1}, which suggest that a higher value of $\rho$ is required to reach higher values of $P_b^\star$. But, when $\rho$ is large, the distortion error automatically increases as it becomes difficult to approximate $\sqrt{\rho}{\bf s}$ by ${\bf Hx}$ when ${\bf x}$ is constrained to a compact set. An important consequence of this behavior is that the SINAD performance and the bit error probability do not always improve by increasing the average per-antenna power $P_b^\star$. As shown in Figure \ref{fig5}, the performance is worse for $P_b^\star=10$ than for $P_b^\star=3$. This is because,  for $P_b^\star=10$, the higher transmit power could not compensate for the higher distortion caused by using a higher value for $\rho$. 
%This can be explained by the fact that as $P_b^\star$  approaches $P$, 

\begin{figure} 
	\centering
	\includegraphics[width=3 in]{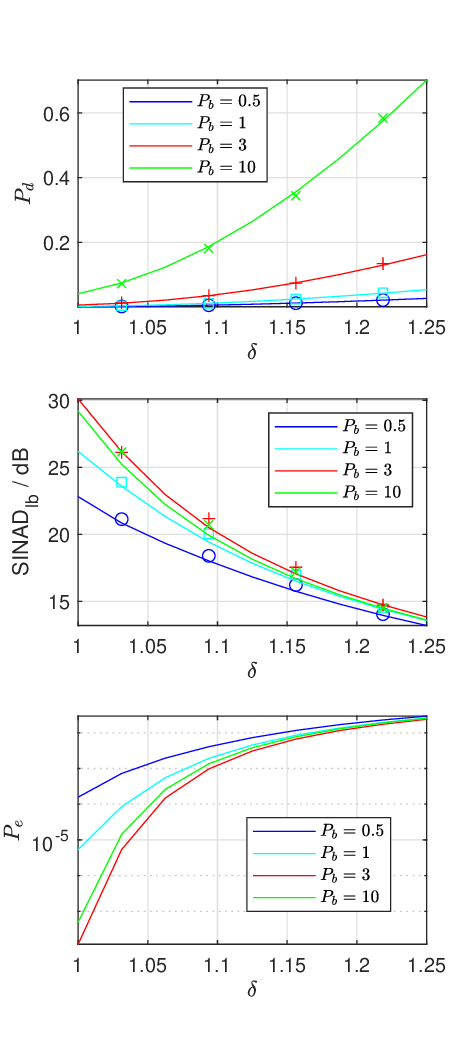}
	\caption{Impact of $\delta$ for different $P_b^\star$ values. $P=50$ $n=256$, $\lambda=0.002$ and $\sigma=0.05$. The value of $\rho$ is tuned for each plot to achieve the target $P_b^\star$. We do not show simulated results for $P_e$ because they require too many runs to simulate such a low $P_e$.}
	\label{fig5}
\end{figure}

\begin{figure} 
	\centering
	\includegraphics[width=3 in]{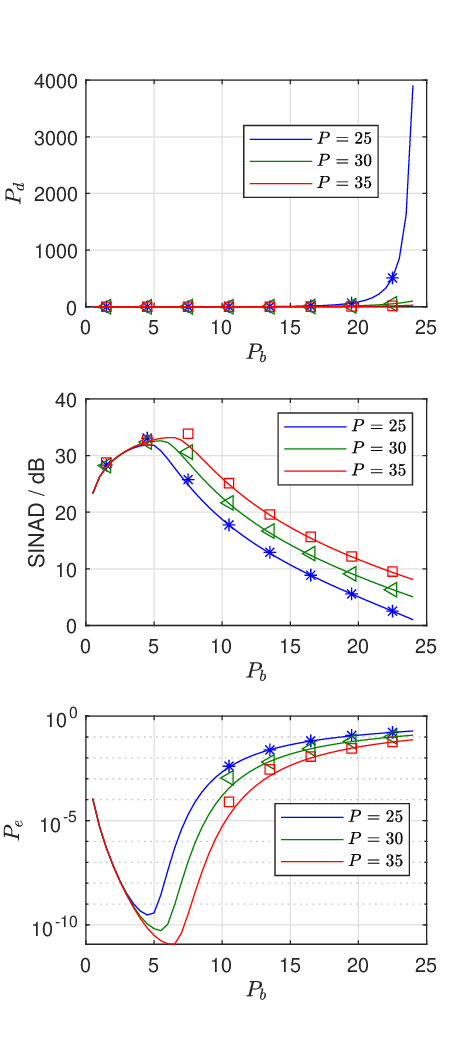}
	\caption{The distortion power, SINAD and bit error probability versus $P_b$, for different  $P$ values. $n=256$, $\delta=1$ and $\sigma=0.05$. The control parameter $\rho$ is tuned to achieve the target $P_b^\star$ and the regularization parameter $\lambda$ is set to the value that maximizes the SINAD.}
	\label{fig8}
\end{figure}

\subsection{Comparison between precoders with optimal regularization}

Figure \ref{fig8} demonstrates the performance variation with $P_b$ for $P=25, 30$, and $35$, when the regularization parameter $\lambda$ is set to the value optimizing the SINAD. As in Figure \ref{fig6} and Figure \ref{fig5}, $\rho$ is tuned to achieve the target $P_b^\star$. As an important remark, we note that there exists a $P_b^\star$ for which the performances in terms of bit error probability and SINAD are optimal. This value is below $P$. Indeed, as $P_b^\star$ approaches $P$, the distortion power significantly increases, resulting in a large performance deterioration. In a final experiment, we compare in Figure \ref{fig9}  the bit error probability performances of the limited PAPR precoder using optimal regularization with the one-bit precoding $\tilde{\bf x}_{P_b}=\sqrt{P_b^\star}{\rm sign}({\bf H}^{T}{\bf s})$. 
Complexitywise, the one-bit precoding possesses an explicit form and thus is more computationally efficient than the limited PAPR precoder which is based on solving a convex optimization problem. However, when it comes to bit error probability performance, we can easily see that the limited PAPR precoder is  more efficient for all values of $P_b^\star$. As expected from Theorem \ref{th:limit_rho}, the performance gap is small when $P_b^\star$ approaches $P$ but becomes much more pronounced when the limited PAPR precoder uses the optimal value of $P_b^\star$.
 
\begin{figure} 
	\centering
	\includegraphics[width=3 in]{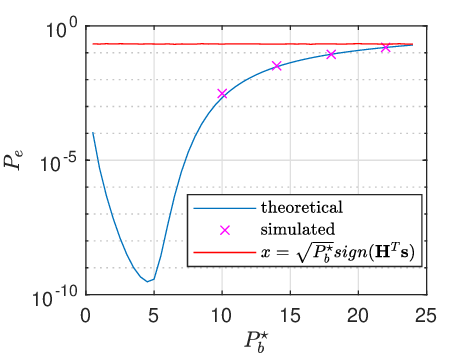}
	\caption{The bit error probability versus $P_b$ given $P=25$ ($P_b$ increased by increasing the underlying $\rho$ ). Here we assign $\lambda$ values which make optimal SINAD's, $n=256$, $\delta=1$ and $\sigma=0.05$.}
	\label{fig9}
\end{figure}

\section{Conclusion}
\label{sec9}
In this paper, we studied the asymptotic behavior of the limited PAPR precoder for multi-user communication systems in the regime in which the number of antennas and that of users grow large at the same pace. Contrary to the previous studies in \cite{papr} and \cite{glse}, we rely on the CGMT framework and present  approximations for other important performance metrics including the bit error probability and the average per user SINAD. To get more insights, we particularized our results to specific regimes in which the number of antennas is much larger than that of users, or the power control parameter takes very small or very high values. As a major outcome, our analysis demonstrates the existence of an optimal transmit power that maximizes the SINAD, and the bit error probability performances.

\section{Proof of main results}
\label{sec5}
\subsection{The CGMT framework}
In this section, we prove the main results in section~\ref{sec3}. The main technical ingredient is the CGMT.  Before delving into the technical details of the proof, we provide a brief overview of the CGMT tool. 

The CGMT is a mathematical framework that allows us to study the asymptotic behavior of high-dimensional optimization problems that can be written in the form of 
\begin{equation}
	\Phi(\mathbf{G}):=\underset{\mathbf{w}\in\mathcal{S}_{\mathbf{w}}}{\min}\underset{\mathbf{u}\in\mathcal{S}_{\mathbf{u}}}{\max}\mathbf{u}^T\mathbf{Gw}+\psi(\mathbf{w},\mathbf{u}),\label{poao_po}
\end{equation}
where $\mathbf{G}\in\mathbb{R}^{m\times n}$ is a standard Gaussian matrix, $\psi$ is a real-valued  function possibly random but independent of ${\bf G}$, and $\mathcal{S}_{\bf w}$ and $\mathcal{S}_{\bf u}$ are two  compact sets. The problem defined in \eqref{poao_po} is known as the primary optimization problem (PO). The CGMT infers the behavior of the PO by considering the following associated auxiliary optimization problem (AO):
 
\begin{equation}
	\phi(\mathbf{g},\mathbf{h}):=\underset{\mathbf{w}\in\mathcal{S}_{\mathbf{w}}}{\min}\underset{\mathbf{u}\in\mathcal{S}_{\mathbf{u}}}{\max}\|\mathbf{w}\|\mathbf{g}^T\mathbf{u}-\|\mathbf{u}\|\mathbf{h}^T\mathbf{w}+\psi(\mathbf{w},\mathbf{u}),\label{poao_ao}
\end{equation}
where  $\mathbf{g}\in\mathbb{R}^m$ and $\mathbf{h}\in\mathbb{R}^n$ two standard Gaussian vectors. More formally the CGMT is stated as follows: 

\begin{theorem}[CGMT]
	\label{CGMT}
	Consider the optimization problems in \eqref{poao_po} and \eqref{poao_ao}
	The following statements hold true:
	\begin{itemize}
		\item For all $t\in\mathbb{R}$,
		\begin{equation}
			\mathbb{P}\left[\Phi(\mathbf{G})\leq t\right]\leq 2\mathbb{P}\left[\phi({\bf g},{\bf h})\leq t\right]. \label{eq:ineq1}
		\end{equation}
		\item If additionally $\mathcal{S}_{\bf w}$ and $\mathcal{S}_{\bf u}$ are convex and $\psi$ is convex-concave, then for all $t\in\mathbb{R}$,
		\begin{equation}
			\mathbb{P}\left[\Phi(\mathbf{G})\geq t\right]\leq 2\mathbb{P}\left[\phi({\bf g},{\bf h})\geq t\right]. \label{eq:ineq2}
		\end{equation}
		Particularly, for any $\nu\geq 0$ it holds that:
		\begin{equation*}
		\mathbb{P}\left[\left|\Phi(\mathbf{G})-\nu\right|\geq t\right]\leq 2\mathbb{P}\left[\left|\phi({\bf g},{\bf h})-\nu\right|\geq t\right].
		\end{equation*}
	\end{itemize}
\end{theorem}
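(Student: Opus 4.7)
The plan is to derive both inequalities from Gordon's Gaussian min-max theorem, which is a Slepian-type comparison between two Gaussian processes; the convex case is handled by additionally invoking Sion's minimax theorem. I would proceed in three steps.

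First, I would set up the two Gaussian processes indexed by $({\bf w}, {\bf u}) \in \mathcal{S}_{\bf w} \times \mathcal{S}_{\bf u}$, namely $X_{{\bf w},{\bf u}} := {\bf u}^T{\bf G}{\bf w}$ and $Y_{{\bf w},{\bf u}} := \|{\bf w}\|{\bf g}^T{\bf u} + \|{\bf u}\|{\bf h}^T{\bf w}$. A direct computation yields $\mathbb{E}[X_{w_1,u_1}X_{w_2,u_2}] = ({\bf u}_1^T{\bf u}_2)({\bf w}_1^T{\bf w}_2)$ and $\mathbb{E}[Y_{w_1,u_1}Y_{w_2,u_2}] = \|{\bf w}_1\|\|{\bf w}_2\|{\bf u}_1^T{\bf u}_2 + \|{\bf u}_1\|\|{\bf u}_2\|{\bf w}_1^T{\bf w}_2$. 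Since Gordon's theorem requires equal variances, I would augment $X$ into $\widetilde{X}_{{\bf w},{\bf u}} := {\bf u}^T{\bf G}{\bf w} + \|{\bf w}\|\|{\bf u}\|Z$ with $Z \sim \mathcal{N}(0,1)$ independent of everything else. Then $\mathbb{E}[\widetilde{X}^2_{\bf w, u}] = \mathbb{E}[Y^2_{\bf w, u}] = 2\|{\bf w}\|^2\|{\bf u}\|^2$ and
\[
\mathbb{E}[\widetilde{X}_{w_1,u_1}\widetilde{X}_{w_2,u_2}] - \mathbb{E}[Y_{w_1,u_1}Y_{w_2,u_2}] = \bigl(\|{\bf u}_1\|\|{\bf u}_2\| - {\bf u}_1^T{\bf u}_2\bigr)\bigl(\|{\bf w}_1\|\|{\bf w}_2\| - {\bf w}_1^T{\bf w}_2\bigr) \geq 0,
\]
with equality whenever ${\bf w}_1 = {\bf w}_2$. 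These are exactly the covariance hypotheses of Gordon's min-max inequality.

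Second, I would invoke Gordon's min-max theorem on $\widetilde{X} + \psi$ and $Y + \psi$ over the compact set $\mathcal{S}_{\bf w}\times\mathcal{S}_{\bf u}$ (the deterministic shift by $\psi$ does not affect the Gaussian comparison), obtaining $\mathbb{P}[\min_{\bf w}\max_{\bf u}(\widetilde{X}+\psi)\geq t] \geq \mathbb{P}[\phi({\bf g},{\bf h})\geq t]$. To dispose of the auxiliary $Z$, I would split on $\{Z\geq 0\}$ and $\{Z<0\}$: on $\{Z\geq 0\}$ the map $({\bf w},{\bf u})\mapsto \widetilde{X}+\psi$ pointwise dominates $X+\psi$, while on $\{Z<0\}$ it is pointwise dominated. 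Using this together with the independence of $Z$ from $({\bf G},{\bf g},{\bf h})$ and the symmetry $Z\stackrel{d}{=}-Z$, a union-type argument yields $\mathbb{P}[\Phi({\bf G})\leq t]\leq 2\,\mathbb{P}[\phi({\bf g},{\bf h})\leq t]$; the factor $2$ is precisely the price paid for removing $Z$.

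Third, under the convexity hypotheses of part (ii), the objective $({\bf w},{\bf u})\mapsto {\bf u}^T{\bf G}{\bf w}+\psi({\bf w},{\bf u})$ is convex in ${\bf w}$ and concave in ${\bf u}$ on a product of convex compact sets, and Sion's minimax theorem gives $\Phi({\bf G}) = \max_{\bf u}\min_{\bf w}[\cdot]$; the analogous identity holds for the AO under the same rewriting. Rerunning Steps 1-2 on the max-min formulation interchanges the roles of ${\bf w}$ and ${\bf u}$ in the covariance comparison, which reverses the direction of the resulting stochastic ordering and yields the reverse bound $\mathbb{P}[\Phi({\bf G})\geq t] \leq 2\,\mathbb{P}[\phi({\bf g},{\bf h})\geq t]$. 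The concentration statement on $|\Phi({\bf G}) - \nu|$ then follows by a union bound over the two tails $\{\Phi - \nu \geq t\}$ and $\{\Phi - \nu \leq -t\}$.

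The main technical obstacle is the de-augmentation in Step 2: cleanly separating the contributions from $\{Z\geq 0\}$ and $\{Z<0\}$ while exploiting the independence of $Z$ and the symmetry of the Gaussian comparison is the only non-routine step, and it is exactly here that the constant $2$ in the statement enters. The covariance calculation is elementary, and the Sion exchange is standard given the compactness and convex-concavity hypotheses.
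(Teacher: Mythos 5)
This theorem is not proved in the paper at all: it is the imported CGMT framework, stated with citations to \cite{stojnic}, \cite{mesti} and \cite{lasso}, so there is no in-paper proof to compare against. Your proposal reconstructs the standard proof from that literature (Gordon's comparison plus an auxiliary Gaussian to equalize variances, then Sion for the convex case), and the outline is essentially correct: the covariance computation, the identification of the factor $2$ with the event $\{Z\le 0\}$ in the de-augmentation, and the union bound for the two-sided concentration statement are all the right ingredients.

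Two points need more care than your sketch gives them. First, in part (ii) Sion's theorem lets you write $\Phi(\mathbf{G})=\max_{\bf u}\min_{\bf w}[\cdot]$ for the PO, but the AO objective is \emph{not} convex--concave, so no such exchange is available on the AO side; rerunning Gordon on the max--min formulation bounds $\mathbb{P}[\Phi\ge t]$ by $2\,\mathbb{P}[\max_{\bf u}\min_{\bf w}(\mathrm{AO})\ge t]$, and you must then invoke weak duality $\max_{\bf u}\min_{\bf w}(\mathrm{AO})\le \min_{\bf w}\max_{\bf u}(\mathrm{AO})=\phi({\bf g},{\bf h})$ to obtain the stated bound in terms of $\phi$; this inclusion of events is the one genuinely non-cosmetic step your outline omits. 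Second, be careful with the roles of the two processes in Gordon's hypotheses: the condition $\mathbb{E}[X_{ij}X_{\ell k}]\le\mathbb{E}[Y_{ij}Y_{\ell k}]$ for $i\ne\ell$ forces the \emph{augmented} process $\widetilde X$ to play the role of the dominating process (Gordon's $Y$), and the bilinear process to play the role of $X$; with your labels the sign of the covariance gap you computed is consistent only under that assignment, and getting it backwards reverses the direction of the resulting probability inequality. Finally, Gordon's theorem is stated for finite index sets, so passing to the compact sets $\mathcal{S}_{\bf w}\times\mathcal{S}_{\bf u}$ requires a routine separability and continuity argument that should at least be acknowledged.
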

According to the first statement in Theorem \ref{CGMT}, 
\begin{equation*}
\mathbb{P}\left[\Phi(\mathbf{G})\leq t\right]\leq 2\mathbb{P}\left[\phi({\bf g},{\bf h})\leq t\right].
\end{equation*} 
Equivalently stated, this implies that a high-probability lower bound of the AO cost is also a high probability lower bound of the PO. Such a result holds even when the sets or the function $\psi$ are not convex. 

However, the main interest in the CGMT lies in the second statement of Theorem \ref{CGMT}, which affirms that under convexity conditions of the PO, the AO can be used to infer properties on the PO's asymptotic cost. More precisely, if for some $\nu$, the AO cost concentrates around $\nu$, so does cost of the PO.  Moreover, as shall be shown next, under appropriate strong-convexity conditions with respect to the solutions of the AO, the CGMT shows that  concentration of Lipschitz functions of the solution of the AO implies concentration of that of the PO.  

In the sequel, we make use of the CGMT framework to analyze the performance of the PAPR precoding scheme. As a first step, we express the PAPR precoding problem as a PO problem. 

\subsection{ Relating the PAPR precoding  problem to POs  }

\noindent{\bf Formulation of the POs.}
For $\mathbb{X}=[-\sqrt{P},\sqrt{P}]$, the solution of the regularized least squares problem is given by
\begin{equation}
\hat{\bf x}=\arg\min_{x_i^2\leq P} \frac{1}{n}\|{\bf Hx}-\sqrt{\rho}{\bf s}\|^2+\frac{\lambda}{n} \|{\bf x}\|^2, \label{eq:op}
\end{equation}
where compared to \eqref{eq:nonlse}, we normalized the optimization cost by $\frac{1}{n}$. 
Using the following identity:
\begin{equation*}
\|{\bf z}\|^2= \max_{{\bf u}\in \mathbb{R}^{m}} {\bf u}^{T}{\bf z}-\frac{\|{\bf u}\|^2}{4},
\end{equation*}
which holds for any vector ${\bf z}\in\mathbb{R}^{m}$, 
we can write the optimization problem in \eqref{eq:op} as
\begin{equation}
	\underset{x_i^2\le P}{\min}\underset{\mathbf{u}}{\max}\frac{\sqrt{n}\mathbf{u}^T\mathbf{Hx}}{n}-\frac{\sqrt{\rho}\mathbf{u}^T\mathbf{s}}{\sqrt{n}}-\frac{\|\mathbf{u}\|^2}{4}+\frac{\lambda\|\mathbf{x}\|^2}{n}.
	\label{afterC0}
\end{equation}
The above problem is in the form of the PO, except that the constraint set over ${\bf u}$ is not bounded. From first-order optimality conditions, we can easily check that the optimal ${\bf u}$ is given by
\begin{equation*}
{\bf u}^\star=2\left(\frac{1}{\sqrt{n}}{\bf Hx}-\frac{\sqrt{\rho}{\bf s}}{\sqrt{n}}\right).
\end{equation*}
Hence, 
\begin{equation*}
\|{\bf u}^\star\|\leq 2\sqrt{P}\left\|{\bf H}\right\| +\frac{\sqrt{\rho}\sqrt{m}}{\sqrt{n}}.
\end{equation*}
Also, using standard inequalities of the spectral norm of Gaussian matrices, we can prove that  $\|{\bf H}\|\leq \mathcal{B}$ with probability approaching $1$ for some positive constant $\mathcal{B}$. All this shows that
$
\|{\bf u}^\star\|
$ is bounded with probability approaching $1$. Thus the analysis would not thus change if we instead consider the following problem:
\begin{equation}
		\underset{x_i^2\le P}{\min}\underset{\mathbf{u}\in\mathcal{S}_{\mathbf{u}}}{\max}\frac{\sqrt{n}\mathbf{u}^T\mathbf{Hx}}{n}-\frac{\sqrt{\rho}\mathbf{u}^T\mathbf{s}}{\sqrt{n}}-\frac{\|\mathbf{u}\|^2}{4}+\frac{\lambda\|\mathbf{x}\|^2}{n},
		\label{afterC1}
\end{equation}
where $\mathcal{S}_{\bf u}=\left\{{\bf u}\in\mathbb{R}^m, \ \ \|{\bf u}\|\leq \mathcal{B}\right\}$ for some $\mathcal{B}>0$ is a high-probability upper bound on $2\|{\bf u}^\star\|$. 
Our interest is to characterize the asymptotic behavior of the solutions in ${\bf x}$ and ${\bf u}$ to \eqref{afterC1}, which  perfectly agrees with the conditions required by the CGMT. For that, 
we introduce the following cost functions:
\begin{align}
	\mathcal{C}_{\lambda,\rho}({\bf x})= \max_{{\bf u}\in\mathcal{S}_{{\bf u}}} \frac{\sqrt{n}{\bf u}^{T}{\bf Hx}}{n}  
	-\frac{\sqrt{\rho}{\bf u}^{T}{\bf s}}{\sqrt{n}}- \frac{\|{\bf u}\|^2}{4}+\frac{\lambda\|{\bf x}\|^2 }{n},\\
	\mathcal{V}_{\lambda,\rho}({\bf u})=\min_{x_i^2\leq P} \frac{\sqrt{n}{\bf u}^{T}{\bf Hx}}{n}  
	-\frac{\sqrt{\rho}{\bf u}^{T}{\bf s}}{\sqrt{n}}- \frac{\|{\bf u}\|^2}{4}+\frac{\lambda\|{\bf x}\|^2 }{n},
\end{align}
and consider the following primary problems:
\begin{align}
	\Phi_{\lambda,\rho}({\bf H})&:=\min_{x_i^2\leq P} \mathcal{C}_{\lambda,\rho}({\bf x}) \label{eq:PO1},\\
	\tilde{\Phi}_{\lambda,\rho}({\bf H})&:= \max_{{\bf u}\in\mathcal{S}_{\bf u}} \mathcal{V}_{\lambda,\rho}({\bf u}). \label{eq:PO2}
	\end{align}

Since the objective function in \eqref{afterC1} is convex in ${\bf x}$ and concave in ${\bf u}$, then
\begin{equation}
	\Phi_{\lambda,\rho}({\bf H})=\tilde{\Phi}_{\lambda,\rho}({\bf H}), \label{eq:equality}
\end{equation}
and
 the solutions $\hat{\bf x}_{\rm PO}$ and $\hat{\bf u}_{\rm PO}$  to \eqref{afterC1} are given by \footnote{Note that the objective in \eqref{afterC1} is strictly convex in ${\bf x}$ and strictly concave in ${\bf u}$. Hence, the solutions ${\bf x}$ and ${\bf u}$ are unique.}:
\begin{align}
	\hat{\bf x}_{\rm PO}:=\arg\min_{x_i^2\leq P} \mathcal{C}_{\lambda,\rho}({\bf x}) \label{eq:PO_x},\\
	\hat{\bf u}_{\rm PO}:=\arg\max_{{\bf u}\in\mathcal{S}_{\bf u}} \mathcal{V}_{\lambda,\rho}({\bf u}) \label{eq:PO_u}.
\end{align}
\noindent{\bf Formulation of the AOs.}
With the PO problems in \eqref{eq:PO1} and \eqref{eq:PO2}, we associate the following AO problems:
\begin{align}
	\phi_{\lambda,\rho}({\bf g},{\bf h})&:=\min_{x_i^2\leq P} \mathcal{L}_{\lambda,\rho}({\bf x}) \label{eq:AO1},\\
	\tilde{\phi}_{\lambda,\rho} ({\bf g},{\bf h})&=\max_{{\bf u}\in\mathcal{S}_{{\bf u}}} \mathcal{F}_{\lambda,\rho}({\bf u}), \label{eq:AO2}
	\end{align}
where $\mathcal{L}_{\lambda,\rho}({\bf x})$ and $ \mathcal{F}_{\lambda,\rho}({\bf u})$ are given by
\begin{align}
	\mathcal{L}_{\lambda,\rho}({\bf x}):=\underset{\mathbf{u}\in\mathcal{S}_{\mathbf{u}}}{\max}\frac{1}{n}\|\mathbf{x}\|\mathbf{g}^T\mathbf{u}-\frac{1}{n}\|\mathbf{u}\|\mathbf{h}^T\mathbf{x}
	-\frac{\sqrt{\rho}\mathbf{u}^T\mathbf{s}}{\sqrt{n}}-\frac{\|\mathbf{u}\|^2}{4}+\frac{\lambda\|\mathbf{x}\|^2}{n} \label{eq:L},\\
	\mathcal{F}_{\lambda,\rho}({\bf u}):=\min_{x_i^2\leq P} \frac{1}{n}\|\mathbf{x}\|\mathbf{g}^T\mathbf{u}-\frac{1}{n}\|\mathbf{u}\|\mathbf{h}^T\mathbf{x} 
	-\frac{\sqrt{\rho}\mathbf{u}^T\mathbf{s}}{\sqrt{n}}-\frac{\|\mathbf{u}\|^2}{4}+\frac{\lambda\|\mathbf{x}\|^2}{n}. \label{eq:F}
\end{align}

Similarly, we define the solutions $\hat{\bf x}^{\rm AO}$ and $\hat{\bf u}^{\rm AO}$ as
\begin{align}
	\hat{\bf x}^{\rm AO}&:= \arg\min_{\substack{{\bf x}\\ x_i^2\leq P}} \mathcal{L}_{\lambda,\rho}({\bf x}), \label{eq:AO_x}\\
	\hat{\bf u}^{\rm AO}&:= \arg\max_{{\bf u}\in\mathcal{S}_{\mathbf{u}}} \mathcal{F}_{\lambda,\rho} ({\bf u}). \label{eq:AO_u}
\end{align}
The objective of the CGMT is to prove that the properties of the solutions of the PO defined in \eqref{eq:PO_x} and \eqref{eq:PO_u} can be transferred to the solutions of the AO defined in \eqref{eq:AO_x} and \eqref{eq:AO_u}.  This can be performed by using the following inequalities which directly follow as a direct application of Theorem \ref{CGMT}. 

\begin{itemize}
	\item For all $t\in\mathbb{R}$ and any compact sets $\tilde{\mathcal{S}}_{{\bf x}}$ and $\tilde{\mathcal{S}}_{\bf u}$, 
	\begin{align}
		&\mathbb{P}\left[\min_{{\bf x}\in\tilde{\mathcal{S}}_{{\bf x}}} \mathcal{C}_{\lambda,\rho}({\bf x})\leq t\right]\leq  2\mathbb{P}\left[\min_{{\bf x}\in\tilde{\mathcal{S}}_{{\bf x}}} \mathcal{L}_{\lambda,\rho}({\bf x})\leq t\right], \label{eq:1}\\
		& \mathbb{P}\left[\max_{{\bf u}\in\tilde{\mathcal{S}}_{{\bf u}}} \mathcal{V}_{\lambda,\rho}({\bf u})\geq t\right]\leq  2\mathbb{P}\left[\max_{{\bf u}\in\tilde{\mathcal{S}}_{{\bf u}}} \mathcal{F}_{\lambda,\rho}({\bf u})\geq t\right]. \label{eq:2}
	\end{align}
	\item If  the sets $\mathcal{S}_{{\bf x}}$ and $\mathcal{S}_{\bf u}$ are convex, we have for all $t\in\mathbb{R}$, 
	\begin{align}
		&\mathbb{P}\left[\min_{{\bf x}\in\mathcal{S}_{{\bf x}}} \mathcal{C}_{\lambda,\rho}({\bf x})\geq t\right]\leq  2\mathbb{P}\left[\min_{{\bf x}\in\mathcal{S}_{{\bf x}}} \mathcal{L}_{\lambda,\rho}({\bf x})\geq t\right], \label{eq:3}\\
		& \mathbb{P}\left[\max_{{\bf u}\in\mathcal{S}_{{\bf u}}} \mathcal{V}_{\lambda,\rho}({\bf u})\leq t\right]\leq  2\mathbb{P}\left[\max_{{\bf u}\in\mathcal{S}_{{\bf u}}} \mathcal{F}_{\lambda,\rho}({\bf u})\leq t\right]. \label{eq:4}
	\end{align}
\end{itemize}

Particularly, the above inequalities can be used to prove concentrations of the optimal cost of the PO. Interestingly, they can also lead to transfer concentrations of the optimal solution of the AO to that of the PO, under some strong-convexity properties on the AO problem. %In order to properties of the solutions to the PO, properties 
It is thus a key step before delving into the technical proofs of Theorem \ref{th:previous} and Theorem \ref{th:distortion} to analyze the behavior of the AO problems in \eqref{eq:AO1} and \eqref{eq:AO2}. This is  the main purpose of the following Lemmas, the proof of which is deferred to Section \ref{sec6} and \ref{sec7} to avoid disrupting the flow of the proof. 
Particularly, Lemma \ref{eq:lemma_asym} establishes the uniqueness and the existence of the solutions to the asymptotic AO problem introduced in Theorem \ref{th:previous}. Lemma \ref{lem:technical_1} and Lemma \ref{lem:technical_2} provide the technical ingredients to study the asymptotic behavior of the  solutions $\hat{\bf x}^{\rm AO}$ and $\hat{\bf u}^{\rm AO}$ in \eqref{eq:AO_x} and \eqref{eq:AO_u}.

\begin{lemma}[Behavior of the asymptotic optimization problem]
Define $\overline{\phi}$ as the following deterministic max-min problem:
\begin{equation}
	\overline{\phi}:=\max_{\beta\geq 0}\min_{\tau\geq 0} \ \ \mathcal{D}(\beta,\tau):=\frac{\tau\beta\delta}{2}+\frac{\rho\beta}{2\tau}-\frac{\beta^2}{4} +Y(\beta,\tau), 
	\label{eq:asde1}
\end{equation}
where $Y(\beta,\tau)$ is given by
\begin{align}
Y(\beta,\tau)&:=\beta\sqrt{P}\mathbb{E}\left[\left(\sqrt{P}\alpha-2H\right){\bf 1}_{\{H\geq \sqrt{P}\alpha\}}\right]-\frac{\beta}{2\alpha} \mathbb{E}\left[H^2{\bf 1}_{\{-\sqrt{P}\alpha\leq H\leq \sqrt{P}\alpha\}}\right] \label{eq:rep1}\\
&=\frac{\beta}{\alpha}\left(\mathbb{E}\left[(H-\sqrt{P}\alpha)^2{\bf 1}_{\{H\geq \sqrt{P}\alpha\}}\right]-\frac{1}{2}\right) \label{eq:rr}
\end{align}
and \footnote{The compact expression  in \eqref{eq:rr} is easily obtained by noticing that $\mathbb{E}\left[H^2{\bf 1}_{\{-\sqrt{P}\alpha\leq H\leq \sqrt{P}\alpha\}}\right]=1-2\mathbb{E}\left[H^2{\bf 1}_{\{H\geq \sqrt{P}\alpha\}}\right]$.} $\alpha=1/\tau+2\lambda/\beta$ . Then the following statements hold:
\begin{enumerate}
	\item The function $\beta\mapsto\mathcal{D}(\beta,\tau)$ is strictly concave. 
\item The above optimization problem possesses a unique finite saddle point $(\beta^\star,\tau^\star)$ if and only if $\lambda>0$ or $\lambda=0$ and $\delta>1$. Moreover, in both cases $\beta^\star>0$. 
\item The solution $\tau^\star$ satisfies the following  equation:
\begin{align}
	(\tau^\star)^2\delta-\rho &=2P \mathbb{P}\left[H\geq \sqrt{P}\alpha^\star\right]+\frac{1}{(\alpha^\star)^2} \mathbb{E}\left[H^2{\bf 1}_{\{-\sqrt{P}\alpha^\star\leq H\leq \sqrt{P}\alpha^\star\}}\right]. \label{eq:tau_r}
\end{align}
\item Assume $\lambda=0$ and $\delta>1$. Then 
the saddle point $(\beta^\star,\tau^\star)$ of \eqref{eq:asde1} is given by:
\begin{align}
	\tau^\star&=\arg\min_{\tau\geq 0} \left(\frac{\tau\delta}{2}+\frac{\rho}{2\tau}+\tilde{Y}(\tau)\right)_{+}^2,\\
	\beta^\star&= \left({\tau^\star\delta}+\frac{\rho}{\tau^\star}+2\tilde{Y}(\tau^\star)\right)_{+},
\end{align}
and
\eqref{eq:asde1} reduces to:
\begin{equation*}
\overline{\phi}:= \min_{\tau\geq 0} \left(\frac{\tau\delta}{2}+\frac{\rho}{2\tau}+\tilde{Y}(\tau)\right)_{+}^2.
\end{equation*}
where
\begin{align}
\tilde{Y}(\tau)&=\sqrt{P}\mathbb{E}\left[\left(\frac{\sqrt{P}}{\tau}-2H\right){\bf 1}_{\{H\geq \frac{\sqrt{P}}{\tau}\}}\right]-\frac{\tau}{2}\mathbb{E}\left[H^2{\bf 1}_{\{-\frac{\sqrt{P}}{\tau}\leq H\leq \frac{\sqrt{P}}{\tau}\}}\right] \label{eq:Y_tilde_tau}\\
&= \tau \left(\mathbb{E}\left[(H-\frac{\sqrt{P}}{\tau})^2{\bf 1}_{\{H\geq \frac{\sqrt{P}}{\tau}\}}\right]-\frac{1}{2}\right).
\end{align}

Moreover, $\tau^\star$ is the unique solution to the following equation:
\begin{equation}
\tau^2\delta-\rho=2P\mathbb{P}\left[H\geq\frac{\sqrt{P}}{\tau}\right] +\tau^2 \mathbb{E}\left[H^2{\bf 1}_{\{-\frac{\sqrt{P}}{\tau}\leq H\leq \frac{\sqrt{P}}{\tau}\}}\right]. \label{eq:tau_sol}
\end{equation}
\end{enumerate}
\label{eq:lemma_asym}
\end{lemma}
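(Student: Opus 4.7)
The plan is to treat the four statements of Lemma \ref{eq:lemma_asym} in order, leveraging a single variational representation of $Y$ that will also yield the first-order conditions by the envelope theorem. The key identity I would establish at the outset is
\begin{equation*}
Y(\beta,\tau)=\mathbb{E}_{H\sim\mathcal{N}(0,1)}\left[\min_{|x|\leq\sqrt{P}}\left\{\left(\frac{\beta}{2\tau}+\lambda\right)x^{2}-\beta H x\right\}\right],
\end{equation*}
which comes from solving the inner box-constrained quadratic explicitly. The minimizer is $x^{\star}(H)=\theta(H)$ with $\alpha=1/\tau+2\lambda/\beta$, and plugging back in reproduces the piecewise formula in \eqref{eq:rep1}--\eqref{eq:rr}.

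For Part 1 the observation is immediate from the variational representation: for each fixed $H$ and $x$, the bracketed quantity is affine in $\beta$, so the infimum over $x$ is concave in $\beta$, and expectation preserves concavity. Combined with the strictly concave $-\beta^{2}/4$ and the remaining linear-in-$\beta$ terms of $\mathcal{D}$, this yields the strict concavity. For Part 3, I would invoke the envelope theorem on the variational formula to compute
\begin{equation*}
\frac{\partial Y}{\partial\tau}=-\frac{\beta}{2\tau^{2}}\,\mathbb{E}[(x^{\star})^{2}],\qquad \frac{\partial Y}{\partial\beta}=\mathbb{E}\left[\frac{(x^{\star})^{2}}{2\tau}-Hx^{\star}\right],
\end{equation*}
and then set $\partial\mathcal{D}/\partial\tau=0$ to obtain $\tau^{2}\delta-\rho=\mathbb{E}[(x^{\star})^{2}]$. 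Evaluating the right-hand side piecewise from the explicit form of $\theta$ recovers exactly \eqref{eq:tau_r}.

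For Part 4, specializing $\alpha=1/\tau$ gives $Y(\beta,\tau)=\beta\tilde{Y}(\tau)$, so
\begin{equation*}
\mathcal{D}(\beta,\tau)=\beta\,A(\tau)-\frac{\beta^{2}}{4},\qquad A(\tau):=\frac{\tau\delta}{2}+\frac{\rho}{2\tau}+\tilde{Y}(\tau).
\end{equation*}
Maximization over $\beta\geq 0$ is explicit, yielding $\beta^{\star}(\tau)=2(A(\tau))_{+}$ and a reduced objective $(A(\tau))_{+}^{2}$. The non-negativity of $\mathbb{E}[(H-\sqrt{P}/\tau)^{2}\mathbf{1}_{\{H\geq\sqrt{P}/\tau\}}]$ gives $\tilde{Y}(\tau)\geq -\tau/2$, and therefore $A(\tau)\geq \tau(\delta-1)/2+\rho/(2\tau)>0$ when $\delta>1$. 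The positive part then drops, and uniqueness of $\tau^{\star}$ follows from strict convexity of $A(\tau)^{2}$ on $(0,\infty)$, using that $\tilde{Y}$ is convex (again as an expectation of a pointwise minimum of affine functions in $\tau$) together with strict convexity of $\rho/(2\tau)$. The stationarity $A'(\tau^{\star})=0$, via the envelope identity $\tilde{Y}'(\tau)=-\mathbb{E}[(x^{\star})^{2}]/(2\tau^{2})$, becomes \eqref{eq:tau_sol}.

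Part 2, the existence/uniqueness trichotomy, is the main obstacle I anticipate. For $\lambda>0$, strict concavity in $\beta$ combined with strict convexity in $\tau$ (from a parallel envelope argument that exploits $\lambda>0$ to keep $\alpha$ bounded away from $0$) and the coercivity of $\mathcal{D}$ at the $\tau$-boundaries and as $\beta\to\infty$ delivers a unique finite saddle point via Sion's minimax theorem; the sign $\beta^{\star}>0$ is then read off from the $\beta$-stationarity equation, which cannot hold at $\beta=0$. For $\lambda=0,\ \delta>1$, Part 4 gives the conclusion directly. The only-if direction, showing failure of existence or uniqueness when $\lambda=0$ and $\delta\leq 1$, requires a sharp asymptotic analysis of $\tilde{Y}$: the expansion $\tilde{Y}(\tau)=-\sqrt{2P/\pi}+P/(2\tau)+o(1/\tau)$ as $\tau\to\infty$ forces $A(\tau)\sim\tau(\delta-1)/2$ at infinity, which is non-positive for $\delta\leq 1$, so either $A$ changes sign and $\beta^{\star}(\tau)=0$ on a whole interval (producing non-unique minimizers in $\tau$), or the geometry degenerates at the boundary case $\delta=1$. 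Handling the delicate $\delta=1$ case, where the leading coefficient vanishes and the sign of $A$ at infinity is decided by subleading corrections, is the technically hardest point of the proof.
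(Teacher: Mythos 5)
Your variational representation $Y(\beta,\tau)=\mathbb{E}\bigl[\min_{|x|\leq\sqrt{P}}\{(\frac{\beta}{2\tau}+\lambda)x^2-\beta Hx\}\bigr]$ is correct and is essentially the scalarization the paper itself relies on (it appears in the proof of Lemma~\ref{lem:technical_1} as the separable minimization over the $x_i$). Using it to get Part~1 (infimum of functions affine in $\beta$, hence concave) and Part~3 (Danskin/envelope, giving $\tau^2\delta-\rho=\mathbb{E}[\theta(H)^2]$, which expands to \eqref{eq:tau_r}) is cleaner and more explicit than the paper's ``we can readily check'' and ``follows from the first-order optimality condition,'' and your Part~4 coincides with the paper's. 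One small repair: $\tilde{Y}$ is not a ``pointwise minimum of affine functions in $\tau$'' --- the inner objective $\frac{x^2}{2\tau}-Hx$ is not affine in $\tau$. Its convexity in $\tau$ instead follows from joint convexity of the perspective-type map $(x,\tau)\mapsto x^2/(2\tau)$ on $\{\tau>0\}$ together with preservation of convexity under partial minimization over the convex set $|x|\leq\sqrt{P}$; the conclusion stands.

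The substantive gap is in your route to the ``only if'' half of Part~2. You assert that $A(\tau)\sim\tau(\delta-1)/2$ as $\tau\to\infty$, but this contradicts your own (correct) expansion $\tilde{Y}(\tau)=-\sqrt{2P/\pi}+P/(2\tau)+o(1/\tau)$: substituting it gives $A(\tau)=\frac{\tau\delta}{2}+O(1)\to+\infty$ for \emph{every} $\delta>0$, not only for $\delta>1$ (the crude bound $\tilde{Y}(\tau)\geq-\tau/2$ is far from tight at infinity, since the $-\tau/2$ is cancelled by $\tau\,\mathbb{E}[(H-\sqrt{P}/\tau)^2\mathbf{1}_{\{H\geq\sqrt{P}/\tau\}}]\to\tau/2$). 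Since $A(\tau)\to+\infty$ as $\tau\to 0^{+}$ as well because of $\rho/(2\tau)$, the minimum over $\tau$ is always attained at a finite point when $\lambda=0$; what decides degeneracy is whether $\min_{\tau}A(\tau)>0$ so that the optimal $\beta$ is positive, and that is not governed by $\delta>1$ versus $\delta\leq 1$ alone: for instance $\tilde{Y}\geq-\sqrt{2P/\pi}$ gives $\min_{\tau}A\geq\sqrt{\rho\delta}-\sqrt{2P/\pi}>0$ whenever $\rho\delta>2P/\pi$, even with $\delta<1$. So the mechanism you propose for non-existence/non-uniqueness when $\lambda=0$, $\delta\leq 1$ does not work as written and would need to be replaced. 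For calibration, the paper's own proof of Part~2 only establishes the ``if'' direction (boundedness of $\beta^\star$ and $\tau^\star$ in the two admissible cases, plus $\beta^\star>0$) and never addresses the converse, so you are not missing an argument the paper supplies --- but the asymptotics above should not be presented as a proof of it.
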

\begin{proof}
	See Section \ref{sec6}.
\end{proof}
\begin{lemma}
	Let $\overline{\phi}$ be the deterministic max-min problem defined in \eqref{eq:asde1}, and denote by $(\beta^\star,\tau^\star)$ its associated saddle point. Define $\overline{\bf x}^{\rm AO}$ as follows:
	\begin{equation*}
	\left[\overline{\bf x}_{\rm AO}\right]_{i}:=\left\{ 
	\begin{array}{ll}
		-\sqrt{P}, &  h_i\leq -\sqrt{P}\alpha^\star \\
		\frac{h_i}{\alpha^\star}, & -\sqrt{P} \alpha^\star\leq h_i \leq \sqrt{P} \alpha^\star\\
		\sqrt{P}, & h_i \geq \sqrt{P}\alpha^\star
	\end{array},
	\right.
	\end{equation*}
	where $\alpha^\star=1/\tau^\star+2\lambda/\beta^\star$ , $h_i$'s are independent standard Gaussian random variables.
	Then, the following statements hold true:
	\begin{enumerate}
\item  There exists a function $\hat{\mathcal{L}}_{\lambda,\rho}({\bf x})$ such that
\begin{itemize}
	\item The function ${\bf x}\mapsto \hat{\mathcal{L}}_{\lambda,\rho}({\bf x})$ is $\frac{\lambda}{n}$-strongly convex  for $\lambda>0$ and locally strongly convex in a neighborhood of $\overline{\bf x}^{\rm AO}$ when $\lambda=0$ and $\delta>1$. 
	\item The following convergences holds true:
	\begin{equation*}
	\sup_{\substack{{\bf x}\\ x_i^2\leq P}} \left|\hat{\mathcal{L}}_{\lambda,\rho}({\bf x})-\mathcal{L}_{\lambda,\rho}({\bf x})\right|\stackrel{P}{\rightarrow} 0,
	\end{equation*}
	and
	\begin{equation*}
	\min_{\substack{{\bf x}\\ x_i^2\leq P}} \hat{\mathcal{L}}_{\lambda,\rho}({\bf x})-\overline{\phi}\stackrel{P}{\rightarrow} 0. 
	\end{equation*}
\end{itemize}
\item Let $\tilde{\bf x}^{\rm AO}$ be a minimizer of $\hat{\mathcal{L}}_{\lambda,\rho}$. Then, for any $\epsilon>0$, with probability approaching $1$,
\begin{equation*}
\frac{1}{n}\|\tilde{\bf x}^{\rm AO}-\overline{\bf x}^{\rm AO}\|\leq \epsilon,
\end{equation*}
and
\begin{equation}
\hat{\mathcal{L}}_{\lambda,\rho}(\overline{\bf x}^{\rm AO}) \leq \min_{\substack{{\bf x}\\ x_i^2\leq P}}\hat{\mathcal{L}}_{\lambda,\rho}({\bf x}) +\epsilon. \label{eq:toprove}
\end{equation}
\end{enumerate}
\label{lem:technical_1}
\end{lemma}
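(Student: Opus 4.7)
The plan is to construct $\hat{\mathcal{L}}_{\lambda,\rho}$ by replacing the stochastic pieces of $\mathcal{L}_{\lambda,\rho}$ by their deterministic asymptotic equivalents (so that the resulting function becomes a smooth, strongly convex proxy), then transfer convergence of the optimal cost to convergence of the minimizer via strong convexity. First I would carry out the inner maximization over $\mathbf{u}$ in \eqref{eq:L}: writing $\mathbf{u}=\beta\mathbf{v}$ with $\|\mathbf{v}\|=1$ and aligning $\mathbf{v}$ with $\frac{\|\mathbf{x}\|}{n}\mathbf{g}-\frac{\sqrt{\rho}}{\sqrt{n}}\mathbf{s}$ gives
\begin{equation*}
\mathcal{L}_{\lambda,\rho}(\mathbf{x})=\max_{\beta\in[0,\mathcal{B}]}\beta\Bigl\|\tfrac{\|\mathbf{x}\|}{n}\mathbf{g}-\tfrac{\sqrt{\rho}}{\sqrt{n}}\mathbf{s}\Bigr\|-\tfrac{\beta\mathbf{h}^{T}\mathbf{x}}{n}-\tfrac{\beta^{2}}{4}+\tfrac{\lambda\|\mathbf{x}\|^{2}}{n}.
\end{equation*}
Expanding the squared norm and using $\|\mathbf{g}\|^{2}/n\to\delta$, $\|\mathbf{s}\|^{2}/n=\delta$, and $\mathbf{g}^{T}\mathbf{s}/n=O_{p}(1/\sqrt{n})$ (all uniform in $\mathbf{x}$ since $\|\mathbf{x}\|/\sqrt{n}\leq\sqrt{P}$) motivates the definition
\begin{equation*}
\hat{\mathcal{L}}_{\lambda,\rho}(\mathbf{x}):=\max_{\beta\in[0,\mathcal{B}]}\beta\sqrt{\delta(\|\mathbf{x}\|^{2}/n+\rho)}-\tfrac{\beta\mathbf{h}^{T}\mathbf{x}}{n}-\tfrac{\beta^{2}}{4}+\tfrac{\lambda\|\mathbf{x}\|^{2}}{n}.
\end{equation*}
Standard sub-Gaussian concentration, together with the boundedness of $\beta$ and $\|\mathbf{x}\|/\sqrt{n}$, then yields $\sup_{x_{i}^{2}\leq P}|\hat{\mathcal{L}}_{\lambda,\rho}-\mathcal{L}_{\lambda,\rho}|\stackrel{P}{\to}0$.

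For strong convexity, I would note that $\mathbf{x}\mapsto\sqrt{\|\mathbf{x}\|^{2}/n+\rho}$ is convex (its Hessian has minimum eigenvalue $\rho/(n(\|\mathbf{x}\|^{2}/n+\rho)^{3/2})>0$), so $\hat{\mathcal{L}}_{\lambda,\rho}$ is a maximum over $\beta$ of convex functions plus the quadratic $\lambda\|\mathbf{x}\|^{2}/n$. For $\lambda>0$ this gives the global $\lambda/n$-strong convexity. For $\lambda=0$, $\delta>1$, strong convexity must be extracted from the $\beta\sqrt{\delta(\|\mathbf{x}\|^2/n+\rho)}$ term; using the Hessian bound above together with the fact that the optimal $\beta$ at $\overline{\mathbf{x}}^{\rm AO}$ is $\beta^{\star}>0$ (guaranteed by Lemma \ref{eq:lemma_asym}), I would obtain local strong convexity on a neighborhood of $\overline{\mathbf{x}}^{\rm AO}$. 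To compute $\min\hat{\mathcal{L}}_{\lambda,\rho}$, I would apply the identity $\sqrt{a}=\min_{\tau>0}\tfrac{a}{2\tau}+\tfrac{\tau}{2}$ with the rescaling $\tau\mapsto\delta\tau$, obtaining
\begin{equation*}
\hat{\mathcal{L}}_{\lambda,\rho}(\mathbf{x})=\max_{\beta\geq 0}\min_{\tau>0}\tfrac{\beta(\|\mathbf{x}\|^{2}/n+\rho)}{2\tau}+\tfrac{\beta\delta\tau}{2}-\tfrac{\beta\mathbf{h}^{T}\mathbf{x}}{n}-\tfrac{\beta^{2}}{4}+\tfrac{\lambda\|\mathbf{x}\|^{2}}{n}.
\end{equation*}
The integrand is convex in $\mathbf{x}$ and concave in $(\beta,\tau)$, so Sion's theorem allows swapping $\min_{\mathbf{x}}$ with $\max_{\beta}\min_{\tau}$. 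The inner minimization in $\mathbf{x}$ becomes separable with per-coordinate coefficient $\tfrac{\beta\alpha}{2}$ for $\alpha=1/\tau+2\lambda/\beta$ and closed-form minimizer $x_{i}=\mathrm{clip}(h_{i}/\alpha,\pm\sqrt{P})$; the SLLN applied to the resulting sum of i.i.d.\ pieces converges to the expectation that produces exactly $Y(\beta,\tau)$ as in \eqref{eq:rr}, hence $\min\hat{\mathcal{L}}_{\lambda,\rho}\stackrel{P}{\to}\overline{\phi}$ (uniform convergence in $(\beta,\tau)$ over a compact region follows from the strict concavity in $\beta$ and coercivity in $\tau$ established in Lemma \ref{eq:lemma_asym}).

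For statement (2), I would evaluate $\hat{\mathcal{L}}_{\lambda,\rho}(\overline{\mathbf{x}}^{\rm AO})$ directly at the feasible vector $\overline{\mathbf{x}}^{\rm AO}$: since its components are constructed precisely as the coordinate-wise minimizers at $(\beta^{\star},\tau^{\star})$, and since by Lemma \ref{eq:lemma_asym} this is a saddle point, the same SLLN argument gives $\hat{\mathcal{L}}_{\lambda,\rho}(\overline{\mathbf{x}}^{\rm AO})\stackrel{P}{\to}\overline{\phi}$, and combined with $\min\hat{\mathcal{L}}_{\lambda,\rho}\stackrel{P}{\to}\overline{\phi}$ this yields \eqref{eq:toprove}. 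Then strong convexity of $\hat{\mathcal{L}}_{\lambda,\rho}$ gives
\begin{equation*}
\tfrac{\lambda}{2n}\|\tilde{\mathbf{x}}^{\rm AO}-\overline{\mathbf{x}}^{\rm AO}\|^{2}\leq \hat{\mathcal{L}}_{\lambda,\rho}(\overline{\mathbf{x}}^{\rm AO})-\hat{\mathcal{L}}_{\lambda,\rho}(\tilde{\mathbf{x}}^{\rm AO})\stackrel{P}{\to}0,
\end{equation*}
hence $\tfrac{1}{n}\|\tilde{\mathbf{x}}^{\rm AO}-\overline{\mathbf{x}}^{\rm AO}\|\to 0$ w.p.a.1. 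The main obstacle will be the $\lambda=0,\delta>1$ case, where only local strong convexity is available. Here I would bootstrap by first showing, via an a priori bound on $\hat{\mathcal{L}}_{\lambda,\rho}(\tilde{\mathbf{x}}^{\rm AO})$ and coercivity of the limit problem, that $\tilde{\mathbf{x}}^{\rm AO}$ must lie w.p.a.1.\ in a neighborhood of $\overline{\mathbf{x}}^{\rm AO}$ on which strong convexity is valid, then close the argument as above.
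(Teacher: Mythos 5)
Your construction of $\hat{\mathcal{L}}_{\lambda,\rho}$ and the treatment of statement (1) follow essentially the same route as the paper: optimize out the direction of $\mathbf{u}$, replace the norm $\bigl\|\tfrac{\|\mathbf{x}\|}{n}\mathbf{g}-\tfrac{\sqrt{\rho}}{\sqrt{n}}\mathbf{s}\bigr\|$ by $\sqrt{\|\mathbf{x}\|^2/n+\rho(\cdot)}$ using the vanishing cross term, apply the square-root variational identity, swap via Sion, solve the separable clipped minimization, and invoke the LLN to land on $Y(\beta,\tau)$; the strong-convexity discussion (global for $\lambda>0$, local for $\lambda=0,\delta>1$ via positivity of the optimal $\beta$ at $\overline{\mathbf{x}}^{\rm AO}$, which is where $\delta>1$ enters) also matches the paper, the only cosmetic difference being that you substitute the deterministic limits $\sqrt{\delta}$ and $\rho$ for $\|\mathbf{g}\|/\sqrt{n}$ and $\rho m/\|\mathbf{g}\|^2$ while the paper keeps these random. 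Where you genuinely diverge is statement (2): the paper first proves $\tfrac{1}{n}\|\tilde{\mathbf{x}}^{\rm AO}-\overline{\mathbf{x}}^{\rm AO}\|\to 0$ \emph{directly}, by comparing the explicit clipping formulas for the two vectors and using $\hat{\alpha}\to\alpha^\star$, and only then deduces \eqref{eq:toprove}; you instead prove \eqref{eq:toprove} first (via the LLN evaluation of $\hat{\mathcal{L}}_{\lambda,\rho}(\overline{\mathbf{x}}^{\rm AO})$ and the saddle-point first-order conditions) and then extract closeness of the minimizers from strong convexity. Your ordering is the canonical Miolane-style argument and is exactly what the paper's own Lemma \ref{app:technical} is designed for — you should cite it explicitly to handle the $\lambda=0$ bootstrap, since it upgrades local strong convexity plus $\epsilon$-near-optimality of $\overline{\mathbf{x}}^{\rm AO}$ into localization of the global minimizer, which is the step you flag as the ``main obstacle.'' The paper's direct comparison of the clipping formulas buys a proof of the localization that does not depend on first establishing \eqref{eq:toprove}; your route buys a shorter argument that reuses the strong-convexity machinery already needed elsewhere, at the cost of having to verify carefully that the re-optimized $\beta$ at $\overline{\mathbf{x}}^{\rm AO}$ converges to $\beta^\star$ (which does follow from the first-order condition in $\beta$, as the paper's own proof of \eqref{eq:toprove} implicitly confirms). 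Both are sound.
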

\begin{proof}
	See Section \ref{pl2}.
\end{proof}
\begin{lemma}
	Consider the setting of Lemma \ref{lem:technical_1}. 
	Then, the following statements hold true:
	\begin{enumerate}
		\item With probability approaching $1$, we have
		\begin{equation}
		\left|\tilde{\phi}_{\lambda,\rho}({\bf g},{\bf h}) -\overline{\phi}\right|\stackrel{P}{\rightarrow} 0. \label{eq:con}
		\end{equation}
	\item Define $\tilde{\mathcal{F}}_{\lambda,\rho}({\bf u})$ as
	\begin{align}
	\tilde{\mathcal{F}}_{\lambda,\rho}({\bf u})&= \frac{1}{n} \|\overline{\bf x}^{\rm AO}\| {\bf g}^{T}{\bf u} -\frac{1}{n}\|{\bf u}\|{\bf h}^{T}\overline{\bf x}^{\rm AO} -\frac{\sqrt{\rho}}{\sqrt{n}}{\bf u}^{T}{\bf s} -\frac{\|{\bf u}\|^2}{4}+\frac{\lambda}{n}\|\overline{\bf x}^{\rm AO}\|^2,
	\end{align}
then $\tilde{\mathcal{F}}_{\lambda,\rho}$ is $\frac{1}{2}$-strongly-concave, and 
\begin{equation}
\left|\max_{{\bf u}} \mathcal{\tilde{F}}_{\lambda,\rho}({\bf u})-\overline{\phi}\right|\stackrel{P}{\rightarrow} 0. \label{eq:conv0}
\end{equation}

		\item Denote by $\tilde{\bf u}^{\rm AO}$ the maximizer of $\tilde{\mathcal{F}}_{\lambda,\rho}$:
		\begin{equation*}
		\tilde{\bf u}^{\rm AO}=\arg\max_{{\bf u}} \tilde{\mathcal{F}}_{\lambda,\rho}({\bf u}),
		\end{equation*}
		let $\overline{\bf u}^{\rm AO}$ be given by
		\begin{equation*}
		\overline{\bf u}^{\rm AO}=\frac{\beta^\star(\sqrt{(\tau_\star)^2\delta-\rho}{\bf g}-\sqrt{\rho}{\bf s})}{\tau^\star\delta\sqrt{n}},
		\end{equation*}
		then for any $\epsilon>0$, with probability approaching $1$, 
		\begin{equation}
		\|\overline{\bf u}^{\rm AO}-\tilde{\bf u}^{\rm AO}\|\leq \epsilon.
	\label{eq:conv3}	
	\end{equation}
		\end{enumerate}
	\label{lem:technical_2}
\end{lemma}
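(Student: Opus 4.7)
The plan is to dispatch the three assertions in order, exploiting (i) duality between the min-max and max-min AO formulations together with Lemma~\ref{lem:technical_1} for part~(1), (ii) an explicit closed-form maximization over ${\bf u}$ for part~(2), and (iii) the quadratic growth bound afforded by $\tfrac12$-strong concavity for part~(3). Because $\tilde{\mathcal{F}}_{\lambda,\rho}$ depends on ${\bf u}$ only through $\|{\bf u}\|$, ${\bf g}^T{\bf u}$ and ${\bf s}^T{\bf u}$, its maximum is essentially one-dimensional and can be computed explicitly; the main difficulty will be the algebraic matching of that closed-form limit with $\overline{\phi}=\mathcal{D}(\beta^\star,\tau^\star)$ via the first-order conditions from Lemma~\ref{eq:lemma_asym}.

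For part~(1), I would first observe that the AO admits a saddle point: $\mathcal{L}_{\lambda,\rho}(\cdot)$ is convex (by Lemma~\ref{lem:technical_1}, since it is uniformly close to the strongly convex $\hat{\mathcal{L}}_{\lambda,\rho}$) and $\mathcal{F}_{\lambda,\rho}(\cdot)$ is concave (as a pointwise min of concave functions), and the box $\{x_i^2\leq P\}$ and the ball $\mathcal{S}_{{\bf u}}$ are convex compact, so strong duality gives $\tilde{\phi}_{\lambda,\rho}=\phi_{\lambda,\rho}$. Combined with $\phi_{\lambda,\rho}=\min_{\bf x}\mathcal{L}_{\lambda,\rho}({\bf x})\stackrel{P}{\rightarrow}\overline{\phi}$---which follows at once from $\sup_{\bf x}|\hat{\mathcal{L}}_{\lambda,\rho}-\mathcal{L}_{\lambda,\rho}|\stackrel{P}{\rightarrow}0$ and $\min_{\bf x}\hat{\mathcal{L}}_{\lambda,\rho}\stackrel{P}{\rightarrow}\overline{\phi}$ established in Lemma~\ref{lem:technical_1}---this yields \eqref{eq:con}.

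For part~(2), the $\tfrac12$-strong concavity holds deterministically: since $\theta(h)$ has the same sign as $h$, the scalar $c:=\tfrac{1}{n}{\bf h}^T\overline{\bf x}^{\rm AO}=\tfrac{1}{n}\sum_i h_i\theta(h_i)$ is nonnegative, so $-c\|{\bf u}\|$ is concave, and the $-\tfrac{1}{4}\|{\bf u}\|^2$ term supplies $\tfrac12$-strong concavity. The convergence in \eqref{eq:conv0} is obtained by writing ${\bf u}=r\hat{\bf u}$ with $\|\hat{\bf u}\|=1$ and performing the inner maximization explicitly: after maximizing over the direction $\hat{\bf u}$ then over the radius $r\geq 0$,
\[
\max_{{\bf u}}\tilde{\mathcal{F}}_{\lambda,\rho}({\bf u})=(\|a\|-c)_+^2+\tfrac{\lambda}{n}\|\overline{\bf x}^{\rm AO}\|^2,\qquad {\bf u}^\star=2(1-c/\|a\|)_+a,
\]
with $a:=\tfrac{\|\overline{\bf x}^{\rm AO}\|}{n}{\bf g}-\tfrac{\sqrt{\rho}}{\sqrt{n}}{\bf s}$ (for large enough $\mathcal{B}$ the ball constraint ${\bf u}\in\mathcal{S}_{\bf u}$ is inactive with probability approaching $1$). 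Standard law-of-large-numbers arguments give $\tfrac{1}{n}\|\overline{\bf x}^{\rm AO}\|^2\stackrel{P}{\rightarrow}\mathbb{E}[\theta(H)^2]$, $c\stackrel{P}{\rightarrow}\mathbb{E}[H\theta(H)]$, and $\|a\|^2\stackrel{P}{\rightarrow}\delta(\mathbb{E}[\theta(H)^2]+\rho)$. Invoking \eqref{eq:tau_r} gives $\mathbb{E}[\theta(H)^2]=(\tau^\star)^2\delta-\rho$ and hence $\|a\|\rightarrow\tau^\star\delta$, while unpacking the $\beta$-stationarity condition $\partial_\beta\mathcal{D}(\beta^\star,\tau^\star)=0$ (and using the identity $\alpha^\star=1/\tau^\star+2\lambda/\beta^\star$) simplifies to $\beta^\star=2(\tau^\star\delta-\mathbb{E}[H\theta(H)])$; consequently $2(\|a\|-c)_+\stackrel{P}{\rightarrow}\beta^\star$, and a direct algebraic check then collapses the limit of $(\|a\|-c)_+^2+\lambda\|\overline{\bf x}^{\rm AO}\|^2/n$ to $\mathcal{D}(\beta^\star,\tau^\star)=\overline{\phi}$, proving \eqref{eq:conv0}.

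Part~(3) follows at once from the quadratic growth bound $\tfrac14\|\overline{\bf u}^{\rm AO}-\tilde{\bf u}^{\rm AO}\|^2\leq\max_{\bf u}\tilde{\mathcal{F}}_{\lambda,\rho}({\bf u})-\tilde{\mathcal{F}}_{\lambda,\rho}(\overline{\bf u}^{\rm AO})$ provided by $\tfrac12$-strong concavity, as soon as we verify $\tilde{\mathcal{F}}_{\lambda,\rho}(\overline{\bf u}^{\rm AO})\stackrel{P}{\rightarrow}\overline{\phi}$. For this last step, substituting the explicit form of $\overline{\bf u}^{\rm AO}$ and using $\beta^\star=2(\tau^\star\delta-\mathbb{E}[H\theta(H)])$ together with $\|a\|\stackrel{P}{\rightarrow}\tau^\star\delta$ shows that $\overline{\bf u}^{\rm AO}$ is precisely the probability limit of the explicit maximizer ${\bf u}^\star=2(1-c/\|a\|)a$, so evaluating $\tilde{\mathcal{F}}_{\lambda,\rho}$ at $\overline{\bf u}^{\rm AO}$ produces the same limit as the maximum. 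The hardest step in the whole program is the purely algebraic verification that these limits collapse to $\mathcal{D}(\beta^\star,\tau^\star)$, which requires carefully computing $\partial_\beta Y(\beta^\star,\tau^\star)$ via the identities $\alpha^\star=1/\tau^\star+2\lambda/\beta^\star$ and $\mathbb{E}[\theta(H)^2]=(\tau^\star)^2\delta-\rho$ and then recognizing the cancellations that produce $\beta^\star=2(\tau^\star\delta-\mathbb{E}[H\theta(H)])$.
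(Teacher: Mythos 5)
Your argument for part (1) has a genuine gap. You assert that $\tilde{\phi}_{\lambda,\rho}=\phi_{\lambda,\rho}$ by "strong duality", on the grounds that $\mathcal{L}_{\lambda,\rho}$ is convex and $\mathcal{F}_{\lambda,\rho}$ is concave. Neither justification holds: being uniformly close to the strongly convex $\hat{\mathcal{L}}_{\lambda,\rho}$ does not make $\mathcal{L}_{\lambda,\rho}$ convex, and $\mathcal{F}_{\lambda,\rho}$ is a pointwise minimum over a family that contains \emph{non}-concave members (for any ${\bf x}$ with ${\bf h}^{T}{\bf x}<0$ the term $-\frac{1}{n}\|{\bf u}\|{\bf h}^{T}{\bf x}$ is convex in ${\bf u}$), so the "min of concave functions" argument fails. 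More fundamentally, the AO objective in \eqref{eq:L}--\eqref{eq:F} is not convex-concave because of the terms $\frac{1}{n}\|{\bf x}\|{\bf g}^{T}{\bf u}$ and $-\frac{1}{n}\|{\bf u}\|{\bf h}^{T}{\bf x}$, so no minimax theorem applies off the shelf; this is exactly the delicate point. The weak-duality half $\tilde{\phi}_{\lambda,\rho}\leq \phi_{\lambda,\rho}$ is free, but for the matching lower bound the paper restricts to the high-probability events ${\bf h}^{T}{\bf x}\geq 0$ (which holds at the optimum of the inner minimization) and ${\bf g}^{T}{\bf u}\geq 0$ (which only shrinks the outer maximization), observes that the objective \emph{is} convex-concave on that restricted domain so Sion's theorem applies there, and then checks that the unconstrained optimal ${\bf u}$ satisfies ${\bf g}^{T}{\bf u}\geq 0$ anyway. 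Some argument of this kind is needed; as written, your part (1) does not go through.

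Parts (2) and (3) follow a different but viable route from the paper's: you compute $\max_{\bf u}\tilde{\mathcal{F}}_{\lambda,\rho}$ in closed form as $(\|a\|-c)_{+}^{2}+\frac{\lambda}{n}\|\overline{\bf x}^{\rm AO}\|^{2}$ and match its probability limit to $\mathcal{D}(\beta^\star,\tau^\star)$ via the stationarity conditions, then get \eqref{eq:conv3} from the quadratic-growth bound of $\frac{1}{2}$-strong concavity applied at $\overline{\bf u}^{\rm AO}$. The paper instead relates $\max_{\bf u}\tilde{\mathcal{F}}_{\lambda,\rho}$ to $\hat{\mathcal{L}}_{\lambda,\rho}(\tilde{\bf x}^{\rm AO})$, whose convergence to $\overline{\phi}$ is already established in Lemma \ref{lem:technical_1}, and proves \eqref{eq:conv3} by comparing the explicit maximizer $\tilde{\beta}\,a/\|a\|$ with $\overline{\bf u}^{\rm AO}$ directly; this avoids ever having to verify the identity $\beta^\star=2\bigl(\tau^\star\delta-\mathbb{E}[H\theta(H)]\bigr)$ and the collapse of $(\|a\|-c)_{+}^{2}+\lambda\mathbb{E}[\theta(H)^{2}]$ to $\mathcal{D}(\beta^\star,\tau^\star)$. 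In your version these identities carry the entire weight of both \eqref{eq:conv0} and \eqref{eq:conv3}, yet you only assert them as "a direct algebraic check" without carrying it out; you should either perform that computation (using \eqref{eq:tau_r} and the $\beta$-stationarity of $\mathcal{D}$) or, more economically, reuse the convergence $\min_{\bf x}\hat{\mathcal{L}}_{\lambda,\rho}({\bf x})\to\overline{\phi}$ and $\frac{1}{n}\|\tilde{\bf x}^{\rm AO}-\overline{\bf x}^{\rm AO}\|\to 0$ from Lemma \ref{lem:technical_1} as the paper does.
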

\begin{proof}
	See Section \ref{pl3}.
\end{proof}
\subsection{Proof of Theorem \ref{th:previous}}
\label{proof_th:previous}
With Lemma \ref{lem:technical_1} at hand, we are now ready to develop Theorem \ref{th:previous}. 
Let $f$ be a Lipschitz function. For ${\bf x}\in\mathbb{R}^{n}$ we define
\begin{equation*}
F({\bf x})=\frac{1}{n}\sum_{i=1}^n f({\bf x}_i)
\end{equation*} 
and $\kappa= \mathbb{E}_H\left[f(\theta(H))\right]$.
For any $\epsilon>0$, define the set
\begin{equation*}
\mathcal{S} = \left\{{\bf x} \in\mathbb{R}^n, x_i^2\leq P | \left|F({\bf x})-\kappa\right|\geq 2\epsilon\right\}.
\end{equation*}
Consider the following 'perturbed' PO and AO problems
\begin{equation*}
\Phi_{\mathcal{S}}({\bf H}) = \min_{{\bf x}\in\mathcal{S}} \mathcal{C}_{\lambda,\rho}({\bf x}),
\end{equation*}
and 
\begin{equation*}
\phi_{\mathcal{S}}({\bf g},{\bf h}) = \min_{{\bf x}\in\mathcal{S}} \mathcal{L}_{\lambda,\rho}({\bf x}).
\end{equation*}
 To prove the third statement of Theorem \ref{th:previous}, it suffices to show that with probability approaching $1$, 
\begin{equation}
\Phi_{S}({\bf H}) > \Phi_{\lambda,\rho}({\bf H}). \label{eq:desired}
\end{equation}
Using the CGMT, and as shown in \cite{mesti}, it suffices to find constants $\overline{\phi}_{S}$ and $\eta$ such that with probability approaching $1$, the following statements hold true:
\begin{enumerate}
	\item $\overline{\phi}_{S} \geq \overline{\phi} +3\eta$,
	\item ${\phi}_{\lambda,\rho}({\bf g},{\bf h})\leq \overline{\phi}+\eta$,
	\item $\phi_{\mathcal{S}}({\bf g},{\bf h})\geq \overline{\phi}_{S}-\eta$.
	\end{enumerate}
Indeed, if the three statements above are satisfied, then from \eqref{eq:1},
\begin{equation*}
\mathbb{P}\left[\Phi_{S}({\bf H})\leq \overline{\phi}_{S}-\eta\right] \leq 2\mathbb{P}\left[\phi_{S}({\bf g},{\bf h})\leq \overline{\phi}_{S}-\eta\right],
\end{equation*}
while from statement $3)$
\begin{equation*}
\mathbb{P}\left[\phi_{S}({\bf g},{\bf h})\leq \overline{\phi}_{S}-\eta\right]\to 0.
\end{equation*}
Hence, with probability approaching $1$,
\begin{equation*}
\Phi_{S}({\bf H})\geq \overline{\phi}_{S}-\eta,
\end{equation*}
and thus, from Statement $1)$, 
\begin{equation}
\Phi_{S}({\bf H}) \geq \overline{\phi}+2\eta .
\label{eq:re1}
\end{equation}
On the other hand, we can in a similar way use \eqref{eq:3} together with Statement $2)$, to show that with probability approaching $1$,
\begin{equation}
\Phi_{\lambda,\rho}({\bf H})\leq \overline{\phi}+\eta. \label{eq:re2}
\end{equation}
Combining \eqref{eq:re1} with \eqref{eq:re2} yields the desired relation in \eqref{eq:desired}.

In the remaining, we consider finding $\overline{\phi}_{S}$ and $\eta$ values for which the aforementioned statements hold. 

\noindent{\underline{Proof of Statement $2)$.}} From Lemma \ref{lem:technical_1}, it holds that with probability approaching 1,
\begin{equation*}
\phi_{\lambda,\rho}({\bf g},{\bf h}) \leq \overline{\phi}+\eta,
\end{equation*}
which shows that Statement $2)$ holds for any $\eta>0$. 

\noindent{\underline{Proof of Statement $3)$.}} 
As a first step, we show that Statement $3)$ holds if 
for an appropriate choice of $\tilde{\epsilon}$, it holds that
\begin{equation}
\forall {\bf x}\in\mathcal{S},  \ \ \frac{a}{8n}\left\|{\bf x}-\overline{\bf x}^{\rm AO}\right\|^2\geq \tilde{\epsilon}.  \label{eq:reff}
\end{equation}
 To see this, recall that from Lemma \ref{lem:technical_1}, the function $\hat{\mathcal{L}}$ is $\frac{a}{n}$-strongly convex in a neighborhood of $\overline{\bf x}^{\rm AO}$.
Moreover, for some $\epsilon$ sufficiently small, we have proved in \eqref{eq:toprove} that:
\begin{equation*}
\hat{\mathcal{L}}_{\lambda,\rho}(\overline{\bf x}^{\rm AO}) \leq \min_{{\bf x}}\hat{\mathcal{L}}_{\lambda,\rho}({\bf x})+\epsilon.
\end{equation*}
Hence, from Lemma \ref{app:technical}, 
\begin{equation*}
\forall {\bf x},  \frac{a}{8n}\|{\bf x}-\overline{\bf x}^{\rm AO}\|^2\geq \epsilon \Longrightarrow \hat{\mathcal{L}}_{\lambda, \rho}({\bf x})\geq \min_{x_i^2\leq P} \hat{\mathcal{L}}_{\lambda,\rho}({\bf x})+\epsilon.
\end{equation*}
Consequently, if for every ${\bf x}\in\mathcal{S}$, \eqref{eq:reff} holds, then
\begin{equation*}
\forall {\bf x}\in\mathcal{S}, \ \ \hat{\mathcal{L}}_{\lambda,\rho}({\bf x})\geq \min_{x_i^2\leq P}\hat{\mathcal{L}}_{\lambda,\rho}({\bf x})+\tilde{\epsilon}\geq \phi_{\lambda,\rho}({\bf g},{\bf h})+\frac{\tilde{\epsilon}}{2}.
\end{equation*}
Hence,
\begin{equation*}
\min_{{\bf x}\in\mathcal{S}} \mathcal{L}_{\lambda,\rho}({\bf x}) \geq \phi_{\lambda,\rho}({\bf g},{\bf h})+\frac{\tilde{\epsilon}}{2}.
\end{equation*}
With this inequality at hand,  we use the fact that for any $\eta>0$, we have with probability approaching $1$
\begin{equation*}
\phi({\bf g},{\bf h})\geq \overline{\phi} -\eta.
\end{equation*}
Hence, for any $\eta\leq \frac{\tilde{\epsilon}}{4}$, we obtain
\begin{equation*}
\min_{{\bf x}\in\mathcal{S}} \mathcal{L}_{\lambda,\rho}({\bf x}) \geq \overline{\phi}+\frac{1}{2}\tilde{\epsilon}-\eta,
\end{equation*}
and hence, Statement $3)$ holds with $\overline{\phi}_S= \overline{\phi}+\frac{1}{2}\tilde{\epsilon}$. 

\noindent{\underline{Proof of \eqref{eq:reff}}.} In what follows, we will thus consider proving \eqref{eq:reff}. As a first step,
we use  the weak law of large numbers to prove the following convergence:
\begin{equation*}
\left|\frac{1}{n}\sum_{i=1}^n f(\overline{\bf x}_i^{\rm AO})-\kappa\right|\stackrel{P}{\rightarrow} 0.
\end{equation*}
Equivalently, for any $\hat{\epsilon}>0$, with probability approaching $1$, we have
\begin{equation}
\left|F(\overline{\bf x}^{\rm AO})-\kappa\right| \leq\hat{\epsilon}. \label{eq:re454}
\end{equation}
To continue, let ${\bf x}\in\mathcal{S}$, then, 
\begin{equation*}
\left|F({\bf x})-\kappa\right|\geq 2\epsilon.
\end{equation*}
Hence, using \eqref{eq:re454} yields
\begin{equation}
\left|F({\bf x})-F(\overline{\bf x}^{\rm AO})\right|\geq \left|F({\bf x})-\kappa\right| - \left|F(\overline{\bf x}^{\rm AO})-\kappa\right| \geq 2\epsilon-\hat{\epsilon}.\label{eq:rest}
\end{equation}
Since $f$ is pseudo-Lipschitz of order $k$, there exists a constant $C$ such that:
\begin{align}
&\left|F({\bf x})-F(\overline{\bf x}^{\rm AO})\right| \nonumber\\
&\leq \frac{C}{n}\sum_{i=1}^n (1+|{\bf x}_i|^{k-1}+\left|\overline{\bf x}^{\rm AO}_i\right|^{k-1})\left|{\bf x}_i-\overline{\bf x}^{\rm AO}_i\right|\\
&\leq \frac{C}{\sqrt{n}}\|{\bf x}-\overline{\bf x}^{\rm AO}\|\left(1+\sqrt{\frac{1}{n}\sum_{i=1}^n |{\bf x}_i|^{2(k-1)}}+\sqrt{\frac{1}{n}\sum_{i=1}^n |\overline{\bf x}_i^{\rm AO}|^{2(k-1)}}\right).
\end{align}
Since the absolute values of elements of ${\bf x}$ and $\overline{\bf x}$ are bounded by $\sqrt{P}$, then
\begin{equation*}
\max\left(\sqrt{\frac{1}{n}\sum_{i=1}^n |{\bf x}_i|^{2(k-1)}},\sqrt{\frac{1}{n}\sum_{i=1}^n |\overline{\bf x}_i^{\rm AO}|^{2(k-1)}}\right)\leq P^{\frac{k-1}{2}}.
\end{equation*}
Hence, 
\begin{equation*}
\left|F({\bf x})-F(\overline{\bf x}^{\rm AO})\right|\leq \frac{C}{\sqrt{n}} \|{\bf x}-\overline{\bf x}^{\rm AO}\|\left(1+2P^{\frac{k-1}{2}}\right).
\end{equation*}
When combined with \eqref{eq:rest}, this shows that
\begin{equation*}
\frac{a}{8n}\left\|{\bf x}-\overline{\bf x}^{\rm AO}\right\|^2\geq \frac{a}{8C^2}\frac{(2\epsilon-\hat{\epsilon})^2}{(1+2P^{\frac{k-1}{2}})^2}.
\end{equation*}
which proves \eqref{eq:reff} for $\tilde{\epsilon} = \frac{a}{8C^2}\frac{(2\epsilon-\hat{\epsilon})^2}{(1+2P^{\frac{k-1}{2}})^2}$. 

\noindent{\underline{Proof of Statement $1)$}}.
Recall that from the proof of Statement~$3)$ we have
\begin{equation*}
\overline{\phi}_{S}=\overline{\phi}+\frac{1}{2}\tilde{\epsilon}.
\end{equation*}
Hence, with $\eta\leq \frac{1}{6}\tilde{\epsilon}$, we have
\begin{equation*}
3{\eta} \leq \frac{1}{2}\tilde{\epsilon}\Rightarrow \overline{\phi} +3\eta \leq \overline{\phi}+\frac{\tilde{\epsilon}}{2}=\overline{\phi}_{S}.
\end{equation*}
\subsection{Proof of Theorem \ref{th:distortion}}
\label{proof_th:distortion}

Let $\tilde{f}:\mathbb{R}^2\to\mathbb{R}$ be a Lipschitz function. For $({\bf u},{\bf s})\in \mathbb{R}^{m}\times \mathbb{R}^m$, we define the function $\tilde{F}$ as
\begin{equation*}
\tilde{F}({\bf u},{\bf s})= \frac{1}{m}\sum_{i=1}^m f(\frac{\sqrt{n}}{2}{\bf u}_i,{\bf s}_i).
\end{equation*}
For $H\sim\mathcal{N}(0,1)$ and ${S}$ a discrete variable taking $1$ or $-1$ with equal probabilities, we define $X(H,S)$ as
\begin{equation*}
X(H,S)= \frac{\beta^\star(\sqrt{\delta (\tau^\star)^2-\rho}H-\sqrt{\rho}S)}{2\tau^\star\delta}.
\end{equation*}
Fix $\epsilon>0$ and define the set $\tilde{\mathcal{S}}$ as
\begin{equation*}
\tilde{\mathcal{S}}=\left\{{\bf u}\in\mathcal{S}_{{\bf u}} | \left|\tilde{F}({\bf u},{\bf s})-\mathbb{E}_{H,S}\left[f(X(H,S),S)\right]\right|\geq 2\epsilon\right\}.
\end{equation*}
With this definition, it is easily observed that the proof of Theorem \ref{th:distortion} reduces to showing that with probability approaching $1$, the solution $\hat{\bf u}^{\rm PO}\notin \tilde{S}$. To prove the desired result, we introduce the following 'perturbed' PO:
\begin{align}
\tilde{\Phi}_{\tilde{\mathcal{S}}}({\bf H})= \max_{{\bf u}\in\tilde{\mathcal{S}}} \mathcal{V}_{\lambda,\rho}({\bf u})
\end{align}
and its associated AO:
\begin{align}
	\tilde{\phi}_{\tilde{\mathcal{S}}}({\bf g},{\bf h})= \max_{{\bf u}\in\tilde{\mathcal{S}}} \mathcal{F}_{\lambda,\rho}({\bf u}).
\end{align}
Obviously, the proof of Theorem \ref{th:distortion} is equivalent to showing that with probability approaching $1$, 
\begin{equation}
\tilde{\Phi}_{\tilde{\mathcal{S}}}({\bf H}) < \tilde{\Phi}_{\lambda,\rho}({\bf H}). \label{eq:des}
\end{equation}
As in the proof of Theorem \ref{th:previous}, it suffices to find $\tilde{\phi}_{\mathcal{S}}$ and $\eta$ for which the following statements hold:
\begin{enumerate}[label={\arabic*$'$)}]
	\item $\tilde{\phi}_{\mathcal{S}} \leq \overline{\phi}-3\eta$,
	\item $\tilde{\phi}_{\lambda,\rho}({\bf g},{\bf h})\geq \overline{\phi}-\eta$,
	\item $\tilde{\phi}_{\tilde{\mathcal{S}}}({\bf g},{\bf h}) \leq \tilde{\phi}_{\mathcal{S}}+\eta$.
	\end{enumerate}
Indeed, if the above statements hold then Statement $2')$ and \eqref{eq:4} imply that with probability approaching $1$, 
\begin{equation}
\tilde{\Phi}_{\lambda,\rho}({\bf H})\geq \overline{\phi}-\eta.
\label{eq:s1}
\end{equation}
Similarly, using \eqref{eq:2} and Statement $3')$,
\begin{equation}
\tilde{\Phi}_{\tilde{\mathcal{S}}}({\bf H})\leq \tilde{\phi}_{\mathcal{S}}+\eta \label{eq:s2}.
\end{equation}
Combining \eqref{eq:s1} and \eqref{eq:s2} with Statement $1')$ yields \eqref{eq:des}. In what follows, we exploit Lemma \ref{lem:technical_2} to show that the above statements are true. The proof shares similarities with that of Theorem \ref{th:previous}. However, for the sake of completeness, we provide all the details. 

\noindent{\underline{Proof of Statement $2')$}}
Recall that $\tilde{\phi}_{\lambda,\rho}({\bf g},{\bf h})$ is given by
\begin{equation*}
\tilde{\phi}_{\lambda,\rho}({\bf g},{\bf h})= \max_{{\bf u}\in\mathcal{S}_{\bf u}} \mathcal{F}_{\lambda,\rho}({\bf u}).
\end{equation*}
It follows from  Lemma \ref{lem:technical_2} that
\begin{equation*}
\left|\max_{{\bf u}\in\mathcal{S}_{\bf u}} \mathcal{F}_{\lambda,\rho}({\bf u})-\overline{\phi}\right|\stackrel{P}{\rightarrow} 0.
\end{equation*}
Hence, for any $\eta>0$, with probability approaching $1$,
\begin{equation*}
\tilde{\phi}_{\lambda,\rho}({\bf g},{\bf h})\geq \overline{\phi}-\eta,
\end{equation*}
which shows that Statement $2')$ is true. 

\noindent{\underline{Proof of Statement $3')$}} To begin with, we show that to prove Statement $3')$, it suffices to show that for an appropriate choice of $\tilde{\epsilon}$,
\begin{equation}
\forall {\bf u} \in \mathcal{\tilde{S}},  \ \ \ \|{\bf u}-\tilde{\bf u}^{\rm AO}\|^2\geq 2\tilde{\epsilon} .\label{eq:prop}
\end{equation}
Indeed if \eqref{eq:prop} holds true, then we can exploit the fact that $\tilde{\mathcal{F}}_{\lambda,\rho}$ is $\frac{1}{2}-$ strongly concave to obtain
\begin{equation}
\frac{1}{2}\|{\bf u}-\tilde{\bf u}^{\rm AO}\|^2\geq\tilde{\epsilon} \Longrightarrow \tilde{\mathcal{F}}_{\lambda,\rho}({\bf u})\leq \max_{{\bf u}} \tilde{\mathcal{F}}_{\lambda,\rho}({{\bf u}})-\tilde{\epsilon}. \label{eq:inviewof}
\end{equation}
Hence, if \eqref{eq:prop} holds true then, in view of \eqref{eq:inviewof},
\begin{equation*}
\forall {\bf u}\in\tilde{\mathcal{S}},  \ \ \mathcal{\tilde{F}}_{\lambda,\rho}({\bf u})\leq \max_{{\bf u}} \tilde{\mathcal{F}}_{\lambda,\rho}({{\bf u}})-\tilde{\epsilon}.
\end{equation*}
Now, using the fact that $\mathcal{\tilde{F}}_{\lambda,\rho}({\bf u})\geq \mathcal{F}_{\lambda,\rho}({\bf u})$, we have 
\begin{equation*}
\forall {\bf u}\in\tilde{\mathcal{S}}\ \ \mathcal{F}_{\lambda,\rho}({\bf u}) \leq \max_{{\bf u}}\tilde{\mathcal{F}}_{\lambda,\rho}({{\bf u}})-\tilde{\epsilon},
\end{equation*}
and hence,
\begin{equation*}
\tilde{\phi}_{\tilde{\mathcal{S}}}({\bf g},{\bf h}) \leq \max_{{\bf u}}\tilde{\mathcal{F}}_{\lambda,\rho}({{\bf u}})-\tilde{\epsilon}.
\end{equation*}
With this inequality at hand, we use the fact that for any $\eta>0$, with probability approaching $1$, it holds that
\begin{equation*}
\max_{{\bf u}}\tilde{\mathcal{F}}_{\lambda,\rho}({{\bf u}})\leq \overline{\phi}+\eta.
\end{equation*}
We thus obtain
\begin{equation*}
\tilde{\phi}_{\tilde{\mathcal{S}}}({\bf g},{\bf h}) \leq \overline{\phi}+\eta-\tilde{\epsilon},
\end{equation*}
and hence Statement $3')$ holds with $\tilde{\phi}_{{\mathcal{S}}}=\overline{\phi}-\tilde{\epsilon}$. 

\noindent{\underline{Proof \eqref{eq:prop}}.} In a first step, we show that for any $\hat{\epsilon}$, with probability approaching $1$,
\begin{equation}
\left|\tilde{F}(\tilde{\bf u}^{\rm AO},{\bf s})-\tilde{\kappa}\right|\leq 2\hat{\epsilon} ,\label{eq:tilde_F}
\end{equation}
where 
\begin{equation*}
\tilde{\kappa}:= \mathbb{E}\left[f(X(H,S),S)\right].
\end{equation*}
 Now we invoke the weak law of large numbers to prove that with a probability approaching $1$, 
\begin{equation}
\left|\tilde{F}(\overline{\bf u}^{\rm AO})-\tilde{\kappa}\right|\leq \hat{\epsilon}. \label{eq:refdw}
\end{equation}
Since $f$ is pseudo-Lipschitz of order $2$, there exists a constant $C$ such that
\begin{align}
&\frac{1}{m} \sum_{i=1}^m \left|f(\frac{\sqrt{n}}{2}\tilde{\bf u}_i^{\rm AO},{\bf s}_i)-f(\frac{\sqrt{n}}{2}\overline{\bf u}_i^{\rm AO},{\bf s}_i)\right|\\
&\leq \frac{C\sqrt{n}}{m}\sum_{i=1}^m \left(1+\|\big[\frac{\sqrt{n}}{2}\tilde{\bf u}_i^{\rm AO},{\bf s}_i\big]^{T}\|_2+\|\big[\frac{\sqrt{n}}{2}\overline{\bf u}_i^{\rm AO},{\bf s}_i\big]^{T}\|_2\right) |\tilde{\bf u}_i^{\rm AO}-\overline{\bf u}_i^{\rm AO}|
\\
&\leq C\left(\frac{\sqrt{n}}{\sqrt{m}}+\frac{n}{2m}\|\tilde{\bf u}^{\rm AO}\|+\frac{n}{2m}\|\overline{\bf u}^{\rm AO}\|+2\frac{\sqrt{n}}{m}\|{\bf s}\|\right) \|\overline{\bf u}^{\rm AO}-\tilde{\bf u}^{\rm AO}\|.\label{eq:r}
\end{align}
From Lemma \ref{lem:technical_2}, it follows that with probability approaching $1$, 
\begin{equation*}
\|\overline{\bf u}^{\rm AO}-\tilde{\bf u}^{\rm AO}\|\leq \frac{\hat{\epsilon}}{C}.
\end{equation*}
As $\|\tilde{\bf u}^{\rm AO}\|$ and $\|\overline{\bf u}^{\rm AO}\|$ are bounded in probability and $\sqrt{n}{m}\|{\bf s}\|$ is bounded, the following inequality holds with probability approaching $1$,
\begin{equation}
\frac{1}{m} \sum_{i=1}^m \left|f(\frac{\sqrt{n}}{2}\tilde{\bf u}_i^{\rm AO},{\bf s}_i)-f(\frac{\sqrt{n}}{2}\overline{\bf u}_i^{\rm AO},{\bf s}_i)\right|\leq \hat{\epsilon} .\label{eq:f}
\end{equation}
With this, we can prove \eqref{eq:tilde_F} by using the following inequality
\begin{equation*}
\left|F(\tilde{\bf u}^{\rm AO},{\bf s})-\tilde{\kappa}\right|\leq \left|F(\tilde{\bf u}^{\rm AO},{\bf s})-F(\overline{\bf u}^{\rm AO},{\bf s})\right|+ \left|F(\overline{\bf u}^{\rm AO},{\bf s})-\tilde{\kappa}\right|
\end{equation*}
along with \eqref{eq:refdw} and \eqref{eq:f}. 

Having proven \eqref{eq:tilde_F}, we are now ready to complete the proof of \eqref{eq:prop}. Take ${\bf u}\in\tilde{\mathcal{S}}$. Then, 
\begin{equation*}
\left|\tilde{F}({\bf u},{\bf s})-\tilde{\kappa}\right|\geq 2\epsilon.
\end{equation*}
We can thus lower bound $\tilde{F}({\bf u},{\bf s})-\tilde{F}(\overline{\bf u}^{\rm AO},{\bf s})$ as
\begin{align}
\left|\tilde{F}({\bf u},{\bf s})-\tilde{F}(\overline{\bf u}^{\rm AO},{\bf s})\right| \geq \left| \tilde{F}({\bf u},{\bf s})-\tilde{\kappa}\right| -\left|\tilde{F}(\overline{\bf u}^{\rm AO},{\bf s})-\tilde{\kappa} \right|
\geq 2\epsilon-\hat{\epsilon}. \label{eq:lower_bound}
\end{align}
On the other hand, following the same arguments used to obtain \eqref{eq:r}, there exists a constant $K$ such that $\tilde{F}({\bf u},{\bf s})-\tilde{F}(\overline{\bf u}^{\rm AO},{\bf s})$ is upper bounded as
\begin{align}
	\left|\tilde{F}({\bf u},{\bf s})-\tilde{F}(\overline{\bf u}^{\rm AO},{\bf s})\right| \leq K \|{\bf u}-\overline{\bf u}^{\rm AO}\|
\leq K \|{\bf u}-\tilde{\bf u}^{\rm AO}\|+K\hat{\epsilon},
\label{eq:upper_bound}
\end{align}
where in the last inequality, we exploited Lemma \ref{lem:technical_2} to use the fact that  $\|\tilde{\bf u}^{\rm AO}-\overline{\bf u}^{\rm AO}\|\leq \hat{\epsilon}$ with probability approaching $1$.  
Combining the inequalities in \eqref{eq:lower_bound} and \eqref{eq:upper_bound}, we have for  all ${\bf u}\in\tilde{\mathcal{S}}$
\begin{equation*}
\|{\bf u}-\tilde{\bf u}^{\rm AO}\|^2 \geq \left(\frac{2\epsilon-\hat{\epsilon}}{K}-\hat{\epsilon}\right)^2.
\end{equation*}
Hence \eqref{eq:prop} holds true, with $\tilde{\epsilon}=\frac{\left(\frac{2\epsilon-\hat{\epsilon}}{K}-\hat{\epsilon}\right)^2}{2}$. 

\noindent{\underline{Proof of Statement $1')$}}
Recall that from the proof of Statement $3')$ we have 
\begin{equation*}
	\tilde{\phi}_{\mathcal{S}}=\bar{\phi}-\tilde{\epsilon}.
\end{equation*}
Hence, with $\eta<\frac{\tilde{\epsilon}}{3}$, we have
\begin{equation*}
	\tilde{\phi}_{\mathcal{S}}<\bar{\phi}-3\eta.
\end{equation*}

\section{Proof of Lemma \ref{eq:lemma_asym}}
\label{sec6}
\begin{enumerate}
	\item	The function $\beta\mapsto -\frac{\beta^2}{4}$ is strictly concave. We can readily check that $\beta\mapsto \min_{\tau\geq 0 }\frac{\tau\beta\delta}{2}+\frac{\rho\beta}{2\tau}+Y(\beta,\tau)$ is concave. Hence, $\beta\mapsto \mathcal{D}(\beta,\tau)$ is strictly concave. 
	\item Noting that $Y(\beta,\tau)\leq 0$, it can be easily seen that:
	\begin{align}
		\lim_{\beta\to\infty}\min_{\tau\geq 0} \mathcal{D}(\beta,\tau) &\leq \lim_{\beta\to\infty}-\frac{\beta^2}{4} + \min_{\tau\geq 0}\frac{\tau\beta\delta}{2}+\frac{\rho\beta}{2\tau}.
	\end{align}
	Using the fact that $\min_{\tau\geq 0} \frac{\tau\beta\delta}{2}+\frac{\rho\beta}{2\tau}=\beta\sqrt{\rho\delta}$, we thus get:
\begin{equation*}
	\lim_{\beta\to\infty}\min_{\tau\geq 0} \mathcal{D}(\beta,\tau)=-\infty.
\end{equation*}
	This implies that ${\beta}^\star$ is bounded. To prove that this maximum is bounded below zero, we exploit the fact that:
	\begin{equation}
		\lim_{\beta\to 0} \inf_{\tau\geq 0} \frac{\partial \mathcal{D}(\beta,\tau)}{\partial \beta} =\inf_{\tau\geq 0}\frac{\tau\delta}{2}+\frac{\rho}{2\tau}>0,
		\label{eq:true}
	\end{equation}
	which follows by noticing that $\lim_{\beta\to 0}\inf_{\tau\geq 0}\frac{\partial Y(\beta,\tau)}{\partial \beta}=0$. 
	Indeed, due to \eqref{eq:true}, for any $\beta$ in the vicinity of zero, and all $\tau>0$, 
\begin{equation*}
	\beta \frac{\partial \mathcal{D}(\beta,\tau)}{\partial \beta} \geq 0.
\end{equation*}
	Hence $\beta=0$ does not satisfy the first order optimality condition and hence could not be the one that maximizes $\inf_{\tau\geq 0} \frac{\partial \mathcal{D}}{\partial \beta}$.  
	
	Next, we prove that $\tau^\star$ is bounded. For that, we consider separately the cases $\lambda>0$ and $\left\{\lambda=0,\delta>0\right\}$. 

	\noindent{\underline{Case 1: $\lambda>0$}}. Note that since $\mathcal{D}(\beta,\tau)$ is concave in $\beta$ and convex in $\tau$, 
	\begin{equation*}
	{\tau}^\star=\arg\min_{\tau\geq 0} \mathcal{D}({\beta}^\star,\tau).
	\end{equation*}
	As ${\beta}^\star\neq 0$, we can easily check that for $\lambda>0$, 
	\begin{equation*}
	\lim_{\tau\to \infty} \mathcal{D}({\beta}^\star,\tau) =\infty.
	\end{equation*}
	Hence, necessarily ${\tau}^\star$ is bounded.
 
	\noindent{\underline{Case 2: $\lambda=0$ and $\delta > 1$ }}
	In this case, $\alpha=\frac{1}{\tau}$ and $Y(\beta,\tau)$ simplifies to:
	\begin{align}
		Y(\beta,\tau)&=\beta\sqrt{P} \mathbb{E}\left[\left(\frac{\sqrt{P}}{\tau}-H\right){\bf 1}_{\{H\geq \frac{\sqrt{P}}{\tau}\}}\right] -\frac{\beta\tau}{2} \mathbb{E}\left[H^2{\bf 1}_{\{-\frac{\sqrt{P}}{\tau}\leq H\leq \frac{\sqrt{P}}{\tau}\}}\right].
	\end{align}
	If $\delta>1$, we can obviously see that:
	\begin{equation}
\lim_{\tau\to\infty} \mathcal{D}(\beta^\star,\tau)= \lim_{\tau\to\infty} \frac{\beta^\star\tau}{2}\left(\delta- \mathbb{E}\left[H^2{\bf 1}_{\{-\frac{\sqrt{P}}{\tau}\leq H\leq \frac{\sqrt{P}}{\tau}\}}\right] \right)=\infty.
	\end{equation}
	which again shows that $\tau^\star$ is bounded. 
	\item Equations \eqref{eq:tau_r} and \eqref{eq:tau_sol} follow directly by writing the first order optimality condition for the variable $\tau$.
	\item When $\lambda=0$ and $\delta>1$, the asymptotic deterministic max-min problem in \eqref{eq:asde1} simplifies to:
\begin{equation*}
	\overline{\phi}=\min_{\tau\geq 0}\max_{\beta\geq 0} \frac{\tau\beta\delta}{2}+\frac{\rho\beta}{2\tau}-\frac{\beta^2}{4}+\beta\tilde{Y}(\tau),
\end{equation*}
	where $\tilde{Y}(\tau)$ is given by \eqref{eq:Y_tilde_tau}. Taking the maximum over $\beta$ yields the desired. 
\end{enumerate}
\section{Proof of Lemma \ref{lem:technical_1} and Lemma \ref{lem:technical_2}}
\label{sec7}
\subsection{Proof of Lemma \ref{lem:technical_1}}
\label{pl2}

\noindent{\bf Asymptotic equivalent for the AO.} Recall the function $\mathcal{L}_{\lambda,\rho}({\bf x})$ defined in \eqref{eq:L}.  We note that the variable ${\bf u}$ appears in the objective of \eqref{eq:L} through a linear term and through its magnitude, which suggests that one can first optimize over its direction for fixed amplitude. Formally, we proceed as follows. Let $\beta=\|{\bf u}\|_2$. The contribution of the terms involving ${\bf u}$ in \eqref{eq:L} can be expressed as:
\begin{equation*}
{\bf u}^{T}\left(\frac{1}{n}\|{\bf x}\|{\bf g}-\sqrt{\frac{\rho}{n}}{\bf s}\right)-\frac{\|{\bf u}\|^2}{4}-\frac{1}{n}\|\mathbf{u}\|\mathbf{h}^T\mathbf{x}.
\end{equation*}
Obviously, the direction of ${\bf u}$ that optimizes the above expression is the one that aligns with $\left(\frac{1}{n}\|{\bf x}\|{\bf g}-\sqrt{\frac{\rho}{n}}{\bf s}\right)$. Hence, \eqref{eq:L} simplifies as:
\begin{align}
	\mathcal{L}_{\lambda,\rho}({\bf x})&=\max_{C_\beta\geq\beta\geq 0}\ell_{\lambda,\rho}({\bf x},\beta)
	\end{align}  
with $\ell_{\lambda,\rho}({\bf x},\beta)$ defined as:
\begin{align}
\ell_{\lambda,\rho}({\bf x},\beta)&=	\frac{\beta\|{\bf g}\|}{\sqrt{n}}  \left\|\frac{\|{\bf x}\|{\bf g}}{\sqrt{n}\|{\bf g}\|}-\frac{\sqrt{\rho}{\bf s}}{\|{\bf g}\|}\right\| -\frac{\beta}{n}{\bf h}^{T}{\bf x}-\frac{\beta^2}{4}+\frac{\lambda\|{\bf x}\|^2}{n}.
\end{align}
To continue, we consider proving that 
\begin{equation}
\sup_{\substack{{\bf x}\\ x_i^2\leq P}} \left|\left\|\frac{\|{\bf x}\|{\bf g}}{\sqrt{n}\|{\bf g}\|}-\frac{\sqrt{\rho}{\bf s}}{\|{\bf g}\|}\right\|-\sqrt{\frac{\|{\bf x}\|^2}{n}+\frac{\rho m}{\|{\bf g}\|^2}}\right|\to 0.\label{eq:redc}
\end{equation}
For that, we use the relation $|\sqrt{a}-\sqrt{b}|=\frac{|a-b|}{\sqrt{a}+\sqrt{b}}$ which holds for any  positive $a$ and $b$ with $\min(a,b)>0$ to get:
\begin{align}
\left|\left\|\frac{\|{\bf x}\|{\bf g}}{\sqrt{n}\|{\bf g}\|}-\frac{\sqrt{\rho}{\bf s}}{\|{\bf g}\|}\right\|-\sqrt{\frac{\|{\bf x}\|^2}{n}+\frac{\rho m}{\|{\bf g}\|^2}}\right| 
= \frac{\left|\left\|\frac{\|{\bf x}\|{\bf g}}{\sqrt{n}\|{\bf g}\|}-\frac{\sqrt{\rho}{\bf s}}{\|{\bf g}\|}\right\|^2-\left(\frac{\|{\bf x}\|^2}{n}+\frac{\rho m }{\|{\bf g}\|^2}\right)\right|}{\left\|\frac{\|{\bf x}\|{\bf g}}{\sqrt{n}\|{\bf g}\|}-\frac{\sqrt{\rho}{\bf s}}{\|{\bf g}\|}\right\|+\sqrt{\frac{\|{\bf x}\|^2}{n}+\frac{\rho m}{\|{\bf g}\|^2}}}
\leq \frac{\left|\frac{2\sqrt{\rho}{\bf s}^{T}{\bf g}\|{\bf x}\|}{\sqrt{n}\|{\bf g}\|^2}\right|}{\frac{\sqrt{\rho m}}{\|{\bf g}\|}}.\label{eq:der}
\end{align}
It follows from the weak law of large numbers that:
\begin{equation*}
\frac{{\bf s}^{T}{\bf g}}{\|{\bf g}\|^2}\to 0.
\end{equation*}
Using the above convergence together with \eqref{eq:der}, we thus prove \eqref{eq:redc}.

Define function $x\mapsto \hat{\mathcal{L}}_{\lambda,\rho}({\bf x})$ as:
\begin{align}
	\hat{\mathcal{L}}_{\lambda,\rho}({\bf x})&= \max_{C_\beta\geq\beta\geq 0} 
	\hat{\ell}_{\lambda,\rho}({\bf x},\beta)\label{eq:L_hat_def}
	\end{align} 
with $\hat{\ell}_{\lambda,\rho}({\bf x},\beta)$ given by:
\begin{align}
\hat{\ell}_{\lambda,\rho}({\bf x},\beta):=&	\frac{\beta\|{\bf g}\|}{\sqrt{n}}  \sqrt{\frac{\|{\bf x}\|^2}{n}+\frac{\rho m}{\|{\bf g}\|^2}} -\frac{\beta}{n}{\bf h}^{T}{\bf x}-\frac{\beta^2}{4}+\frac{\lambda\|{\bf x}\|^2}{n}. \label{eq:l_def}
\end{align}
It follows from \eqref{eq:redc} that:
\begin{equation*}
\sup_{0\leq \beta\leq C_\beta}\sup_{\substack{{\bf x}\\ x_i^2\leq P}} \left|\ell_{\lambda,\rho}({\bf x},\beta)-\hat{\ell}_{\lambda,\rho}({\bf x},\beta)\right|\to 0. 
\end{equation*}
Hence, for any $\epsilon>0$, with probability approaching $1$, for all feasible ${\bf x}$ and $\beta$, it holds that:
\begin{equation*}
\hat{\ell}_{\lambda,\rho}({\bf x},\beta)-\epsilon \leq \ell_{\lambda,\rho}({\bf x},\beta)\leq \hat{\ell}_{\lambda,\rho}({\bf x},\beta)+\epsilon.
\end{equation*}
Taking the supremum in the above inequalities yields,
\begin{equation*}
\hat{\mathcal{L}}_{\lambda,\rho}({\bf x}) -\epsilon\leq \mathcal{L}_{\lambda,\rho}({\bf x})\leq \hat{\mathcal{L}}_{\lambda,\rho}({\bf x}) +\epsilon.
\end{equation*}
Since $\epsilon$ is taken independently of ${\bf x}$ and $\beta$, we thus obtain:
\begin{equation}
\sup_{\substack{{\bf x}\\ x_i^2\leq P}} \left|\hat{\mathcal{L}}_{\lambda,\rho}({\bf x})-\mathcal{L}_{\lambda,\rho}({\bf x})\right|\to 0. \label{eq:uniform}
\end{equation}
Associated with $\hat{\mathcal{L}}_{\lambda,\rho}({\bf x})$, we define the following asymptotic equivalent AO  given by:
\begin{equation*}
\hat{\phi}_{\lambda,\rho}({\bf g},{\bf h})=\min_{\substack{{\bf x}\\ x_i^2\leq P}} \hat{\mathcal{L}}_{\lambda,\rho}({\bf x}).
\end{equation*}
It follows from the uniform convergence in \eqref{eq:uniform} that:
\begin{equation}
\phi_{\lambda,\rho}({\bf g},{\bf h})-\hat{\phi}_{\lambda,\rho}({\bf g},{\bf h})\to 0. \label{eq:equivalent}
\end{equation}

\noindent{\bf Simplification  of $\hat{\phi}_{\lambda,\rho}({\bf g},{\bf h})$.} 
 To further simplify $\hat{\mathcal{L}}_{\lambda,\rho}({\bf x})$, we use the following variational expression for the square-root term $\sqrt{\frac{\|{\bf x}\|^2}{n}+\frac{\rho m}{\|{\bf g}\|^2}} $ ,
\begin{equation}
\sqrt{\frac{\|{\bf x}\|^2}{n}+\frac{\rho m}{\|{\bf g}\|^2}}=\min_{\tau\geq 0} \frac{\tau}{2}+ \frac{\frac{\|{\bf x}\|^2}{n}+\frac{\rho m}{\|{\bf g}\|^2}}{2\tau}. \label{eq:last}
\end{equation}
At optimum, the optimal $\tau$ satisfies: $\overline{\tau}=\sqrt{\frac{\|{\bf x}\|^2}{n}+\frac{\rho m}{\|{\bf g}\|^2}}$. Since  with probability approaching $1$, $\frac{m}{\|{\bf g}\|^2}=1$, the value of the optimal $\overline{\tau}$ is larger than  $\frac{\sqrt{\rho}}{2}$. Similarly, as $\frac{\|{\bf x}\|^2}{n}\leq P$, the value of $\overline{\tau}$ is less than $\sqrt{2P+2\rho}$. Hence, nothing changes in \eqref{eq:last}, if we further constrain $\tau$ to lie in the interval $[\frac{\sqrt{\rho}}{2},C_\tau]$ where $C_\tau$ which can be set to any value fixed value larger than $\sqrt{2P+2\rho}$. 
This leads to the following equivalent formulation for $\hat{\phi}_{\lambda,\rho}({\bf g},{\bf h})$:
\begin{align}
\hat{\phi}_{\lambda,\rho}({\bf g},{\bf h})=\min_{\substack{{\bf x}\\ x_i^2\leq P}}\max_{0\leq \beta\leq C_\beta}\min_{\frac{\sqrt{\rho}}{2}\leq \tau\leq C_\tau} \frac{\tau\beta\|{\bf g}\|}{2\sqrt{n}} +\frac{\beta \|{\bf g}\|}{2\tau\sqrt{n}}\left(
\frac{\|{\bf x}\|^2}{n}+\frac{\rho m}{\|{\bf g}\|^2}\right) -\frac{\beta}{n}{\bf h}^{T}{\bf x} -\frac{\beta^2}{4}+\lambda\frac{\|{\bf x}\|^2}{n}. \label{eq:AO_eq0}
\end{align}
For convenience, we perform the change of variable $\tau\leftrightarrow \tau\sqrt{\delta}$, which leads to:
\begin{align}
&\hat{\phi}_{\lambda,\rho}({\bf g},{\bf h})=\min_{\substack{{\bf x}\\ x_i^2\leq P}}\max_{0\leq \beta\leq C_\beta}\min_{\frac{\sqrt{\rho}}{2\sqrt{\delta}}\leq \tau\leq \frac{C_\tau}{\sqrt{\delta}}} \frac{\tau\beta\sqrt{\delta}\|{\bf g}\|}{2\sqrt{ n}} \nonumber\\
&+\frac{\beta \|{\bf g}\|}{2\tau\sqrt{\delta n}}\left(
\frac{\|{\bf x}\|^2}{n}+\frac{\rho m}{\|{\bf g}\|^2}\right) -\frac{\beta}{n}{\bf h}^{T}{\bf x} -\frac{\beta^2}{4}+\lambda\frac{\|{\bf x}\|^2}{n}. \label{eq:AO_eq}
\end{align}
It can be checked by studying the hessian matrix  that the objective function in \eqref{eq:AO_eq} is jointly-convex in $({\bf x},\tau)$ and concave in $\beta$. Hence, we may use the Sion's min-max theorem to permute the min-max and find 
\begin{align}
&	\hat{\phi}_{\lambda,\rho}({\bf g},{\bf h})=\max_{0\leq \beta\leq C_\beta}\min_{\frac{\sqrt{\rho}}{2\sqrt{\delta}}\leq \tau\leq \frac{C_\tau}{\sqrt{\delta}}} \min_{\substack{{\bf x}\\ x_i^2\leq P}} \frac{\tau\beta\sqrt{\delta}\|{\bf g}\|}{2\sqrt{n}}\nonumber\\
	&+\frac{\beta \|{\bf g}\|}{2\tau\sqrt{\delta n}}\left(
	\frac{\|{\bf x}\|^2}{n}+\frac{\rho m}{\|{\bf g}\|^2}\right) -\frac{\beta}{n}{\bf h}^{T}{\bf x} -\frac{\beta^2}{4}+\lambda\frac{\|{\bf x}\|^2}{n}.
\end{align}
For fixed $\beta$ and $\tau$, the minimization over ${\bf x}$ reduces to solving the following separable optimization problem:
\begin{equation}
\min_{\substack{{\bf x}\\ -\sqrt{P}\leq x_i\leq \sqrt{P}}} \frac{1}{n}\sum_{i=1}^n \left(\frac{\beta\|{\bf g}\|}{2\tau\sqrt{\delta n}}+\lambda\right)x_i^2-\beta h_ix_i . \label{eq:to_minimize}
\end{equation}
Solving \eqref{eq:to_minimize} for fixed $\beta$ and $\tau$, the elements of the optimal ${\bf x}'={\bf x}'(\tau,\beta)$ are given by:
\begin{equation*}
x_i^{'}=\left\{
\begin{array}{ll}
	-\sqrt{P} & \ \ \mathrm{if } \ \ h_i<-\sqrt{P}\tilde{\alpha}\\
	\frac{h_i}{\tilde{\alpha}} & \ \ \mathrm{if } -\sqrt{P}\tilde{\alpha}\leq h_i\leq \sqrt{P}\tilde{\alpha}\\
	\sqrt{P} & \ \ \mathrm{if } \ \ h_i\geq \sqrt{P}\tilde{\alpha}
\end{array},
\right.
\end{equation*}
where $\tilde{\alpha}=\frac{\|{\bf g}\|}{\tau\sqrt{\delta n}}+\frac{2\lambda}{\beta}$.
Replacing ${\bf x}$ by its optimal value in \eqref{eq:to_minimize} yields:
\begin{equation*}
\min_{\substack{{\bf x}\\ -\sqrt{P}\leq x_i\leq \sqrt{P}}} \frac{1}{n}\sum_{i=1}^n \left(\frac{\beta\|{\bf g}\|}{2\tau\sqrt{\delta n}}+\lambda\right)x_i^2-\beta h_ix_i=\frac{1}{n}\sum_{i=1}^n \hat{v}_i,
\end{equation*}
where $\hat{v}_i$ is given by:
\begin{equation*}
\hat{v}_i:=\left\{
\begin{array}{ll}
\frac{\beta P \|{\bf g}\|}{2\tau\sqrt{\delta n}} +\lambda P +\beta h_i\sqrt{P} & \mathrm{if } h_i\leq -\sqrt{P}\tilde{\alpha}\\
-\frac{\beta h_i^2}{2\tilde{\alpha}} & \mathrm{if } -\sqrt{P}\tilde{\alpha} \leq h_i\leq \sqrt{P}\tilde{\alpha}\\
\frac{\beta P\|{\bf g}\|}{2\tau\sqrt{\delta n}}+\lambda P -\beta h_i\sqrt{P} & \mathrm{if } h_i\geq \sqrt{P}\tilde{\alpha}
\end{array}.
\right.
\end{equation*}
With this, we can express $\hat{\phi}_{\lambda,\rho}({\bf g},{\bf h})$ as:
\begin{align}
\hat{\phi}_{\lambda,\rho}({\bf g},{\bf h})&=\max_{0\leq \beta\leq C_\beta}\min_{\frac{\sqrt{\rho}}{2\sqrt{\delta}}\leq \tau \leq \frac{C_\tau}{\sqrt{\delta}}} \frac{\tau\beta\sqrt{\delta}\|{\bf g}\|}{2\sqrt{n}} +\frac{\beta\rho m}{2\tau\sqrt{\delta n}\|{\bf g}\|}-\frac{\beta^2}{4}+\frac{1}{n}\sum_{i=1}^n v_i. \label{eq:sim}
\end{align}
Let $(\hat{\beta}_n,\hat{\tau}_n)$ be the saddle point in the above optimization problem. Since, the objective function is strictly convex in $\tau$ and strictly concave in $\beta$ over the domain $\left\{(\beta,\tau)\in [0,C_\beta]\times [\frac{\sqrt{\rho}}{2\sqrt{\delta}},\frac{C_\tau}{\sqrt{\delta}}]\right\}$, this saddle point is unique. Specifically, this implies that function $\hat{\mathcal{L}}$ admits a unique minimizer $\tilde{\bf x}^{\rm AO}$ which is given by:
\begin{equation*}
\tilde{\bf x}_i^{\rm AO}= \left\{ \begin{array}{ll}
-\sqrt{P} & \mathrm{if } h_i\leq -\hat{\alpha}\sqrt{P}\\
\frac{h_i}{{\hat{\alpha }}}  &\mathrm{if }  -\hat{\alpha}\sqrt{P}\leq h_i\leq \hat{\alpha}\sqrt{P} \\
\sqrt{P} & \mathrm{if } h_i\geq \hat{\alpha}\sqrt{P}
\end{array},
\right.
\end{equation*}
where $\hat{\alpha}=\frac{\|{\bf g}\|}{\hat{\tau}_n\sqrt{\delta n}} + \frac{2\lambda}{\hat{\beta}_n}$.

\noindent{\bf Asymptotic limits of $(\hat{\beta}_n,\hat{\tau}_n)$ and $\hat{\phi}_{\lambda,\rho}({\bf g},{\bf h})$.}

Call $\mathcal{R}(\beta,\tau)$ the objective in \eqref{eq:sim}. 
From the weak law of large numbers, it can be readily seen that:
\begin{equation}
\left\{\frac{\tau\beta\sqrt{\delta}\|{\bf g}\|}{2\sqrt{n}}+\frac{\beta\rho m}{2\tau\sqrt{\delta n}\|{\bf g}\|}\right\} -\frac{\beta^2}{4} \to  \frac{\tau\beta\delta}{2}+\frac{\beta \rho}{2\tau}-\frac{\beta^2}{4}.\label{eq:conv1}
\end{equation}
Using the same argument, it can be easily checked that:
\begin{align}
\frac{1}{n}\sum_{i=1}^n \hat{v_i} &\to Y(\beta,\tau):=\beta \sqrt{P}\mathbb{E}\left[\left(\sqrt{P}\alpha-2H\right){\bf 1}_{\{H\geq\sqrt{P}\alpha\}}\right] -\frac{\beta}{2\alpha}\mathbb{E}\left[H^2{\bf 1}_{\{-\sqrt{P}\alpha\leq H\leq \sqrt{P}\alpha\}}\right] ,\label{eq:conv2}
\end{align}
where $\alpha=\frac{1}{\tau}+\frac{2\lambda}{\beta}$ and the expectation is taken with respect to the distribution of the standard normal variable $H$. 

Putting \eqref{eq:conv1} and \eqref{eq:conv2} together, the objective $\mathcal{R}(\beta,\tau)$ converges pointwise to 
\begin{equation*}
\mathcal{D}(\beta,\tau):= \frac{\tau\beta \delta}{2} +\frac{\rho\beta}{2\tau}-\frac{\beta^2}{4}+Y(\beta,\tau).
\end{equation*}
The objective $\mathcal{R}(\beta,\tau)$ is concave in $\beta$ and convex in $\tau$. The convergence is thus uniform over compacts and we thus obtain:
\begin{equation}
\hat{\phi}_{\lambda,\rho}({\bf g},{\bf h})\to \max_{0\leq \beta\leq C_\beta}\min_{\frac{\sqrt{\rho}}{2\sqrt{\delta}}\leq \tau \leq \frac{C_\tau}{\sqrt{\delta}}}\mathcal{D}(\beta,\tau). \label{eq:asym}
\end{equation}
Furthermore, using the strict convexity and concavity of $\mathcal{D}(\beta,\tau)$ on $\left\{(\beta,\tau)\in\mathbb{R}_{+}\times\mathbb{R}_{+} \ |\ 0\leq \beta\leq C_\beta, \frac{\sqrt{\rho}}{2\sqrt{\delta}}\leq \tau\leq \frac{C_\tau}{\sqrt{\delta}}  \right\}$, we also have
\begin{equation*}
(\hat{\beta}_n,\hat{\tau}_n) \to (\overline{\beta},\overline{\tau}):=\arg\max_{0\leq \beta \leq C_\beta} \min_{\frac{\sqrt{\rho}}{2\sqrt{\delta}}\leq \tau \leq \frac{C_\tau}{\sqrt{\delta}}} \mathcal{D}(\beta,\tau).
\end{equation*}
\noindent{\bf Unbounded asymptotic AO.}
Consider the unbounded asymptotic optimization problem:
\begin{equation*}
\overline{\phi}:=\max_{\beta\geq 0}\min_{\tau\geq 0} \mathcal{D}(\beta,\tau).
\end{equation*}
From Lemma \ref{eq:lemma_asym}, the above optimization problem possesses a unique finite saddle point $(\beta^\star,\tau^\star)$. For all $\beta\geq 0$, it takes no much effort to check that $\tau\mapsto \mathcal{D}(\tau,\beta)$ is decreasing on the interval $[0,\sqrt{\frac{\rho}{\delta}}]$. Hence, $\tau^\star\geq \sqrt{\frac{\rho}{\delta}}$. From this, we conclude that by choosing $C_\beta\geq 2\beta^\star$ and $C_\tau\geq 2\sqrt{\delta}\tau^\star$, we have $\overline{\beta}=\beta^\star$ and $\overline{\tau}=\tau^\star$, and consequently,
\begin{equation*}
\overline{\phi}=\max_{C_\beta\geq \beta\geq 0}\min_{\frac{\sqrt{\rho}}{2\sqrt{\delta}}\leq \tau\leq \frac{C_\tau}{\sqrt{\delta}}} \mathcal{D}(\beta,\tau).
\end{equation*}
Hence, the convergence in \eqref{eq:asym} can be equivalently expressed as:
\begin{equation}
\hat{\phi}_{\lambda,\rho}({\bf g},{\bf h})\to \overline{\phi}. \label{eq:uniform_conv}
\end{equation}

\noindent{\bf On the behavior of function $\hat{\mathcal{L}}({\bf x})$.}
For $\lambda>0$, it can be readily seen that function $\hat{\mathcal{L}}$ is $\frac{\lambda}{n}-$strongly convex. 
The objective here is to prove that when $\lambda=0$ and $\delta>1$, function $\hat{\mathcal{L}}$ is $\frac{a}{n}-$ strongly convex on a neighborhood of $\overline{\bf x}^{\rm AO}$ for some $a>0$.

Starting from \eqref{eq:L_hat_def} and \eqref{eq:l_def}, for $\lambda=0$, $\hat{\mathcal{L}}$ simplifies to:
\begin{equation*}
\hat{\mathcal{L}}({\bf x})=\max_{0\leq \beta\leq C_\beta} \beta a({\bf x})-\frac{\beta^2}{4}
\end{equation*}
where $a({\bf x})$ is given by:
\begin{equation}
a({\bf x}):=\frac{\|{\bf g}\|}{\sqrt{n}} \sqrt{\frac{\|{\bf x}\|^2}{n}+\frac{\rho m}{\|{\bf g}\|^2}} -\frac{1}{n}{\bf h}^{T}{\bf x}. \label{eq:ax}
\end{equation}
To prove that $\hat{\mathcal{L}}$ is strongly convex on a neighborhood of $\overline{\bf x}^{\rm AO}$, it suffices to check that there exists $c>0$ such that with probability approaching $1$,
\begin{align}
a(\overline{\bf x}^{\rm AO})>c. \label{eq:req}
\end{align}
Indeed, if \eqref{eq:req} holds true, then we can find a ball $\mathcal{B}$ centered at $\overline{\bf x}^{\rm AO}$ with radius $r\sqrt{n}$ such that
\begin{equation*}
a({\bf x})\geq \frac{c}{2}, \ \ \forall {\bf x}\in\mathcal{B}.
\end{equation*}
Since $a({\bf x})$ is positive for ${\bf x}\in\mathcal{B}$, optimizing $\hat{\mathcal{L}}$ with respect to  $\beta\in[0,C_\beta]$ yields \footnote{Here we used the fact that $a({\bf x})$ is bounded in probability on $\mathcal{B}$ so that $C_\beta$ can be chosen larger than a fixed upper bound of $\left\{a({\bf x}), {\bf x}\in\mathcal{B}\right\}$ }:
\begin{equation*}
\hat{\mathcal{L}}({\bf x})=\left(a({\bf x})\right)^2, \ \ \forall {\bf x}\in\mathcal{B}.
\end{equation*}
By Lemma F.14 in \cite{Miolane}, function $x\mapsto a({\bf x})$ is $\frac{\gamma}{n}$-strongly convex on the ball $\mathcal{B}$ for some $\gamma>0$. Particularly the hessian of ${\bf a}$ satisfies:
\begin{equation*}
\nabla^2{\bf a}\geq \frac{\gamma}{n} {\bf I}_{n}.
\end{equation*}
Computing the Hessian of $\hat{\mathcal{L}}$, we obtain:
\begin{equation*}
\nabla^2a({\bf x})=2a({\bf x})\nabla^2a({\bf x})+\nabla a({\bf x})\nabla a({\bf x}) \geq c\frac{\gamma}{n}{\bf I}_n, \ \ \forall {\bf x}\in\mathcal{B}. 
\end{equation*}
This proves that $\hat{\mathcal{L}}$ is $\frac{c\gamma}{n}$-strongly convex on a neighborhood of $\overline{\bf x}^{\rm AO}$. It remains thus to show \eqref{eq:req}. 

\noindent{\underline{Proof of \eqref{eq:req}}.} From the weak law of large numbers, the following convergences hold true:
\begin{equation*}
\frac{1}{n}\|\overline{\bf x}^{\rm AO}\|^2\to 2P \mathbb{P}\Big[H\geq\frac{ \sqrt{P}}{\tau^\star}\Big] + (\tau^\star)^2\mathbb{E}\left[H^2{\bf 1}_{\{-\frac{\sqrt{P}}{\tau^\star}\leq H\leq \frac{\sqrt{P}}{\tau^\star}\}}\right]
\end{equation*}
and 
\begin{equation*}
\frac{1}{n}{\bf h}^{T}\overline{\bf x}^{\rm AO}\to 2\sqrt{P}\mathbb{E}\left[H{\bf 1}_{\{H\geq \frac{\sqrt{P}}{\tau^\star}\}}\right] + \tau^\star\mathbb{E}\left[H^2{\bf 1}_{\{-\frac{\sqrt{P}}{\tau^\star}\leq H\leq \frac{\sqrt{P}}{\tau^\star}\}}\right].
\end{equation*}
Using the first order optimality condition in \eqref{eq:tau_sol}, we can write the first convergence as:
\begin{equation*}
\frac{1}{n}\|\overline{\bf x}^{\rm AO}\|^2 \to \delta(\tau^\star)^2-\rho. 
\end{equation*}
Next, we use the facts that $\frac{\|{\bf g}\|}{\sqrt{n}}\to \sqrt{\delta}$ and $\frac{\rho m}{\|{\bf g}\|^2}\to \rho$ to obtain:
\begin{equation}
\frac{\|{\bf g}\|}{\sqrt{n}}\sqrt{\frac{\|\overline{\bf x}^{\rm AO}\|^2}{n}+\frac{\rho m}{\|{\bf g}\|^2}}\to \delta\tau^\star. \label{eq:ref2}
\end{equation}
Using again \eqref{eq:tau_sol}, we can easily check the following equality
\begin{equation}
\delta\tau^\star=\frac{\rho}{\tau^\star} +\frac{2P}{\tau^\star} \mathbb{P}\left[H\geq \frac{\sqrt{P}}{\tau^\star}\right]+\tau^\star\mathbb{E}\left[H^2{\bf 1}_{\{-\frac{\sqrt{P}}{\tau^\star}\leq H\leq \frac{\sqrt{P}}{\tau^\star}\}}\right]. \label{eq:ref1}
\end{equation}
Using \eqref{eq:ref1} and \eqref{eq:ref2}, we thus obtain:
\begin{equation*}
a(\overline{\bf x}^{\rm AO})\to \overline{a},
\end{equation*}
where
\begin{equation}
\overline{a}=\frac{\rho}{\tau^\star}+\frac{2P}{\tau^\star} \mathbb{P}\left[H\geq \frac{\sqrt{P}}{\tau^\star}\right]-2\sqrt{P}\mathbb{E}\left[H{\bf 1}_{\{H\geq \frac{\sqrt{P}}{\tau^\star}\}}\right]. \label{eq:overline_a}
\end{equation}
It follows from \eqref{eq:ref1} that:
\begin{equation*}
\frac{\rho}{\tau^\star}=\delta\tau^\star-\frac{2P}{\tau^\star} \mathbb{P}\left[H\geq \frac{\sqrt{P}}{\tau^\star}\right]-\tau^\star\mathbb{E}\left[H^2{\bf 1}_{\{-\frac{\sqrt{P}}{\tau^\star}\leq H\leq \frac{\sqrt{P}}{\tau^\star}\}}\right].
\end{equation*}
Plugging the above relation into \eqref{eq:overline_a} yields:
\begin{equation*}
\overline{a}=\delta \tau^\star - \tau^\star\mathbb{E}\left[H^2{\bf 1}_{\{-\frac{\sqrt{P}}{\tau^\star}\leq H\leq \frac{\sqrt{P}}{\tau^\star}\}}\right]-2\sqrt{P}\mathbb{E}\left[H{\bf 1}_{\{H\geq \frac{\sqrt{P}}{\tau^\star}\}}\right].
\end{equation*}
To continue, we use the fact that
\begin{equation*}
{\bf 1}_{\{-\frac{\sqrt{P}}{\tau^\star}\leq H\leq \frac{\sqrt{P}}{\tau^\star}\}}= 1- \left({\bf 1}_{\{H\geq \frac{\sqrt{P}}{\tau^\star}\}}+{\bf 1}_{\{H\leq -\frac{\sqrt{P}}{\tau^\star}\}}\right)
\end{equation*}
to obtain:
\begin{equation*}
\overline{a}=\delta\tau^\star-\tau^\star+2\tau^\star \mathbb{E}\left[H^2{\bf 1}_{\{H\geq \frac{\sqrt{P}}{\tau^\star}\}}\right]-2\sqrt{P}\mathbb{E}\left[H {\bf 1}_{\{H\geq \frac{\sqrt{P}}{\tau^\star}\}}\right].
\end{equation*}
Finally, using the fact that
\begin{equation*}
H^2{\bf 1}_{\{H\geq \frac{\sqrt{P}}{\tau^\star}\}}\geq \frac{\sqrt{P}H}{\tau^\star}{\bf 1}_{\{H\geq \frac{\sqrt{P}}{\tau^\star}\}}
\end{equation*}
we can easily check that:
\begin{equation*}
2\tau^\star \mathbb{E}\left[H^2{\bf 1}_{\{H\geq \frac{\sqrt{P}}{\tau^\star}\}}\right]-2\sqrt{P}\mathbb{E}\left[H {\bf 1}_{\{H\geq \frac{\sqrt{P}}{\tau^\star}\}}\right]\geq 0.
\end{equation*}
Since $\delta>1$, $\delta\tau^\star-\tau^\star \geq 0$. We thus conclude that $\overline{a}> 0$, which shows \eqref{eq:req}. 

\noindent{\bf Putting all things together.} The proof of the first item of Lemma \ref{lem:technical_1} follows directly from the above analysis. Indeed, considering function $\hat{\mathcal{L}}_{\lambda,\rho}$ in \eqref{eq:L_hat_def}, we proved in  \eqref{eq:uniform} that:
\begin{equation*}
\sup_{{\bf x}} \left|\hat{\mathcal{L}}_{\lambda,\rho}({\bf x}) - \mathcal{L}_{\lambda,\rho}({\bf x})\right|\to 0. 
\end{equation*}
Based on \eqref{eq:uniform_conv}, we have:
\begin{equation*}
\min_{\substack{{\bf x}\\ x_i^2\leq P}} \hat{\mathcal{L}}_{\lambda,\rho}({\bf x}) \to\overline{\phi}.
\end{equation*}
To prove the second item in Lemma \ref{lem:technical_1}, we let $\hat{t}=\min(\sqrt{P}\alpha^\star,\sqrt{P}\hat{\alpha})$ and use the fact that:
\begin{align*}
&\left|\left[\tilde{\bf x}^{\rm AO}\right]_i-\left[\overline{\bf x}^{\rm AO}\right]_i\right|\\ \leq \max&\left(|h_i|\left|\left(\frac{1}{\hat{\alpha}}-\frac{1}{\alpha^\star}\right){\bf 1}_{\{-t\leq h_i\leq t\}}\right|,\right. \left|\frac{h_i}{\alpha^\star}-\sqrt{P}\right|{\bf 1}_{\{\sqrt{P}\hat{\alpha}\leq h_i\leq \sqrt{P}\alpha^\star\}},  \left.\left|\frac{h_i}{\hat{\alpha}}+\sqrt{P}\right|{\bf 1}_{\{-\sqrt{P}{\hat{\alpha}}\leq h_i\leq -\sqrt{P}\alpha^\star\}}\right).
\end{align*}
Since $\hat{\alpha}-\alpha^\star$ converges to zero in probability, for any $\epsilon>0$, with probability approaching $1$, 
\begin{equation*}
\left|\hat{\alpha}-\alpha^\star\right|\leq \epsilon  
\end{equation*}
and 
\begin{equation*}
\left|\frac{1}{\hat{\alpha}}-\frac{1}{\alpha^\star}\right|\leq \epsilon. 
\end{equation*}
Hence, 
\begin{align*}
	\left|\left[\tilde{\bf x}^{\rm AO}\right]_i-\left[\overline{\bf x}^{\rm AO}\right]_i\right| \leq \max\left(\frac{\epsilon \hat{t}}{\hat{\alpha}\alpha^\star},\frac{\sqrt{P}\epsilon}{\alpha^\star},\frac{\sqrt{P}\epsilon}{\hat{\alpha}}\right)
	\leq \sqrt{P}\epsilon \left(1+\frac{\epsilon}{\alpha^\star}\right)\left(\frac{1}{\alpha^\star}+\epsilon\right).
\end{align*}
Choosing $\epsilon\leq \min(\frac{1}{2}\alpha^\star,1)$, we thus obtain
\begin{equation*}
\left|\left[\tilde{\bf x}^{\rm AO}\right]_i-\left[\overline{\bf x}^{\rm AO}\right]_i\right|\leq \frac{3}{2}\sqrt{P}(\frac{1}{\alpha^\star}+1)\epsilon .
\end{equation*}
This shows that:
\begin{equation}
\frac{1}{n}\left\|\tilde{\bf x}^{\rm AO}-\overline{\bf x}^{\rm AO}\right\|^2 \to 0. \label{eq:conv}
\end{equation}
Next, to prove \eqref{eq:toprove}, we recall that $\hat{\mathcal{L}}_{\lambda,\rho}$ writes as:
\begin{equation*}
\min_{{\bf x}}\hat{\mathcal{L}}_{\lambda,\rho}({\bf x})=\max_{C_{\beta}\geq \beta\geq 0} \beta {a}(\tilde{\bf x}^{\rm AO}) -\frac{\beta^2}{4}+\lambda \|\tilde{\bf x}^{\rm AO}\|^2,
\end{equation*}
where ${\bf x}\mapsto a({\bf x})$ is defined in \eqref{eq:ax}.
Let $$\hat{\hat{\beta}}=\arg\max_{C_\beta\geq \beta\geq 0} \beta a(\overline{\bf x}^{\rm AO})-\frac{\beta^2}{4}+\lambda\|\overline{\bf x}^{\rm AO}\|^2,$$
then, 
\begin{equation}
\min_{{\bf x}}\hat{\mathcal{L}}_{\lambda,\rho}({\bf x}) \geq \hat{\hat{\beta}} a(\tilde{\bf x}^{\rm AO}) -\hat{\hat{\beta}}+\lambda\|\tilde{\bf x}^{\rm AO}\|^2.\label{eq:red}
\end{equation}
With this we can use \eqref{eq:conv} to show that:
\begin{align}
\hat{\hat{\beta}} a(\tilde{\bf x}^{\rm AO}) -\hat{\hat{\beta}}+\lambda\|\tilde{\bf x}^{\rm AO}\|^2\geq \hat{\hat{\beta}} a(\overline{\bf x}^{\rm AO}) -\hat{\hat{\beta}}+\lambda\|\overline{\bf x}^{\rm AO}\|^2 -\epsilon=\hat{\mathcal{L}}_{\lambda,\rho}(\overline{\bf x}^{\rm AO})-\epsilon, 
\end{align}
and hence, in view of  \eqref{eq:red}, we obtain:
\begin{equation*}
\min_{{\bf x}}\hat{\mathcal{L}}_{\lambda,\rho}({\bf x}) \geq \hat{\mathcal{L}}_{\lambda,\rho}(\overline{\bf x}^{\rm AO})-\epsilon
\end{equation*}
which shows \eqref{eq:toprove}.
 
\subsection{Proof of Lemma \ref{lem:technical_2}}
\label{pl3}
\subsubsection{Proof of \eqref{eq:con}}
We start by proving that:
\begin{equation}
\tilde{\phi}_{\lambda,\rho}({\bf g},{\bf h})-\overline{\phi}\to 0. \label{eq:d}
\end{equation}
Recall that $\tilde{\phi}_{\lambda,\rho}({\bf g},{\bf h})$ is given by:
\begin{align}
\tilde{\phi}_{\lambda,\rho}({\bf g},{\bf h})&=\max_{{\bf u}\in\mathcal{S}_{{\bf u}}} \min_{x_i^2\leq P} \frac{1}{n}\|{\bf x}\|{\bf g}^{T}{\bf u}-\frac{1}{n}\|{\bf u}\|{\bf h}^{T}{\bf x} -\frac{\sqrt{\rho}{\bf u}^{T}{\bf s}}{\sqrt{n}} -\frac{\|{\bf u}\|^2}{4} +\frac{\lambda \|{\bf x}\|^2}{n},
\end{align}
we first note that:
\begin{equation}
\tilde{\phi}_{\lambda,\rho}({\bf g},{\bf h})\leq \phi_{\lambda,\rho}({\bf g},{\bf h}). \label{eq:1re}
\end{equation}
Since $\phi({\bf g},{\bf h})$ converges in probability to $\overline{\phi}$, for any $\eta>0$ with probability approaching $1$, 
\begin{equation*}
\phi_{\lambda,\rho}({\bf g},{\bf h})\leq \overline{\phi}+\eta
\end{equation*}
and thus in view of \eqref{eq:1re},
\begin{equation*}
\tilde{\phi}_{\lambda,\rho}({\bf g},{\bf h})\leq \phi_{\lambda,\rho}({\bf g},{\bf h})\leq \overline{\phi}+\eta.
\end{equation*}
To prove \eqref{eq:d}, it suffices to show that with probability approaching $1$
\begin{equation}
\tilde{\phi}_{\lambda,\rho}({\bf g},{\bf h})\geq \overline{\phi}-\eta.\label{eq:refq}
\end{equation}
For that, we note that necessarily at optimum ${\bf h}^{T}{\bf x}\geq 0$ because otherwise $-{\bf x}$ would lead to a higher cost. Hence, nothing would change if we write $\tilde{\phi}_{\lambda,\rho}({\bf g},{\bf h})$ as
\begin{align}
	\tilde{\phi}_{\lambda,\rho}({\bf g},{\bf h})&=\max_{{\bf u}\in\mathcal{S}_{\bf u}} \min_{\substack{x_i^2\leq P\\ {\bf h}^{T}{\bf x}\geq 0}} \frac{1}{n}\|{\bf x}\|{\bf g}^{T}{\bf u}-\frac{1}{n}\|{\bf u}\|{\bf h}^{T}{\bf x} -\frac{\sqrt{\rho}{\bf u}^{T}{\bf s}}{\sqrt{n}} -\frac{\|{\bf u}\|^2}{4} +\frac{\lambda \|{\bf x}\|^2}{n}. \label{eq:l}
\end{align}
Stating from \eqref{eq:l}, we can lower bound $\tilde{\phi}({\bf g},{\bf h})$ as
\begin{align}
\tilde{\phi}_{\lambda,\rho}({\bf g},{\bf h})&\geq \max_{\substack{{\bf u}\in\mathcal{S}_{{\bf u}}\\ {\bf g}^{T}{\bf u}\geq 0}}\min_{\substack{x_i^2\leq P\\ {\bf h}^{T}{\bf x}\geq 0}} \frac{1}{n}\|{\bf x}\|{\bf g}^{T}{\bf u}-\frac{1}{n}\|{\bf u}\|{\bf h}^{T}{\bf x} -\frac{\sqrt{\rho}{\bf u}^{T}{\bf s}}{\sqrt{n}} -\frac{\|{\bf u}\|^2}{4} +\frac{\lambda \|{\bf x}\|^2}{n}.
\end{align}
We can easily see that the objective function of the above problem is convex in ${\bf x}$ for any ${\bf u}$ satisfying ${\bf g}^{T}{\bf u}\geq 0$ and concave in ${\bf u}$ for any ${\bf x}$ such that ${\bf x}^{T}{\bf h}\geq 0$. We can thus flip the order of the max-min to find:
\begin{align}
	\tilde{\phi}({\bf g},{\bf h})&\geq \min_{\substack{x_i^2\leq P\\ {\bf h}^{T}{\bf x}\geq 0}} \max_{\substack{{\bf u}\in\mathcal{S}_{{\bf u}}\\ {\bf g}^{T}{\bf u}\geq 0}}\frac{1}{n}\|{\bf x}\|{\bf g}^{T}{\bf u}-\frac{1}{n}\|{\bf u}\|{\bf h}^{T}{\bf x} -\frac{\sqrt{\rho}{\bf u}^{T}{\bf s}}{\sqrt{n}} -\frac{\|{\bf u}\|^2}{4} +\frac{\lambda \|{\bf x}\|^2}{n}. \label{eq:phi_tilde}
\end{align}
If we discard the constraint ${\bf g}^{T}{\bf u}\geq 0$ and fixing the magnitude of ${\bf u}$ at $\beta\geq 0$, we can easily see that the optimal ${\bf u}$ is given by 
$$
{\bf u}^{\star}=\beta\frac{\frac{1}{n}\|{\bf x}\|{\bf g}-\sqrt{\frac{\rho}{n}}{\bf s}}{\|\frac{1}{n}\|{\bf x}\|{\bf g}-\sqrt{\frac{\rho}{n}}{\bf s}\|}
$$
and satisfies with probability approaching $1$ ${\bf g}^{T}{\bf u}^\star\geq 0$. Replacing ${\bf u}$ by ${\bf u}^\star$ in \eqref{eq:phi_tilde},
\begin{equation}
\tilde{\phi}({\bf g},{\bf h})\geq  \min_{\substack{x_i^2\leq P}}\mathcal{L}_{\lambda,\rho}({\bf x}).\label{eq:lr}
\end{equation}
It follows from Lemma \ref{lem:technical_1} that
$$
\min_{\substack{x_i^2\leq P}}\mathcal{L}_{\lambda,\rho}({\bf x})\to \overline{\phi}
$$
or equivalently for any $\eta>0$, with probability approaching $1$,
$$
\min_{\substack{x_i^2\leq P}}\mathcal{L}_{\lambda,\rho}({\bf x})\geq \overline{\phi}-\eta,
$$
which in view of \eqref{eq:lr} implies \eqref{eq:refq}. This proves the convergence in \eqref{eq:d}.
\subsubsection{Proof of \eqref{eq:conv0}}
It can be easily checked that ${\bf h}^{T}\overline{\bf x}^{\rm AO}\geq 0$. Hence, function $\tilde{\mathcal{F}}_{\lambda,\rho}$ is $\frac{1}{2}$-strongly concave. Moreover, we may use the same calculations as in Lemma \ref{lem:technical_1} to optimize over the direction and magnitude of ${\bf u}$. By doing so, we find:
\begin{align}
\max_{{\bf u}\in\mathcal{S}_{{\bf u}}}\tilde{\mathcal{F}}_{\lambda,\rho}({\bf u})&=\max_{0\leq \beta\leq C_\beta}\beta \left\|\frac{1}{n}\|\overline{\bf x}^{\rm AO}\|{\bf g}-\sqrt{\frac{\rho}{n}}{\bf s}\right\| -\beta \frac{1}{n}{\bf h}^{T}\overline{\bf x}^{\rm AO} -\frac{\beta^2}{4} +\frac{\lambda}{n}\|\overline{\bf x}^{\rm AO}\|^2.
\end{align}
To continue, we use  the fact that:
$$
\frac{1}{n}\|\tilde{\bf x}^{\rm AO}-\overline{\bf x}^{\rm AO}\|\to 0,
$$
to show that 
\begin{equation}
\max_{{\bf u}} \tilde{\mathcal{F}}_{\lambda,\rho}({\bf u}) - \tilde{\mathcal{L}}_{\lambda,\rho}(\tilde{\bf x}^{\rm AO}) \to 0. \label{eq:pre}
\end{equation}
From Lemma \ref{lem:technical_1},
$$
\tilde{\mathcal{L}}_{\lambda,\rho}(\tilde{\bf x}^{\rm AO})-\overline{\phi}\to 0,
$$
which thus in view of \eqref{eq:pre} implies the convergence in \eqref{eq:conv0}. 

\subsubsection{Proof of \eqref{eq:conv3}}

Clearly, the maximizer of $\tilde{\mathcal{F}}_{\lambda,\rho}$ is given by 
$$
\tilde{\bf u}=\tilde{\beta}\frac{\frac{1}{n}\|\overline{\bf x}^{\rm AO}\|{\bf g}-\sqrt{\frac{\rho}{n}}{\bf s}}{\|\frac{1}{n}\|\overline{\bf x}^{\rm AO}\|{\bf g}-\sqrt{\frac{\rho}{n}}{\bf s}\|},
$$
where $\tilde{\beta}$ is given by 
\begin{align}
\tilde{\beta}=\arg\max_{0\leq \beta\leq C_\beta} &\beta \left\|\frac{1}{n}\|\overline{\bf x}^{\rm AO}\|{\bf g}-\sqrt{\frac{\rho}{n}}{\bf s}\right\| -\beta \frac{1}{n}{\bf h}^{T}\overline{\bf x}^{\rm AO} -\frac{\beta^2}{4} +\frac{\lambda}{n}\|\overline{\bf x}^{\rm AO}\|^2. \label{eq:ab}
\end{align}
We can easily see that $\tilde{\beta}-\beta^\star\to 0$. This comes indeed from the facts that (\romannum{1}) $\frac{1}{n}\|\tilde{\bf x}^{\rm AO}-\overline{\bf x}^{\rm AO}\|\to 0$, which implies that the optimal cost in \eqref{eq:ab} converges to $\overline{\phi}$ and (\romannum{2}) the strict concavity of the asymptotic AO in \eqref{eq:asde1} with respect to $\beta$. Using the fact that  $\tilde{\beta}-\beta^\star\to 0$, we can thus easily see that:
\begin{equation}
\|\tilde{\bf u}^{\rm AO}-\breve{\bf u}^{\rm AO}\|\to 0, \label{eq:tuse}
\end{equation}
where
$$
\breve{\bf u}^{\rm AO}= \beta^\star\frac{\frac{1}{n}\|\overline{\bf x}^{\rm AO}\|{\bf g}-\sqrt{\frac{\rho}{n}}{\bf s}}{\|\frac{1}{n}\|\overline{\bf x}^{\rm AO}\|{\bf g}-\sqrt{\frac{\rho}{n}}{\bf s}\|}.
$$
To continue, we use the weak law of large numbers along with the fixed point equation in \eqref{eq:tau_r} to show that 
$$
\frac{1}{\sqrt{n}}\|\overline{\bf x}^{\rm AO}\|\to \sqrt{\delta(\tau^\star)^2-\rho}
$$
and 
$$
\left\|\frac{1}{n}\|\overline{\bf  x}^{\rm AO}\|{\bf g}-\sqrt{\frac{\rho}{n}}{\bf s}\right\|^2\to (\tau^\star)^2\delta^2.
$$
Using the above convergence, it takes no much effort to check that
$$
\|\breve{\bf u}^{\rm AO}-\overline{\bf u}^{\rm AO}\|\to 0,
$$
and thus in view of \eqref{eq:tuse}, we have 
$$
\|\tilde{\bf u}^{\rm AO}-\overline{\bf u}^{\rm AO}\|\to 0.
$$

\section{Proof of the results for the special cases}
\subsection{Proof of Theorem \ref{Theo_rzf}}
\label{Proof_rzf}
	We can prove that as $P\to\infty$, $\beta^\star(P)$ and $\tau^\star(P)$ converge to the solutions of the following max-min problem:
	\begin{equation}
	(\beta^\star,\tau^\star):=\arg\max_{\beta\geq 0}\min_{\tau\geq 0}  \frac{\tau\beta\delta}{2}+\frac{\rho\beta}{2\tau}-\frac{\beta^2}{4}-\frac{1}{2}\frac{\beta}{\frac{1}{\tau}+\frac{2\lambda}{\beta}}.
	\label{1st_condition}
	\end{equation}
	From the first-order optimality conditions, $(\beta_{\rm RZF}^\star,\tau_{\rm RZF}^\star)$ are solutions to the following system of equations:
	\begin{align}
		\tau\delta+\frac{\rho}{\tau}-\beta-\frac{1}{\frac{1}{\tau}+\frac{2\lambda}{\beta}}-\frac{2\lambda}{\beta(\frac{1}{\tau}+\frac{2\lambda}{\beta})^2}&=0, \label{eq:1v}\\
		\delta-\frac{\rho}{\tau^2}-\frac{1}{(1+\frac{2\lambda\tau}{\beta})^2}&=0. \label{eq:2v}
	\end{align}
Let $s=\frac{2\tau}{\beta}$. Then, it follows from \eqref{eq:2v} that 
$$
\frac{\rho}{\tau^2}=\delta-\frac{1}{(1+\lambda s)^2}.
$$
Plugging this relation into \eqref{eq:1v}, we may express \eqref{eq:1v} as 
$$
2\delta-\frac{2}{s}-\frac{1}{(1+\lambda s)^2}-\frac{1}{1+\lambda s}-\frac{\lambda s}{(1+\lambda s)^2}=0,
$$
or equivalently 
$$
\delta-\frac{1}{s}-\frac{1}{1+\lambda s}=0.
$$
The above equation admits a unique positive solution $s^\star$ given by:
\begin{equation}
s^{\star}=\frac{\sqrt{(\delta-\lambda -1)^2+4\delta\lambda}-\delta+\lambda+1}{2\delta \lambda}, \label{eq:star}
\end{equation}
and 
\begin{align}
		\beta_{\rm RZF}^\star&=\frac{2\sqrt{\rho}}{s^\star\sqrt{\delta-\frac{1}{(1+\lambda s^\star)^2}}},\\
		\tau_{\rm RZF}^\star&=\frac{\sqrt{\rho}}{\sqrt{\delta-\frac{1}{(1+\lambda s^\star)^2}}}.
	\end{align}
Finally, plugging the above expressions into the expressions of $P_b^\star$, $P_d^\star$, $P_e^\star$ and $\rm{\overline{SINAD}}_{lb}^\star$ yields the convergences in \eqref{rzf_pb}-\eqref{rzf_pe}.

\section{Proof of the results for the limiting cases}
\label{sec8}
\subsection{Proof of Theorem \ref{th:as_1}}
\label{ass:as_1}
To begin with, we perform the change of variables $\tau'=\tau\sqrt{\delta}$ and $\beta'= \frac{\beta}{\sqrt{\delta}}$ and write $\overline{\phi}$ as 
\begin{equation}
\overline{\phi}=\max_{\beta'\geq 0}\min_{\tau'\geq 0} \frac{\tau'\beta' \delta}{2}+\frac{\rho\beta'\delta}{2\tau'}-\frac{\beta'{}^{2}\delta}{4}+Y(\beta'\sqrt{\delta},\frac{\tau'}{\sqrt{\delta}}) .\label{eq:as}
\end{equation}
Obviously, the saddle point of $\overline{\phi}$ remain the same if we divide the cost by $\delta$. We will thus consider the normalized cost $\overline{\phi}_{\delta}$ given by 
\begin{equation}
\overline{\phi}_{\delta}= \max_{\beta'\geq 0}\min_{\tau'\geq 0} \frac{\tau'\beta'}{2}+\frac{\rho\beta'}{2\tau'}-\frac{\beta'{}^{2}}{4}+\frac{1}{\delta}Y(\beta'\sqrt{\delta},\frac{\tau'}{\sqrt{\delta}}) .\label{eq:ine1}
\end{equation}
 Since $\tau'\mapsto Y(\beta'\sqrt{\delta},\frac{\tau'}{\sqrt{\delta}})$ is decreasing, 
$$
\overline{\phi}_{\delta}\geq \max_{\beta'\geq 0}\min_{\tau'\geq 0} \frac{\tau'\beta'}{2}+\frac{\rho\beta'}{2\tau'}-\frac{\beta'{}^{2}}{4}+\lim_{\tau'\to\infty} \frac{1}{\delta}Y(\beta'\sqrt{\delta},\frac{\tau'}{\sqrt{\delta}}).
$$
Moreover, we can easily check that 
\begin{align}
&\lim_{\tau'\to\infty} \frac{1}{\delta}Y(\beta'\sqrt{\delta},\frac{\tau'}{\sqrt{\delta}})\nonumber\\
&=\frac{(\beta')^2}{2\lambda} \left\{\mathbb{E}\left[(H-\sqrt{P}\frac{2\lambda}{\beta'\sqrt{\delta}})^2{\bf 1}_{\{H\geq \sqrt{P}\frac{2\lambda}{\beta'\sqrt{\delta}}\}}\right]-\frac{1}{2}\right\}\\
&\geq -\frac{(\beta')^2}{4\lambda}.
\end{align}
This together with \eqref{eq:ine1} yields 
\begin{equation}
\overline{\phi}_{\delta}\geq \max_{\beta'\geq 0}\min_{\tau'\geq 0} \frac{\tau'\beta'}{2}+\frac{\rho\beta'}{2\tau'}-\frac{\beta'{}^{2}}{4}-\frac{(\beta')^2}{4\lambda}.\label{eq:lb1}
\end{equation}
On the other hand, we have
\begin{align}
\overline{\phi}_{\delta}&\leq \max_{\beta \geq 0} \left[\frac{\tau'\beta'}{2}+\frac{\rho\beta'}{2\tau'}-\frac{\beta'{}^{2}}{4}+\frac{1}{\delta}Y(\beta'\sqrt{\delta},\frac{\tau'}{\sqrt{\delta}})\right]_{\tau'=\sqrt{\rho}}\\
&= \max_{\beta \geq 0} \min_{\tau'\geq 0} \frac{\tau'\beta'}{2}+\frac{\rho\beta'}{2\tau'}-\frac{\beta'{}^{2}}{4}+\frac{1}{\delta}Y(\beta'\sqrt{\delta},\frac{\sqrt{\rho}}{\sqrt{\delta}}), \label{eq:ref}
\end{align}
where the last equality follows by noticing that function $\frac{\tau'}{2}+\frac{\rho}{2\tau'}$ takes its minimum when $\tau'=\sqrt{\rho}$. 
Function $\beta'\mapsto \min_{\tau'\geq 0} \frac{\tau'\beta'}{2}+\frac{\rho\beta'}{2\tau'}-\frac{\beta'{}^{2}}{4}+\frac{1}{\delta}Y(\beta'\sqrt{\delta},\frac{\sqrt{\rho}}{\sqrt{\delta}})$ is concave and tends to $-\infty$ as $\beta\to\infty$. Moreover, as $\delta$ tends to zero, 
$$
\lim_{\delta\to 0} \frac{1}{\delta}Y(\beta'\sqrt{\delta},\frac{\sqrt{\rho}}{\sqrt{\delta}})= -\frac{(\beta')^2}{4\lambda}.
$$
Hence, using Lemma 10 in \cite{mesti}, we thus have 
\begin{align*}
&\lim_{\delta\to 0} \max_{\beta' \geq 0} \min_{\tau'\geq 0} \frac{\tau'\beta'}{2}+\frac{\rho\beta'}{2\tau'}-\frac{\beta'{}^{2}}{4}+\frac{1}{\delta}Y(\beta'\sqrt{\delta},\frac{\sqrt{\rho}}{\sqrt{\delta}})\nonumber\\
& =  \max_{\beta' \geq 0} \min_{\tau'\geq 0} \frac{\tau'\beta'}{2}+\frac{\rho\beta'}{2\tau'}-\frac{(\beta')^2}{4}-\frac{(\beta')^2}{4\lambda}.
\end{align*}
Combining this with \eqref{eq:ref} and \eqref{eq:lb1} we obtain 
$$
\lim_{\delta\to 0} \overline{\phi}_{\delta}=  \max_{\beta' \geq 0} \min_{\tau'\geq 0} \frac{\tau'\beta'}{2}+\frac{\rho\beta'}{2\tau'}-\frac{(\beta')^2}{4}-\frac{(\beta')^2}{4\lambda}.
$$
The saddle point of the limiting max-min problem in the above equation is unique and is given by: $(\overline{\beta}'=\frac{2\sqrt{\rho}}{1+\frac{1}{\lambda}},\overline{\tau}'=\sqrt{\rho})$. Therefore, letting $\hat{\beta}'(\delta)$ and $\hat{\tau}'(\delta)$ be the saddle point of the max-min problem in \eqref{eq:as}, we have 
$$
\lim_{\delta\to 0}\hat{\beta'}(\delta)= \frac{2\sqrt{\rho}}{1+\frac{1}{\lambda}}
$$
and 
$$
\lim_{\delta\to 0}\hat{\tau}'(\delta)= \sqrt{\rho}.
$$
Finally, going back to the original variables $\tau=\frac{\tau'}{\sqrt{\delta}}$ and $\beta=\sqrt{\delta}\beta'$ yields the convergences in \eqref{tau_delta_zero} and \eqref{beta_delta_zero}. Using these convergences in the asymptotic expressions of $P_b^\star$, $P_d^\star$, $P_e^\star$ and $\overline{\rm SINAD}_{\rm lb}^\star$, we can directly obtain \eqref{eq:P_b_delta_zero}-\eqref{eq:SINR_delta_zero}.

\subsection{Proof of Theorem \ref{th:as_2}}
\label{ass:as_2}
Similar to the proof of Theorem~\ref{th:as_1}, we work with the change of variable $\tau'=\tau\sqrt{\delta}$ and $\beta'=\frac{\beta}{\sqrt{\delta}}$ and consider the max-min problem in \eqref{eq:ine1}. As a first step, we prove that the optimal $\tau'$ lies in the bounded interval $[0,2\sqrt{\rho+1}]$ for sufficiently large $\delta$. To see this, we begin by  rewriting $\overline{\phi}_\delta$ as 
\begin{equation}
\overline{\phi}_\delta=\max_{\beta'\geq 0} \beta' \left(\min_{\tau'\geq 0} \frac{\tau'}{2}+\frac{\rho}{2\tau'}+\frac{1}{\delta}\overline{Y}(\beta',\tau')\right)-\frac{(\beta')^2}{4} ,\label{eq:er}
\end{equation}
where:
$$
\overline{Y}(\beta',\tau')=\frac{\sqrt{\delta}}{\alpha_\delta} \left(\mathbb{E}\left[(H-\sqrt{P}\alpha_\delta)^2{\bf 1}_{\{H\geq \sqrt{P}\alpha_\delta\}}\right]-\frac{1}{2}\right),
$$
where $\alpha_\delta=\frac{\sqrt{\delta}}{\tau'}+\frac{2\lambda}{\sqrt{\delta}\beta'}$. Next, we define $\overline{\phi}_{\delta,1}(\beta)$ and $\overline{\phi}_{\delta,2}(\beta)$ as the optimal costs of the minimization problem in \eqref{eq:er} when $\tau$ is constrained in the interval $[0,2\sqrt{\rho+1}]$ and in the interval $[2\sqrt{\rho+1},\infty]$, respectively, namely:
\begin{align}
	\overline{\phi}_{\delta,1}(\beta)&=  \min_{0\leq\tau'\leq 2\sqrt{\rho+1}} \frac{\tau'}{2}+\frac{\rho}{2\tau'}+\frac{1}{\delta}\overline{Y}(\beta',\tau'),\\
	\overline{\phi}_{\delta,2}(\beta)&= \min_{\tau'\geq 2\sqrt{\rho+1}} \frac{\tau'}{2}+\frac{\rho}{2\tau'}+\frac{1}{\delta}\overline{Y}(\beta',\tau').
\end{align}
Since $\frac{1}{\delta}\overline{Y}(\beta',\tau')\leq 0$, for any fixed $\beta'$,
\begin{align}
\overline{\phi}_{\delta,1}(\beta)\leq \min_{0\leq\tau'\leq 2\sqrt{\rho+1}} \frac{\tau'}{2}+\frac{\rho}{2\tau'}
=\sqrt{\rho}. \label{eq:refd}
\end{align}
On the other hand, one can easily check that:
$$
\frac{1}{\delta}\overline{Y}(\beta',\tau')\geq -\frac{1}{\sqrt{\delta}\alpha_\delta}\geq -\frac{\tau'}{\delta}.
$$
Hence,
$$
\overline{\phi}_{\delta,2}(\beta)\geq \min_{\tau'\geq 2\sqrt{\rho+1}} \frac{\tau'}{2}+\frac{\rho}{2\tau'}-\frac{\tau'}{\delta}.
$$
Now, function $\tau'\mapsto \frac{\tau'}{2}+\frac{\rho}{2\tau'}-\frac{\tau'}{\delta}$ is non-decreasing on the interval $[\frac{\sqrt{\rho}}{\sqrt{1-\frac{1}{\delta}}},\infty)$. Hence, for $\delta\geq \frac{4}{3}$, this function is non-decreasing on the interval $[2\sqrt{\rho+1},\infty)$. Consequently, for sufficiently large $\delta$,
\begin{align}
\overline{\phi}_{\delta,2}(\beta)\geq \sqrt{\rho+1}\left(1-\frac{2}{\delta}\right)+\frac{\rho}{4\sqrt{\rho+1}}
=\sqrt{\rho+1}\left(\frac{5}{4}-\frac{2}{\delta}\right)-\frac{1}{4\sqrt{\rho+1}}.
\end{align}
Obviously, when $\delta\geq 8$, one can check that:
$$
\sqrt{\rho+1}\left(\frac{5}{4}-\frac{2}{\delta}\right)-\frac{1}{4\sqrt{\rho+1}}\geq \sqrt{\rho+1}-\frac{1}{4\sqrt{\rho+1}}\geq \sqrt{\rho},
$$
and thus:
\begin{equation}
\overline{\phi}_{\delta,2}(\beta)\geq \sqrt{\rho} \label{eq:refd1}.
\end{equation}
Combining \eqref{eq:refd} with \eqref{eq:refd1}, we thus conclude that for $\delta\geq 8$, the optimal $\tau'$ lies in the interval $[0,2\sqrt{\rho+1}]$. As a result, for sufficiently large $\delta$,  $\overline{\phi}_\delta$ writes as 
$$
\overline{\phi}_{\delta}=\max_{\beta'\geq 0} \beta' \left(\min_{0\leq\tau'\leq 2\sqrt{\rho+1}} \frac{\tau'}{2}+\frac{\rho}{2\tau'}+\frac{1}{\delta}\overline{Y}(\beta',\tau')\right)-\frac{(\beta')^2}{4}.
$$
With the new rewriting above of $\overline{\phi}_{\delta}$, we are now ready to prove the desired result. For that, similar to Lemma \ref{th:as_1}, we prove that
\begin{equation}
\lim_{\delta\to\infty}\overline{\phi}_{\delta}= \max_{\beta'\geq 0} \beta' \left(\min_{0\leq\tau'\leq 2\sqrt{\rho+1}} \frac{\tau'}{2}+\frac{\rho}{2\tau'}\right)-\frac{(\beta')^2}{4}.\label{eq:dd}
\end{equation}
Indeed, if \eqref{eq:dd} holds true, then one can easily check that the saddle point of the max-min problem is unique and is equal to $\overline{\beta}'=2\sqrt{\rho}$, $\overline{\tau}'=\sqrt{\rho}$. Hence, denoting by $\hat{\beta'}(\delta)$ and $\hat{\tau}'(\delta)$ the saddle point of the optimization problem in \eqref{eq:er}, we thus have 
\begin{align}
	\lim_{\delta\to\infty}\hat{\beta}'(\delta)&=2\sqrt{\rho}, \label{eq:r1}\\
		\lim_{\delta\to\infty}\hat{\tau}'(\delta)&=\sqrt{\rho}. \label{eq:r2}
\end{align}
Hence, going back to the original variables $\tau=\frac{\tau'}{\sqrt{\delta}}$ and $\beta=\sqrt{\delta}\beta'$ yields the sought-for result. 
To prove \eqref{eq:dd}, we use the fact that since the objective function tends to $-\infty$ as $\beta'$ grows to $\infty$, the optimal $\beta'$ is bounded by a constant $C_{\beta'}$. Based on this, it clearly  suffices to show that
\begin{equation}
\lim_{\delta\to\infty}\sup_{\substack{0\leq \beta'\leq C_{\beta'}\\ 0\leq \tau'\leq 2\sqrt{\rho+1}}} \left|\frac{1}{\delta}\overline{Y}(\beta',\tau')\right|\to 0  \label{eq:dr}
\end{equation}
to obtain \eqref{eq:dd} and thus the desired results in \eqref{eq:r1} and \eqref{eq:r2}. To show \eqref{eq:dr}, we use the fact that since $\overline{Y}(\beta',\tau')\leq 0$, we have
$$
\left|\frac{1}{\delta}\overline{Y}(\beta',\tau')\right|\leq \frac{1}{2\sqrt{\delta}\alpha_\delta}\leq \frac{\tau'}{\delta},
$$
and thus,
$$
\sup_{\substack{0\leq \beta'\leq C_{\beta'}\\ 0\leq \tau'\leq 2\sqrt{\rho+1}}} \left|\frac{1}{\delta}\overline{Y}(\beta',\tau')\right|\leq \frac{2\sqrt{\rho+1}}{\delta}\underset{\delta\to\infty}{\to} 0.
$$
This completes the proof of \eqref{eq:r1} and \eqref{eq:r2} and thus that of the convergences in \eqref{tau_delta_inf} and \eqref{beta_delta_inf}. Using  these convergences into the expressions of $P_b^\star$, $P_d^\star$, $P_e^\star$ and $\overline{\rm SINAD}_{\rm lb}^\star$, the convergences in \eqref{P_b_delta_inf}-\eqref{P_e_delta_inf} follow easily. 

\subsection{Proof of Theorem \ref{th:power_control_zero}}
\label{Proof_rho0}

To begin with, we perform the change of variables $\tau^{'}=\tau/\sqrt{\rho}$ and $\beta^{'}=\beta/\sqrt{\rho}$ to write $\bar{\phi}$ as 

\begin{equation}
\overline{\phi}=\max_{\beta'\geq 0}\min_{\tau'\geq 0} \frac{\tau'\beta' \delta\rho}{2}+\frac{\rho\beta'}{2\tau'}-\frac{\beta'^{2}\rho}{4}+Y(\beta'\sqrt{\rho},\tau'\sqrt{\rho}). \label{up_eq:as}
\end{equation}
Obviously, the saddle point of $\overline{\phi}$ remain the same if we divide the cost by $\rho$. We will thus consider the normalized cost $\overline{\phi}_{\rho}$ given by 
\begin{equation}
\overline{\phi}_{\rho}= \max_{\beta'\geq 0}\min_{\tau'\geq 0} \frac{\tau'\beta'\delta}{2}+\frac{\beta'}{2\tau'}-\frac{\beta'{}^{2}}{4}+\frac{1}{\rho}Y(\beta'\sqrt{\rho},\tau'\sqrt{\rho}). \label{up_eq:ine1}
\end{equation}
It takes no much effort to see that 
\begin{equation}
\begin{split}
&\frac{1}{\rho}Y(\beta'\sqrt{\rho},\tau'\sqrt{\rho}) %=&\frac{\beta'}{\frac{1}{\tau'}+\frac{2\lambda}{\beta'}}\left\{\mathbb{E}\left[\left(H-\sqrt{\frac{P}{\rho}}\left(\frac{1}{\tau'}+\frac{2\lambda}{\beta'}\right)\right)^2{\bf 1}_{\{H\geq \sqrt{\frac{P}{\rho}}\left(\frac{1}{\tau'}+\frac{2\lambda}{\beta'}\right)\}}\right]-\frac{1}{2}\right\}\\
 \geq-\frac{1}{2}\frac{\beta'}{\frac{1}{\tau'}+\frac{2\lambda}{\beta'}},\label{lb}
\end{split}
\end{equation}
which directly implies that 
\begin{equation}
\overline{\phi}_{\rho}\geq \overline{\phi}_{\rho,{\rm lb}}, \label{eq:phi_lb}%:=\max_{\beta'\geq 0}\min_{\tau'\geq 0} \frac{\tau'\beta'\delta}{2}+\frac{\beta'}{2\tau'}-\frac{\beta'{}^{2}}{4}-\frac{1}{2}\frac{\beta'}{\frac{1}{\tau'}+\frac{2\lambda}{\beta'}},\label{phi_lb}
\end{equation}
where 
\begin{equation}
	 \overline{\phi}_{\rho,{\rm lb}}:=\max_{\beta'\geq 0}\min_{\tau'\geq 0} \frac{\tau'\beta'\delta}{2}+\frac{\beta'}{2\tau'}-\frac{\beta'{}^{2}}{4}-\frac{1}{2}\frac{\beta'}{\frac{1}{\tau'}+\frac{2\lambda}{\beta'}}.\label{phi_lb}
\end{equation}
Denote by $\overline{\tau}'$ and $\overline{\beta}'$  optimal solutions in $(\tau,\beta)$ to \eqref{phi_lb}. It is easy to check that the optimization problem in \eqref{phi_lb} is the same one as in \eqref{1st_condition} (Proof of Theorem \ref{Theo_rzf}) when $\rho=1$. Hence, based on the proof of Theorem \ref{Theo_rzf}, we conclude that $\overline{\tau}'$ and $\overline{\beta}'$ are unique and are given by 
\begin{align}
	\overline{\tau}'&=\frac{1}{\sqrt{\delta-\frac{1}{(1+\lambda s^\star)^2}}},\\
	\overline{\beta}'&=\frac{2}{s^\star\sqrt{\delta-\frac{1}{(1+\lambda s^\star)^2}}},
\end{align}
where $s^\star$ is defined in \eqref{eq:star}. To continue, we note that 
$$
\overline{\phi}_{\rho}\leq \overline{\phi}_{\rho,\rm{up}},
$$
where
\begin{equation}
	\overline{\phi}_{\rho,\rm{up}}:=\max_{\beta'\geq 0} \frac{\overline{\tau}'\beta'\delta}{2} +\frac{\beta'}{2\overline{\tau}'}-\frac{\beta'^2}{4}+\frac{1}{\rho}Y(\beta'\sqrt{\rho},\overline{\tau}'\sqrt{\rho}).\label{eq:phi_up}
\end{equation}
The objective function in \eqref{eq:phi_up} is concave in $\beta'$ and tends to $-\infty$ as $\beta'$ grows to infinty. From Lemma 10 in \cite{mesti}, 
$$
\lim_{\rho\to 0} \overline{\phi}_{\rho,\rm{up}} = \max_{\beta'\geq 0} \frac{\overline{\tau}'\beta'\delta}{2} +\frac{\beta'}{2\overline{\tau}'}-\frac{\beta'^2}{4}+\lim_{\rho\to 0}\frac{1}{\rho}Y(\beta'\sqrt{\rho},\overline{\tau}'\sqrt{\rho}).
$$
For fixed $\beta'$, it can be readily seen that 
$$
\lim_{\rho\to 0  }\frac{1}{\rho}Y(\beta'\sqrt{\rho},\overline{\tau}'\sqrt{\rho}) = -\frac{1}{2} \frac{\beta'}{\frac{1}{\overline{\tau}'}+\frac{2\lambda}{\beta'}}
$$
and thus 
$$
\lim_{\rho\to 0} \overline{\phi}_{\rho,\rm{up}} = \max_{\beta'\geq 0} \frac{\overline{\tau}'\beta'\delta}{2} +\frac{\beta'}{2\overline{\tau}'}-\frac{\beta'^2}{4}-\frac{1}{2}\frac{\beta'}{\frac{1}{\overline{\tau}'}+\frac{2\lambda}{\beta'}}.
$$
This in combination with \eqref{eq:phi_lb} yields:
\begin{equation}
\lim_{\rho\to 0}\overline{\phi}_\rho= \max_{\beta'\geq 0} \frac{\overline{\tau}'\beta'\delta}{2} +\frac{\beta'}{2\overline{\tau}'}-\frac{\beta'^2}{4}-\frac{1}{2}   \frac{\beta'}{\frac{1}{\overline{\tau}'}+\frac{2\lambda}{\beta'}}.\label{eq:final}
\end{equation}
Let  $\hat{\beta}'(\rho)$ and $\hat{\tau}'(\rho)$ be solutions to \eqref{up_eq:as}. Then, based on \eqref{eq:final} and using the uniqueness of the solutions $\overline{\tau}'$ and $\overline{\beta}'$, we have 
\begin{align}
	\lim_{\rho\to 0} \hat{\tau}'(\rho)& = \overline{\tau}',\\
	\lim_{\rho\to 0} \hat{\beta}'(\rho)&= \overline{\beta}'.
\end{align}
Hence, going back to the original variables $\tau^{'}=\tau/\sqrt{\rho}$ and $\beta^{'}=\beta/\sqrt{\rho}$ yields the convergences in \eqref{eq:tau_rho} and \eqref{eq:beta_rho}.

\subsection{Proof of Theorem \ref{th:limit_rho}}
\label{app:limit_rho}
The proof is organized in two parts. In the first part, we follow the same line steps as in the proof of Theorem \ref{th:as_2} to show that 
\begin{equation}
\lim_{\rho\to\infty} \frac{\tau^\star(\rho)}{\sqrt{\frac{\rho}{\delta}}} = 1 \ \ \mathrm{and} \ \  \lim_{\rho\to \infty} \frac{\beta^\star(\rho)}{2\sqrt{\rho\delta}}=1 . \label{eq:first_step}
\end{equation}
In a second step, we use the fixed point equations in \eqref{eq:tau_r} to refine the approximation in \eqref{eq:first_step}.

\noindent{\bf First part: Proof of \eqref{eq:first_step}.} We start by performing the change of variables $\tau'=\frac{\tau}{\sqrt{\rho}}$ and $\beta'=\frac{\beta}{\sqrt{\rho}}$ to write $\overline{\phi}$ as 
$$
\overline{\phi}=\max_{\beta'\geq 0} \min_{\tau'\geq 0} \frac{\tau'\beta'\rho\delta}{2} +\frac{\rho\beta'}{2\tau'} - \frac{\rho(\beta')^2}{4} +Y(\beta'\sqrt{\rho},\tau'\sqrt{\rho}).
$$
For convenience, we consider the normalized cost $\overline{\phi}_{\rho}$ given by 
\begin{align}
\overline{\phi}_\rho= \max_{\beta'\geq 0} \min_{\tau'\geq 0}& \beta'\left(\frac{\tau'\delta}{2} +\frac{1}{2\tau'}  +
\frac{1}{\sqrt{\rho}}\overline{Y}_{\rho,1}(\alpha')-\frac{1}{2\alpha'}\overline{Y}_{\rho,2}(\alpha')\right) - \frac{(\beta')^2}{4}, \label{eq:prho}
%\frac{1}{\rho}\overline{Y}_{\rho}(\beta'\sqrt{\rho},\tau'\sqrt{\rho})\right)  \label{eq:prho}
\end{align}
where
\begin{align}
	\overline{Y}_{\rho,1}(\alpha')&=\sqrt{P}\mathbb{E}\left[(\frac{\sqrt{P}}{\sqrt{\rho}}\alpha'-2H){\bf 1}_{\{H\geq \frac{\sqrt{P}}{\sqrt{\rho}}\alpha'\}}\right],\\
		\overline{Y}_{\rho,2}(\alpha')&=\mathbb{E}\left[H^2{\bf 1}_{\{-\frac{\sqrt{P}}{\sqrt{\rho}}\alpha'\leq H\leq \frac{\sqrt{P}\alpha'}{\sqrt{\rho}}\}}\right],
\end{align}
with $\alpha'=\frac{1}{\tau'}+\frac{2\lambda}{\beta'}$. Note that to find \eqref{eq:prho}, we used the expression of $Y(\beta,\tau)$ in \eqref{eq:rep1} instead of \eqref{eq:rr}. 

To continue, we need to show that
\begin{equation}
	\lim_{\rho\to\infty} \sup_{\substack{\beta'\geq 0\\ \tau'\geq 0 }} \frac{1}{\sqrt{\rho}} \left|\overline{Y}_{\rho,1}(\alpha')\right| \to 0, \label{eq:relation_1}
\end{equation}
and
\begin{equation}
	\lim_{\rho\to\infty} \sup_{\substack{\beta'\geq 0\\ \tau'\geq 0 }} \frac{1}{2\alpha'}\overline{Y}_{\rho,2}(\alpha') \to 0. \label{eq:relation_2}
\end{equation}
\noindent{\underline{{Proof of \eqref{eq:relation_1}}}}. To begin with, we note that 
	\begin{align}
	 \frac{1}{\sqrt{\rho}} \left|\overline{Y}_{\rho,1}(\alpha')\right|\leq \frac{\sqrt{P}}{\sqrt{\rho}}\mathbb{E}\left[|2H|{\bf 1}_{\{H\geq \frac{\sqrt{P}}{\sqrt{\rho}}\alpha'\}}\right] \leq \frac{\sqrt{P}}{\sqrt{\rho}}\mathbb{E}\left[|2H|\right].\label{eq:e1}
	\end{align}
Hence, 
$$
 \sup_{\substack{\beta'\geq 0\\ \tau'\geq 0 }} \frac{1}{\sqrt{\rho}} \left|\overline{Y}_{\rho,1}(\alpha')\right|\leq \frac{\sqrt{P}}{\sqrt{\rho}}\mathbb{E}\left[|2H|\right]\underset{\rho\to\infty }{\to}0 .
$$

\noindent{\underline{{Proof of \eqref{eq:relation_2}}}}.
Clearly, $\overline{Y}_{\rho,2}(\alpha')$ can be upper bounded as:
\begin{align}
|\overline{Y}_{\rho,2}(\alpha')|&\leq  \alpha'\frac{\sqrt{P}}{\sqrt{\rho}} \mathbb{E}\left[|H|{\bf 1}_{\{|H|\leq \frac{\sqrt{P}}{\sqrt{\rho}}\alpha'\}}\right]\\
&\leq \alpha'\frac{\sqrt{P}}{\sqrt{\rho}} \mathbb{E}[|H|],
\end{align}
which yields:
$$
\sup_{\substack{\beta'\geq 0\\ \tau'\geq 0} }\frac{1}{2\alpha'}|\overline{Y}_{\rho,2}(\alpha')|\leq \frac{1}{2}\frac{\sqrt{P}}{\sqrt{\rho}} \mathbb{E}[|H|]\underset{\rho\to\infty}\to 0.
$$
With \eqref{eq:relation_1} and \eqref{eq:relation_2} at hand, we obtain 
$$
\lim_{\rho\to\infty} \overline{\phi}_{\rho}=\max_{\beta'\geq 0}\min_{\tau'\geq 0} \beta'\left(\frac{\tau'\delta}{2}+\frac{1}{2\tau'}\right)-\frac{(\beta')^2}{4}.
$$
The above limiting optimization problem possesses a unique saddle point given by $(\beta'=2\sqrt{\delta}, \tau'=\frac{1}{\sqrt{\delta}})$. Denoting  by $\hat{\beta}'(\rho)$ and $\hat{\tau}'(\rho)$ the saddle point of the optimization problem in \eqref{eq:prho}, we thus obtain 
$$
\lim_{\rho\to\infty} \hat{\beta}'(\rho) = 2\sqrt{\delta} \ \ \mathrm{and}\ \ \lim_{\rho\to\infty} \hat{\tau}'(\rho)=\frac{1}{\sqrt{\delta}}.
$$
Hence, going back to the original variables $\tau=\sqrt{\rho}\tau'$ and $\beta=\sqrt{\rho}\beta'$ yields the convergence in \eqref{eq:first_step}. 

\noindent{{\bf Second part: {Approximation's refinements.}}}
Denoting by $\alpha^\star(\rho)=\frac{1}{\tau^\star(\rho)}+\frac{2\lambda}{\beta^\star(\rho)}$,  it follows from the convergence in \eqref{eq:first_step} that $\alpha^\star(\rho)$ goes to zero and verifies the following convergence:
$$
\lim_{\rho\to \infty} \frac{\alpha^\star(\rho)}{\sqrt{\frac{\delta}{\rho}}+\frac{\lambda}{\sqrt{\rho\delta}}}=1.
$$
To continue, we exploit the fixed point equation in \eqref{eq:tau_r} and rewrite it as 
\begin{equation}
(\tau^\star)^2\delta-\rho=2Pf_1(\sqrt{P}\alpha^\star)+\frac{1}{(\alpha^\star)^2}(1-2f_2(\sqrt{P}\alpha^\star)), \label{eq:fixed}
\end{equation}
where functions $f_1$ and $f_2$ are given by 
\begin{align}
f_1(x)&=\frac{1}{\sqrt{2\pi}}\int_{\sqrt{P}x}^{\infty} \exp(-\frac{t^2}{2})dt,\\
f_2(x)&=\frac{1}{\sqrt{2\pi}}\int_{\sqrt{P}x}^\infty t^2\exp(-\frac{t^2}{2})dt.
\end{align}
By using standard calculations, we may expand the Taylor expansion of $f_1(\sqrt{P}\alpha^\star)$ and $f_2(\sqrt{P}\alpha^\star)$ for $\alpha$ near zero as 
\begin{align}
	f_1(\sqrt{P}\alpha^\star)&= \frac{1}{2}-\frac{\sqrt{P}}{\sqrt{2\pi}} \alpha^\star +O((\alpha^\star)^2),\\
	f_2(\sqrt{P}\alpha^\star)&= \frac{1}{2}-\frac{P^3(\alpha^\star)^3}{3\sqrt{2\pi}}+O((\alpha^\star)^4).
\end{align}
Plugging the above approximations into \eqref{eq:fixed} yields:
\begin{align}
&(\tau^\star(\rho))^2\delta-\rho= P+\frac{2P^2}{\sqrt{2\pi}}\left(\frac{P}{3}-1\right)\alpha^\star +O((\alpha^\star)^2)\\
&=P+\frac{2P^2}{\sqrt{2\pi}}\left(\frac{P}{3}-1\right)\left(\sqrt{\frac{\delta}{\rho}}+\frac{\lambda}{\sqrt{\rho\delta}}\right) +O(\frac{1}{\rho})
\end{align}
and hence, we get after straightforward calculations:
\begin{equation}
\tau^\star(\rho)=\frac{\sqrt{\rho}}{\sqrt{\delta}}+\frac{P}{2\sqrt{\delta{\rho}}}+O(\frac{1}{\rho}). \label{eq:t}
\end{equation}
To find a similar approximation for $\beta^\star(\rho)$, we first take the derivative in \eqref{eq:asde1} with respect to $\beta$ to obtain the following relation
\begin{equation}
\beta^\star(\rho)=\tau^\star(\rho)\delta +\frac{\rho}{\tau^\star(\rho)} +2\frac{\partial Y}{\partial \beta}(\beta^\star,\tau^\star) .\label{eq:b}
\end{equation}
Simple calculations leads to 
$$
\frac{\partial Y}{\partial \beta}= -\frac{\sqrt{2P}}{\sqrt{\pi}}+O(\frac{1}{\sqrt{\rho}}).
$$
Plugging this together with \eqref{eq:t} into \eqref{eq:b} yields 
\begin{equation}
\beta^\star(\rho)=2\sqrt{\rho\delta}-2\frac{\sqrt{2P}}{\sqrt{\pi}}+O(\frac{1}{\sqrt{\rho}}). \label{eq:beta_equiv}
\end{equation}
Finally, plugging the asymptotic equivalences \eqref{eq:t} and \eqref{eq:beta_equiv} into the asymptotic expressions of $P_b^\star$, $P_d^\star$, $P_e^\star$ and $\overline{\rm SINAD}_{\rm lb}$, we obtain the convergences in \eqref{eq:P_b_rho_inf}-\eqref{eq:SINR_rho_inf}.

\section{A Useful technical Lemma}
%\numberwithin{equation}{section}
%\label{append}
 \begin{lemma}[Lemma B1 in \cite{Miolane}]
	\label{app:technical}
	Let $f$ be a convex function in $\mathbb{R}^n$. Let $\overline{\bf x}$ be in $\mathbb{R}^n$ and $r>0$. Assume that $f$ is $\gamma$-strongly convex on the ball $\mathcal{B}(\overline{\bf x},r)$ for some $\gamma>0$. Assume that 
	$$
	f(\overline{\bf x}) \leq \min_{{\bf x}\in\mathcal{B}(\overline{\bf x},r)} f({\bf x}) +\epsilon 
	$$
	for some $\epsilon<\frac{r^2\gamma}{8}$. Then, the following statements hold true:
	\begin{enumerate}
		\item $f$ admits a unique minimizer ${\bf x}^\star$ over $\mathbb{R}^n$. Moreover, ${\bf x}^{\star}\in\mathcal{B}(\overline{\bf x},r)$ and hence 
		$$
		\|{\bf x}^\star-\overline{\bf x}\|\leq \frac{2}{\gamma}\epsilon.
		$$ 
		\item For every ${\bf x}\in\mathbb{R}^n$,
		$$
		\|{\bf x}-\overline{\bf x}\|^2\geq \frac{8}{\gamma}\epsilon \Longrightarrow f({\bf x})\geq \min f+\epsilon.
		$$
	\end{enumerate}
\end{lemma}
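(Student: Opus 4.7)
The plan is to work with the (unique) minimizer $\tilde{\bf x}$ of $f$ restricted to the closed ball $\mathcal{B}(\overline{\bf x},r)$, which exists by compactness and continuity of the convex $f$. From the hypothesis $f(\overline{\bf x}) \leq f(\tilde{\bf x}) + \epsilon$ combined with strong convexity on the ball (applied between $\tilde{\bf x}$ and $\overline{\bf x}$, using the first-order optimality of $\tilde{\bf x}$ on the ball to discard the linear term), I would first derive the quantitative estimate $\frac{\gamma}{2}\|\tilde{\bf x} - \overline{\bf x}\|^2 \leq \epsilon$, hence $\|\tilde{\bf x} - \overline{\bf x}\| \leq \sqrt{2\epsilon/\gamma}$. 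The assumption $\epsilon < r^2\gamma/8$ then forces $\|\tilde{\bf x}-\overline{\bf x}\|< r/2$, so $\tilde{\bf x}$ lies strictly in the interior of the ball.

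Because $\tilde{\bf x}$ is an interior local minimizer of a globally convex function, $\nabla f(\tilde{\bf x})=0$, and therefore $\tilde{\bf x}$ is a global minimizer ${\bf x}^\star$. Global uniqueness follows by contradiction: another minimizer ${\bf x}'$ would force $f$ to be constant on the segment $[{\bf x}^\star,{\bf x}']$, and the subsegment that lies inside $\mathcal{B}(\overline{\bf x},r)$ would violate strong convexity there. This gives Statement $1$.

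For Statement $2$, I would split on whether ${\bf x}\in\mathcal{B}(\overline{\bf x},r)$ or not. If ${\bf x}$ is inside the ball, applying strong convexity directly between ${\bf x}$ and ${\bf x}^\star$, together with the triangle inequality $\|{\bf x}-{\bf x}^\star\|\geq \|{\bf x}-\overline{\bf x}\|-\|{\bf x}^\star-\overline{\bf x}\|\geq \sqrt{8\epsilon/\gamma}-\sqrt{2\epsilon/\gamma}=\sqrt{2\epsilon/\gamma}$, yields $f({\bf x})\geq f({\bf x}^\star)+\frac{\gamma}{2}\cdot\frac{2\epsilon}{\gamma}=f({\bf x}^\star)+\epsilon$. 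If ${\bf x}$ lies outside the ball, I would pick the auxiliary point ${\bf y}={\bf x}^\star+t({\bf x}-{\bf x}^\star)$ with $t=\sqrt{2\epsilon/\gamma}/\|{\bf x}-{\bf x}^\star\|\in(0,1]$, so that $\|{\bf y}-{\bf x}^\star\|=\sqrt{2\epsilon/\gamma}$. The bound $\|{\bf x}^\star-\overline{\bf x}\|<r/2$ ensures $\|{\bf y}-\overline{\bf x}\|\leq 2\sqrt{2\epsilon/\gamma}<r$, so strong convexity applies on the ball and gives $f({\bf y})\geq f({\bf x}^\star)+\epsilon$. Convexity then yields $f({\bf y})\leq (1-t)f({\bf x}^\star)+tf({\bf x})$, i.e.\ $f({\bf x})-f({\bf x}^\star)\geq \epsilon/t=\epsilon\,\|{\bf x}-{\bf x}^\star\|/\sqrt{2\epsilon/\gamma}\geq \epsilon$, since $\|{\bf x}-{\bf x}^\star\|>r/2>\sqrt{2\epsilon/\gamma}$.

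The main obstacle I anticipate is the ``outside-the-ball'' case, where strong convexity is not directly available between ${\bf x}^\star$ and ${\bf x}$; the crux is the choice of the intermediate point ${\bf y}$, whose distance $\sqrt{2\epsilon/\gamma}$ from ${\bf x}^\star$ must simultaneously be small enough (to keep ${\bf y}$ inside $\mathcal{B}(\overline{\bf x},r)$, where strong convexity holds) and large enough (to extract exactly the desired $\epsilon$ gap when pushed out to ${\bf x}$ by one-dimensional convexity). The tight interplay between the radius $r$, the strong-convexity constant $\gamma$ and the tolerance $\epsilon$ via the hypothesis $\epsilon<r^2\gamma/8$ is what makes these constants come out consistently.
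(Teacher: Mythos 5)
The paper does not actually prove this lemma: it is imported verbatim from Miolane--Montanari with a citation, so there is nothing internal to compare against. Your argument is correct and self-contained, and it reconstructs what is essentially the standard proof: minimize over the compact ball, use first-order optimality plus $\gamma$-strong convexity to get $\frac{\gamma}{2}\|\tilde{\bf x}-\overline{\bf x}\|^2\leq\epsilon$, conclude from $\epsilon<\gamma r^2/8$ that the constrained minimizer is interior and hence global, and then prove the growth bound by splitting on whether ${\bf x}$ lies in the ball, using the rescaled point ${\bf y}$ at distance $\sqrt{2\epsilon/\gamma}$ from ${\bf x}^\star$ together with one-dimensional convexity in the exterior case. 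All the constants check out: $\sqrt{8\epsilon/\gamma}-\sqrt{2\epsilon/\gamma}=\sqrt{2\epsilon/\gamma}$ in the interior case, and $\|{\bf x}-{\bf x}^\star\|>r/2>\sqrt{2\epsilon/\gamma}$ in the exterior case, so $\epsilon/t\geq\epsilon$. Two cosmetic points. First, what you actually derive in Statement~1 is $\|{\bf x}^\star-\overline{\bf x}\|^2\leq\frac{2}{\gamma}\epsilon$ (equivalently $\|{\bf x}^\star-\overline{\bf x}\|\leq\sqrt{2\epsilon/\gamma}$), not the unsquared bound $\|{\bf x}^\star-\overline{\bf x}\|\leq\frac{2}{\gamma}\epsilon$ printed in the lemma; the printed version is dimensionally inconsistent with the hypothesis $\epsilon<\gamma r^2/8$ and with the source, and is almost certainly a transcription slip, so your version is the right one. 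Second, you invoke $\nabla f(\tilde{\bf x})=0$, but $f$ is only assumed convex, hence possibly nondifferentiable; replace this by $0\in\partial f(\tilde{\bf x})$, or simply by the fact that an interior local minimizer of a convex function is a global minimizer. Neither point affects the validity of the argument.
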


\bibliographystyle{IEEEtran}
\bibliography{ref}

% Generated by IEEEtran.bst, version: 1.14 (2015/08/26)
\begin{thebibliography}{10}
\providecommand{\url}[1]{#1}
\csname url@samestyle\endcsname
\providecommand{\newblock}{\relax}
\providecommand{\bibinfo}[2]{#2}
\providecommand{\BIBentrySTDinterwordspacing}{\spaceskip=0pt\relax}
\providecommand{\BIBentryALTinterwordstretchfactor}{4}
\providecommand{\BIBentryALTinterwordspacing}{\spaceskip=\fontdimen2\font plus
\BIBentryALTinterwordstretchfactor\fontdimen3\font minus
  \fontdimen4\font\relax}
\providecommand{\BIBforeignlanguage}[2]{{%
\expandafter\ifx\csname l@#1\endcsname\relax
\typeout{** WARNING: IEEEtran.bst: No hyphenation pattern has been}%
\typeout{** loaded for the language `#1'. Using the pattern for}%
\typeout{** the default language instead.}%
\else
\language=\csname l@#1\endcsname
\fi
#2}}
\providecommand{\BIBdecl}{\relax}
\BIBdecl

\bibitem{intro1}
M.~Alrabeiah, Y.~Zhang, and A.~Alkhateeb, ``Neural networks based beam
  codebooks: Learning mm{W}ave massive {MIMO} beams that adapt to deployment
  and hardware,'' \emph{IEEE Transactions on Communications}, vol.~70, no.~6,
  pp. 3818--3833, Jun. 2022.

\bibitem{intro2}
A.~Mishra, Y.~Mao, O.~Dizdar, and B.~Clerckx, ``Rate-splitting multiple access
  for downlink multiuser {MIMO}: Precoder optimization and phy-layer design,''
  \emph{IEEE Transactions on Communications}, vol.~70, no.~2, pp. 874--890,
  Feb. 2022.

\bibitem{intro3}
M.~H. Rahman, M.~Shahjalal, and Y.~M. Jang, ``Ensemble classifier based
  modulation recognition for beyond 5{G} massive {MIMO} (m{MIMO})
  communication,'' in \emph{2021 International Conference on Artificial
  Intelligence in Information and Communication (ICAIIC)}, 2021, pp. 483--487.

\bibitem{intro_mm}
S.~H.~R. Naqvi, P.~H. Ho, and L.~Peng, ``5{G} {NR} mm{W}ave indoor coverage
  with massive antenna system,'' \emph{Journal of Communications and Networks},
  vol.~23, no.~1, pp. 1--11, Feb. 2021.

\bibitem{l4}
T.~Gong, N.~Shlezinger, S.~S. Ioushua, M.~Namer, Z.~Yang, and Y.~C. Eldar,
  ``{RF} chain reduction for {MIMO} systems: A hardware prototype,'' \emph{IEEE
  Systems Journal}, vol.~14, no.~4, pp. 5296--5307, Dec 2020.

\bibitem{r1}
V.~Venkateswaran and R.~Krishnan, ``Hybrid analog and digital precoding: From
  practical {RF} system models to information theoretic bounds,'' in \emph{2016
  IEEE Globecom Workshops (GC Wkshps)}, 2016, pp. 1--6.

\bibitem{intro4}
S.~Jing and C.~Xiao, ``Linear {MIMO} precoders with finite alphabet inputs via
  stochastic optimization and deep neural networks ({DNN}s),'' \emph{IEEE
  Transactions on Signal Processing}, vol.~69, pp. 4269--4281, 2021.

\bibitem{refpapr}
V.~Jamali, A.~M. Tulino, G.~Fischer, R.~Muller, and R.~Schober, ``Scalable and
  energy-efficient millimeter massive {MIMO} architectures: Reflect-array and
  transmit-array antennas,'' in \emph{ICC 2019 - 2019 IEEE International
  Conference on Communications (ICC)}, 2019, pp. 1--7.

\bibitem{papr}
\BIBentryALTinterwordspacing
A.~Bereyhi, V.~Jamali, R.~R. M{\"{u}}ller, G.~Fischer, R.~Schober, and A.~M.
  Tulino, ``{PAPR}-limited precoding in massive {MIMO} systems with reflect-
  and transmit-array antennas,'' \emph{CoRR}, vol. abs/1912.00485, 2019.
  [Online]. Available: \url{http://arxiv.org/abs/1912.00485}
\BIBentrySTDinterwordspacing

\bibitem{intro_rzf}
C.~B.~A. Wael, Suyoto, N.~Armi, A.~S. Satyawan, B.~E. Sukoco, and A.~Subekti,
  ``Performance of regularized zero forcing ({RZF}) precoding for multiuser
  massive {MIMO}-{GFDM} system over mm{W}ave channel,'' in \emph{2021
  International Conference on Radar, Antenna, Microwave, Electronics, and
  Telecommunications (ICRAMET)}, 2021, pp. 256--259.

\bibitem{glse}
\BIBentryALTinterwordspacing
A.~Bereyhi, M.~A. Sedaghat, R.~R. M{\"{u}}ller, and G.~Fischer, ``{GLSE}
  precoders for massive {MIMO} systems: Analysis and applications,''
  \emph{CoRR}, vol. abs/1808.01880, 2018. [Online]. Available:
  \url{http://arxiv.org/abs/1808.01880}
\BIBentrySTDinterwordspacing

\bibitem{8100647}
M.~A. Sedaghat, A.~Bereyhi, and R.~R. Müller, ``Least square error precoders
  for massive {MIMO} with signal constraints: Fundamental limits,'' \emph{IEEE
  Transactions on Wireless Communications}, vol.~17, no.~1, pp. 667--679, Jan.
  2018.

\bibitem{cgmt}
\BIBentryALTinterwordspacing
C.~Thrampoulidis, S.~Oymak, and B.~Hassibi, ``A tight version of the {G}aussian
  min-max theorem in the presence of convexity,'' \emph{CoRR}, vol.
  abs/1408.4837, 2014. [Online]. Available:
  \url{http://arxiv.org/abs/1408.4837}
\BIBentrySTDinterwordspacing

\bibitem{stojnic}
\BIBentryALTinterwordspacing
M.~Stojnic, ``A framework to characterize performance of {LASSO} algorithms,''
  \emph{CoRR}, vol. abs/1303.7291, 2013. [Online]. Available:
  \url{http://arxiv.org/abs/1303.7291}
\BIBentrySTDinterwordspacing

\bibitem{mesti}
C.~Thrampoulidis, E.~Abbasi, and B.~Hassibi, ``Precise error analysis of
  regularized $m$-estimators in high dimensions,'' \emph{IEEE Transactions on
  Information Theory}, vol.~64, no.~8, pp. 5592--5628, Aug. 2018.

\bibitem{linearC}
C.~Thrampoulidis, E.~Abbasi, W.~Xu, and B.~Hassibi, ``{BER} analysis of the box
  relaxation for {BPSK} signal recovery,'' in \emph{2016 IEEE International
  Conference on Acoustics, Speech and Signal Processing (ICASSP)}, 2016, pp.
  3776--3780.

\bibitem{linearJ}
C.~Thrampoulidis, W.~Xu, and B.~Hassibi, ``Symbol error rate performance of
  box-relaxation decoders in massive {MIMO},'' \emph{IEEE Transactions on
  Signal Processing}, vol.~66, no.~13, p. 3377–3392, Jul. 2018.

\bibitem{nonlinear}
C.~Thrampoulidis and W.~Xu, ``The performance of box-relaxation decoding in
  massive {MIMO} with low-resolution {ADC}s,'' in \emph{2018 IEEE Statistical
  Signal Processing Workshop (SSP)}, 2018, pp. 821--825.

\bibitem{wd}
S.~Kono, T.~Ueda, E.~Arriaga-Varela, and I.~Nishikawa, ``Wasserstein
  distance-based domain adaptation and its application to road segmentation,''
  in \emph{2021 International Joint Conference on Neural Networks (IJCNN)},
  2021, pp. 1--7.

\bibitem{van-der-vaart}
A.~W. Vaart, \emph{Asymptotic Statistics}.\hskip 1em plus 0.5em minus
  0.4em\relax Cambridge: Cambridge University Press, 1998.

\bibitem{wagner}
S.~Wagner, R.~Couillet, M.~Debbah, and D.~T.~M. Slock, ``Large system analysis
  of linear precoding in correlated {MISO} broadcast channels under limited
  feedback,'' \emph{IEEE Transactions on Information Theory}, vol.~58, no.~7,
  pp. 4509--4537, Jul. 2012.

\bibitem{abla_rmt}
M.~Suliman, T.~Ballal, A.~Kammoun, and T.~Y. Al-Naffouri, ``Constrained
  perturbation regularization approach for signal estimation using random
  matrix theory,'' \emph{IEEE Signal Processing Letters}, vol.~23, no.~12, pp.
  1727--1731, Dec. 2016.

\bibitem{wainwright_2019}
M.~J. Wainwright, \emph{High-Dimensional Statistics: A Non-Asymptotic
  Viewpoint}, ser. Cambridge Series in Statistical and Probabilistic
  Mathematics.\hskip 1em plus 0.5em minus 0.4em\relax Cambridge University
  Press, 2019.

\bibitem{bai2009spectral}
Z.~Bai and J.~W. Silverstein, \emph{Spectral Analysis of Large Dimensional
  Random Matrices}, ser. Springer Series in Statistics.\hskip 1em plus 0.5em
  minus 0.4em\relax Springer, 2009.

\bibitem{Miolane}
\BIBentryALTinterwordspacing
L.~Miolane and A.~Montanari, ``The distribution of the {LASSO}: Uniform control
  over sparse balls and adaptive parameter tuning,'' \emph{The Annals of
  Statistics}, vol.~49, no.~4, pp. 2313--2335, 2021. [Online]. Available:
  \url{https://doi.org/10.1214/20-AOS2038}
\BIBentrySTDinterwordspacing

\end{thebibliography}

\vfill

\end{document}